\pgfplotsset{compat=newest}
\tikzset{
	declare function={
		normcdf(\x,\m,\s)=1/(1 + exp(-0.07056*((\x-\m)/\s)^3 - 1.5976*(\x-\m)/\s));
	}
}
\newtheorem{lemma}{Lemma}
\newtheorem{proposition}{Proposition}
\theoremstyle{definition}
\newtheorem{remark}{Remark}
\newtheorem{assumption}{Assumption}
\newcommand{\ul}{\underline}
\newcommand{\ol}{\overline}
\newcommand{\bdis}{\begin{displaymath}}
\newcommand{\edis}{\end{displaymath}}
\newcommand{\beq}{\begin{equation}}
\newcommand{\eeq}{\end{equation}}
\newcommand{\bea}{\begin{eqnarray*}}
\newcommand{\eea}{\end{eqnarray*}}
\newcommand{\bean}{\begin{eqnarray}}
\newcommand{\eean}{\end{eqnarray}}
\newcommand{\R}{\mathbb{R}}
\newcommand{\E}{\mathbb{E}}
\DeclareMathOperator\supp{supp}
\newcommand{\1}{\mbox{\bf 1}}
\def\Pr{\mathop{\rm Pr}\nolimits}
\def \a {\alpha }
\def \b {\beta }
\def \c {\gamma }
\def \wh {\widehat }
\begin{document}
\begin{titlepage}
\title{%
The Economics of Partisan Gerrymandering}
\author[\uppercase{Kolotilin and Wolitzky}]{\larger \textsc{Anton Kolotilin
and Alexander Wolitzky}}
\date{\today. \emph{First version:} 17th September 2020.}
\thanks{ \\
\textit{Kolotilin}: School of Economics, UNSW Business School.
\textit{Wolitzky}: Department of Economics, MIT.\\
\ \\
We thank Nikhil Agarwal, Garance Genicot, Ben Golub, Richard Holden, Gary King, Hongyi Li, Nolan McCarty, Stephen Morris, Ben Olken, and Ken Shotts, as well as seminar and conference participants at ASSA, Harvard, MIT, NBER, Peking, Penn State, Rochester, Stanford, and Warwick for helpful comments and suggestions. We thank Eitan Sapiro-Gheiler and Nancy Wang for excellent research assistance. Anton Kolotilin gratefully acknowledges support from the
Australian Research Council Discovery Early Career Research Award
DE160100964 and from MIT Sloan's Program on Innovation in Markets and
Organizations. Alexander Wolitzky gratefully acknowledges support from NSF CAREER Award 1555071 and Sloan Foundation Fellowship 2017-9633.
}

\renewcommand{\baselinestretch}{1}
\maketitle

\begin{abstract}
We study the problem of a partisan gerrymanderer who assigns voters to equipopulous districts so as to maximize his party's expected seat share. The designer faces both aggregate uncertainty (how many votes his party will receive) and idiosyncratic, voter-level uncertainty (which voters will vote for his party). We argue that \emph{pack-and-pair districting}, where weaker districts are ``packed'' with a single type of voter, while stronger districts contain two voter types, is typically optimal for the gerrymanderer. The optimal form of pack-and-pair districting depends on the relative amounts of aggregate and idiosyncratic uncertainty. When idiosyncratic uncertainty dominates, it is optimal to pack opposing voters and pair more favorable voters; this plan resembles traditional ``packing-and-cracking.'' When aggregate uncertainty dominates, it is optimal to pack moderate voters and pair extreme voters; this ``matching slices'' plan has received some attention in the literature. Estimating the model using precinct-level returns from recent US House elections indicates that, in practice, idiosyncratic uncertainty dominates and packing opponents is optimal; moreover, traditional pack-and-crack districting is approximately optimal. %
We discuss implications for redistricting reform and political polarization. Methodologically, we exploit a formal connection between gerrymandering---partitioning voters into districts---and information design---partitioning states of the world into signals.
\ \\

\noindent\emph{JEL\ Classification:}\ C78, D72, D82 %

\noindent\emph{Keywords:} Gerrymandering, pack-and-crack, matching slices, 
pack-and-pair, information design

\ 
\end{abstract}

\thispagestyle{empty}
\let\MakeUppercase\relax %
\end{titlepage}

\newpage

\section{Introduction} \label{s:intro}
\setcounter{page}{1}

Legislative district boundaries are drawn by political partisans under many electoral systems \citep{Bickerstaff}. In the United States, the importance of districting has accelerated with the rise of computer-assisted districting \citep{Newkirk}, together with intense partisan efforts to gain and exploit control of the districting process. These trends culminated in ``The Great Gerrymander of 2012''  \citep{McGhee}, where the Republican party's Redistricting Majority Project (REDMAP), having previously targeted state-level elections that would give Republicans control of redistricting, aggressively redistricted several states, including Michigan, Ohio, Pennsylvania, and Wisconsin. The resulting districting plans are widely viewed as contributing to the outcome of the 2012 general election, where Republican congressional candidates won a 33-seat majority in the House of Representatives with 49.4\% of the two-party vote \citep{McGann}. In light of these developments---along with the Supreme Court ruling in \emph{Rucho v. Common Cause} (2019) that partisan gerrymanders are not judiciable in federal court, and the continued prominence of gerrymandering in the 2020 US redistricting cycle \citep{Rakich}---partisan gerrymandering looks likely to remain an important feature of American politics for some time.

This paper studies the problem of a partisan gerrymanderer (the ``designer'') who assigns voters to a large number of equipopulous districts so as to maximize his party's expected seat share.\footnote{Of course, studying this problem does not endorse gerrymandering, any more than studying monopolistic behavior endorses monopoly.} This problem approximates the one facing many partisan gerrymanderers in the United States. In particular, the constraint that districts must be equipopulous is crucial and is strictly enforced by law.\footnote{In \emph{Karcher v. Daggett} (1983), the Supreme Court rejected a districting plan in New Jersey with less than a 1\% deviation from population equality, finding that “there are no \emph{de minimus} population variations, which could practically be avoided, but which nonetheless meet the standard of Article I, Section 2 [of the U.S. Constitution] without justification.”} In practice, gerrymanderers also face other significant constraints, such as the federal requirements that districts are contiguous and do not discriminate on the basis of race, and various state-level restrictions, such as ``compactness'' requirements, requirements to respect political sub-divisions such as county lines, requirements to represent racial or ethnic groups or other communities of interest, and so on. While these complex additional constraints are important in some cases, we believe that often they are not as binding as they might seem, and also that they are more productively considered on a case-by-case basis rather than as part of a general theoretical analysis.\footnote{See \citet{FH} for more discussion of these constraints. For example, contiguity is not as severe a constraint as it might seem, because contiguous districts can have extremely irregular shapes.} We therefore follow much of the literature (discussed below) in focusing on the simpler problem with only the equipopulation constraint.

When the designer has perfect information, it is well-known that the solution to this problem is \emph{pack-and-crack}: if the designer's party is supported by a minority of voters of size $m <1/2$, he ``packs'' $1-2m$ opposing voters in districts where he receives zero votes, and ``cracks'' the remaining $2m$ voters in districts which he wins with 50\% of the vote.\footnote{If the designer has majority support, he can win all the districts.} We instead consider the more general and realistic case where the designer must allocate a variety of types of voters (or, more realistically, groups of voters such as census blocks or precincts) under uncertainty. The goal of this paper is to characterize optimal partisan gerrymandering in this setting, to compare optimal gerrymandering with simple and realistic forms of packing-and-cracking, and to draw some implications for broader legal and political economy issues.

In outline, our model and results are as follows. We assume that the designer faces both aggregate uncertainty (how many votes his party will receive) and idiosyncratic, voter-level uncertainty (which voters will vote for his party). Aggregate uncertainty is parameterized by a one-dimensional aggregate shock, while voters are parameterized by a one-dimensional type that determines a voter's probability of voting for the designer's party for each value of the aggregate shock. We focus on the case where the aggregate shock is unimodal and where moderate voters are ``swingier'' than more extreme voters, in that their vote probabilities swing more with the aggregate shock. In this case, we argue that a class of districting plans that we call \emph{pack-and-pair}---which generalize pack-and-crack---are typically optimal for the designer. Under pack-and-pair districting, the designer creates weaker districts that are packed with a single type of voter (which are analogous to the packed districts under pack-and-crack), and stronger districts that contain two voter types (which are analogous to the cracked districts under pack-and-crack).

We further show that the optimal form of pack-and-pair districting depends on the relative amounts of aggregate and idiosyncratic uncertainty. When idiosyncratic uncertainty dominates, it is optimal to pack opposing voters and pair more favorable voters. This \emph{pack-opponents-and-pair} plan (henceforth, POP) resembles traditional packing-and-cracking. POP also resembles the ``$p$-segregation'' plan introduced by \citet{GP}, where opposing voters are segregated and more favorable voters are all pooled together, rather than being paired as they are under POP. When instead aggregate uncertainty dominates, it is optimal to pack moderate voters and pair extreme voters. This \emph{pack-moderates-and-pair} plan (henceforth, PMP) was proposed under the name  ``matching slices'' by \citet{FH} and was applied to redistricting law by \citet{CH}. The pack-and-pair class thus nests the main districting plans proposed in the literature. Our primary theoretical contribution is identifying this class and showing that the optimal plan within this class is determined by the relative amounts of aggregate and idiosyncratic uncertainty.

A rough intuition for these results is that when idiosyncratic uncertainty dominates, the probability that the designer wins a district is approximately determined by the mean voter type in the district, as in probabilistic voting models with partisan taste shocks (e.g., \citealt{Hinich}, \citealt{LindbeckWeibull}). With a unimodal aggregate shock, the distribution of district means is then optimized by segregating opposing voters and pooling more favorable voters, as in $p$-segregation. When instead aggregate uncertainty dominates, the probability that the designer wins a district is approximately determined by the median voter type in the district, as in probabilistic voting models with an uncertain median bliss point (e.g., \citealt{Wittman}, \citealt{Calvert}). The distribution of district medians is then optimized by pairing above-population-median and below-population-median voter types, as in matching slices. However, the optimal plans we identify (POP and PMP) are somewhat more intricate than $p$-segregation and the simple form of matching slices emphasized by \citet{FH}: POP pairs favorable voters, rather than pooling them as in $p$-segregation; and PMP segregates an interval of intermediate voter types, rather than pairing all types as in the simplest form of matching slices.

As we discuss in Section \ref{s:discussion}, whether optimal districting takes the form of POP or PMP has significant implications for several political and legal issues surrounding redistricting, including redistricting reform and intra- and inter-district political polarization (see also \citealt{CH}). It is therefore important to understand whether idiosyncratic or aggregate uncertainty is larger in practice. We answer this question using precinct-level returns from the 2016, 2018, and 2020 US House elections. The data clearly show that idiosyncratic uncertainty is much larger than aggregate uncertainty. Intuitively, this finding results from the simple observation that, in practice, most precinct vote splits are much closer to 50-50 (the vote split under high idiosyncratic uncertainty) than 100-0 or 0-100 (the vote splits under high aggregate uncertainty).\footnote{This observation also implies that models with only two types of voters or precincts (e.g., \citealt{OG}) cannot closely approximate the problem facing actual gerrymanderers, who must decide how to allocate many different types of precincts.} We therefore expect that, in practice, optimal districting takes the form of POP. We also note, however, that the optimal POP plan is close to $p$-segregation under our estimated parameters. Thus, simple $p$-segregation plans are likely approximately optimal in practice. This finding helps explain why actual gerrymandering usually resembles $p$-segregation---or an even simpler form of pack-and-crack, where unfavorable voters are pooled rather than segregated---instead of a more complicated plan like POP.

Methodologically, we establish a formal connection between gerrymandering---partitioning voters into districts---and information design---partitioning states of the world into signals. The partisan gerrymandering problem we study is mathematically equivalent to a general Bayesian persuasion problem with a one-dimensional state, a one-dimensional action for the receiver, and state-independent sender preferences. Most of our results are novel in the context of this persuasion problem. This paper thus directly contributes to information design as well as gerrymandering; more importantly, we establish a strong connection between these two topics.\footnote{Contemporaneous papers by \citet{Lagarde} and \citet{GPS} also emphasize connections between gerrymandering and information design, albeit in less general models. \citeauthor{Lagarde} assume two voter types, as in \citet{OG}; \citeauthor{GPS} assume no aggregate uncertainty. The closest paper in the persuasion literature is our companion paper, \citet{KCW}, which we discuss later on.} 

\subsection{Related Literature} The most related prior papers on optimal partisan gerrymandering are \citet{OG}, \citet{FH}, and \citet{GP}. \citeauthor{OG}'s model is equivalent to the special case of our model with two voter types. \citeauthor{GP} consider competition between two designers who each control districting in some area and aim to win a majority of seats.\footnote{\citet{FH20} study designer competition in the model of their \citeyear*{FH} paper.} A simplified version of their model with a single designer is equivalent to the special case of our model where vote swings are linear in voter types; we discuss this special case in Section \ref{s:linear}. \citeauthor{FH} consider essentially the same model as we do (and in particular allow non-linear swings), but their main results concern the special case where aggregate uncertainty is much larger than idiosyncratic uncertainty. In contrast, we do not restrict the relative amounts of aggregate and idiosyncratic uncertainty, and we show empirically that the practically relevant case is that where idiosyncratic uncertainty dominates (i.e., the opposite of the case emphasized by \citeauthor{FH}).

The broader literature on gerrymandering and redistricting addresses a wide range of issues, including geographic constraints on gerrymandering \citep{Sherstyuk,Shotts01,PT}, gerrymandering with heterogeneous voter turnout \citep{Bouton}, socially optimal districting \citep{Gilligan,CK,Bracco}, measuring district compactness \citep{CM,Fryer,Ely}, the interaction of redistricting and policy choices \citep{Shotts02,BesleyPreston}, measuring gerrymandering \citep{GrofKing,MG,SM,Duchin,GPS}, and assessing the consequences of redistricting \citep[among many:][]{Gelman,MPR,HM,Jeong}. As the partisan gerrymandering problem interacts with many of these issues, our analysis may facilitate future research in these areas.

\subsection{Outline} The paper is organized as follows: Section \ref{s:model} presents the model. Section \ref{s:benchmark} analyzes some benchmark cases. Section \ref{s:nonlinear} contains our main theoretical and numerical results. Section \ref{s:estimation} contains our empirical results. Section \ref{s:discussion} discusses policy implications of our results. Section \ref{s:conclusion} concludes. All proofs are deferred to the appendix.

\section{Model}\label{s:model}

We consider a standard electoral model with one-dimensional voter types (parameterizing a voter's probability of voting for the designer's party) and one-dimensional aggregate uncertainty (parameterizing the designer's aggregate vote share).

\textbf{Voters and Vote Shares.} There is a continuum of voters. Each voter has a type $s\in [\ul s,\ol s]$, which is observed by the designer.\footnote{In our empirical implementation, $s$ will correspond to the precinct the voter lives in.} The population distribution of voter types is denoted by $F$. The aggregate shock is denoted by $r\in \R$; its distribution is denoted by $G$. We assume that $F$ and $G$ are sufficiently smooth and that the corresponding densities $f$ and $g$ are strictly positive.\footnote{It suffices that distributions $F$, $G$, and $Q$ (defined below) are four-times differentiable. We also consider discrete distributions in some benchmark cases.}

The share of type-$s$ voters who vote for the designer when the aggregate shock takes value $r$ is deterministic and is denoted by $v(s,r)\in[0,1]$.\footnote{In our empirical implementation, $v(s,r)$ will correspond to the designer's vote share in precinct $s$ given shock $r$.} The function $v(s,r)$ plays a key role in our analysis. We assume that $v(s,r)$ is strictly increasing in $s$ and strictly decreasing in $r$. Thus, higher voter types are stronger supporters of the designer (i.e., they vote for him with higher probability for every $r$), and higher aggregate shocks are worse for the designer (i.e., they reduce the probability that each voter type votes for him). The model thus lets different voter types ``swing'' by different amounts in response to an aggregate shock, but it does assume that all types swing in the same direction. We also impose the technical assumptions that $v(s,r)$ is four-times differentiable and satisfies $\lim_{r\to \infty}v(s,r)=0$ and $\lim_{r\to -\infty}v(s,r)=1$ for all $s$.

An interpretation of the vote share function $v(s,r)$ is that each voter is hit by an idiosyncratic ``taste shock'' $t \in \R$ and votes for the designer if and only if $$s-r-t \geq 0.$$ With this interpretation, when the taste shock distribution is $Q$, we have \[v(s,r) = Q(s-r)\; \; \text{for all } (s,r).\] Mathematically, this ``additive taste shock'' case arises when the function $v(s,r)$ is translation-invariant: i.e., depends only on the difference $s-r$. In this case, the model is parameterized by three distributions: $F$, $G$, and $Q$. However, scaling $s$, $r$, and $t$ by the same constant leaves the model unchanged, so we can normalize the variance of one of these three variables to $1$. We will thus assume, without loss, that the variance of $t$ is $1$.\footnote{Outside of the benchmark case considered in Section \ref{s:noindiv}, where $Q$ is degenerate.}

The designer thus faces two kinds of uncertainty: aggregate uncertainty (captured by $r$) and idiosyncratic, voter-level uncertainty (captured by $t$, or more generally by the extent to which $v(s,r)$ lies away from the extremes of $0$ and $1$). Many of our results will involve comparing the ``amount'' of each kind of uncertainty.

\textbf{Districting Plans.} The designer allocates voters among a continuum of equipopulous districts based on their types $s$, and thus determines the distribution $P$ of $s$ in each district.\footnote{Since districting plans in the US are drawn at the state level, our continuum model implicitly assumes that each state contains a large number of districts. Obviously, this is a better approximation for state legislative districts and for congressional districts in large states than it is for congressional districts in small states. Introducing integer constraints on the number of districts, while interesting and realistic, would substantially complicate the analysis and would risk obscuring our main insights.} A district is characterized by the distribution $P$ of voter types $s$ it contains. Thus, a \emph{districting plan}---which specifies the measure of districts with each voter-type distribution $P$---is a distribution $\mathcal H$ over distributions $P$ of $s$, such that the population distribution of $s$ is given by $F$: that is, $\mathcal H \in \Delta \Delta \mathbb [\ul s,\ol s]$ and
\begin{equation*}\label{mps}
\int P(s)d\mathcal{H}(P)=F(s) \; \; \text{for all } s.
\end{equation*}
For example, under \emph{uniform districting}, where all districts are the same, $\mathcal{H}$ assigns probability $1$ to $P=F$. In the opposite extreme case of $\emph{segregation}$, where each district consists entirely of one type of voter, every distribution $P$ in the support of $\mathcal{H}$ takes the form $P=\delta_{s}$ for some $s\in [\ul s,\ol s]$, where $\delta_{s}$ denotes the degenerate distribution on voter type $s$.

\textbf{Designer's Problem.} The designer wins a district iff he receives a majority of the district vote. Thus, the designer wins a district with voter type distribution $P$ (henceforth, ``district $P$'') iff $r$ satisfies $\int v(s,r) dP(s)\geq 1/2$. Since $v(s,r)$ is decreasing in $r$, the designer wins district $P$ iff
\[
r \leq r^*(P):=\left\{r:\int v(s,r)dP(s)= \frac12\right\}.
\]

We say that a district $P'$ is \emph{weaker} than another district $P$ if $r^*(P')<r^*(P)$. Note that, whenever the designer wins a district $P$, he also wins all weaker districts $P'$. Our model thus reflects what \citeauthor{GrofKing} (\citeyear*{GrofKing}, p. 12) call ``a key empirical generalization that applies to all elections in the U.S. and most other democracies: the statewide or nationwide swing in elections is highly variable and difficult to predict, but the approximate rank order of districts is highly regular and stable.''%

We assume that the designer maximizes his party's expected seat share.\footnote{See Section \ref{s:conclusion} and \citet{KW} for discussion of more general designer objectives.} Thus, the designer's problem is 
\begin{gather*} \label{persuasion}
	\max_{\mathcal H \in \Delta \Delta \mathbb [\ul s,\ol s]} \int G(r^{*}(P)) d\mathcal{H}(P) \\
	\text{s.t.} \; \int Pd\mathcal{H}(P)=F. \nonumber
\end{gather*}

This problem nests the partisan gerrymandering problems of \citet{OG}, \citet{FH}, and (with a single designer) \citet{GP}.\footnote{\citet{GP} consider a majoritarian objective with district-level uncertainty in addition to aggregate uncertainty. However, after conditioning on the pivotal value of the aggregate shock, district-level uncertainty in \citeauthor{GP} plays the same role as aggregate uncertainty in our model.} It is also equivalent to a Bayesian persuasion problem, where the designer splits a prior distribution $F$ into posterior distributions $P$, and obtains utility $G(r^{*}(P))$ from inducing posterior $P$.\footnote{Specifically, the designer's problem is equivalent to the \emph{state-independent sender} case of the persuasion problem studied in \citet{KCW}, which specializes the general Bayesian persuasion problem of \citet{KG} by assuming that the state and the receiver's action are one-dimensional, the receiver's utility is supermodular and concave in his action, and the sender's utility is independent of the state and increasing in the receiver's action. In the gerrymandering context, state-independent sender preferences reflect the fact that the designer cares only about how many districts he wins and not directly about the composition of these districts.}

\section{Benchmark Cases} \label{s:benchmark}

We first consider four benchmark cases:
\begin{enumerate}
\item There is no uncertainty.
\item There is idiosyncratic uncertainty but no aggregate uncertainty.
\item There is aggregate uncertainty but no idiosyncratic uncertainty.
\item Both kinds of uncertainty are present, but swings are linear in voter types.
\end{enumerate}

These cases illustrate the key forces in the model and set up our main analysis. The benchmark cases with only one kind of uncertainty are much more tractable than the general case with both kinds, but they give a good indication of the form of optimal districting plans when both kinds of uncertainty are present but one kind is much ``larger'' than the other. We will see that this case is relevant in practice, where idiosyncratic uncertainty is much larger than aggregate uncertainty. Similarly, the linear swing case is very tractable and is a good guide to the more realistic case where swings deviate from linearity systematically but by a relatively small amount.

\subsection{Perfect Information: Pack-and-Crack}\label{s:classic}

With perfect information, optimal gerrymandering takes a simple and well-known form.

\begin{proposition}\label{packandcrack}
Assume there is no uncertainty: there exists $r^0$ such that $r=r^0$ with certainty, and $v(s,r^0)=\1\{s\geq r^0\}$ for all $s$. Denote the fraction of the designer's ``supporters'' by $m=1-F(r^0)$.
\begin{enumerate}
\item If $m\geq 1/2$, a districting plan is optimal iff it creates measure $1$ of districts where $\Pr_{P}(s\geq r^0)\geq 1/2$. Under such a plan, the designer wins all districts.

\item If $m< 1/2$, a districting plan is optimal iff it creates measure $2m$ of ``cracked'' districts where $\Pr_{P}(s\geq r^0)=\Pr_{P}(s<r^0)=1/2$ and measure $1-2m$ of ``packed'' districts where $\Pr_{P}(s<r^0)=1$. Under such a plan, the designer wins the cracked districts. 
\end{enumerate}
\end{proposition}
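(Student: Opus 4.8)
The plan is to prove both parts by exploiting the special structure that under perfect information, the designer wins a district $P$ exactly when the majority of its voters are supporters, i.e. when $\Pr_P(s \ge r^0) \ge 1/2$. The key first step is to rewrite the objective in this degenerate setting. Since $r = r^0$ with certainty, the distribution $G$ is the point mass at $r^0$, so $G(r^*(P)) = \1\{r^* (P) \ge r^0\}$. Because $v(s, r^0) = \1\{s \ge r^0\}$, the winning condition $\int v(s, r^0)\,dP(s) \ge 1/2$ reduces to $\Pr_P(s \ge r^0) \ge 1/2$. Hence the designer's payoff from a plan $\mathcal H$ is simply the measure of districts satisfying this majority condition, and the problem becomes: choose $\mathcal H$ subject to the feasibility (mean-preserving) constraint $\int P\, d\mathcal H(P) = F$ so as to maximize the mass of winning districts. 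The whole problem thus collapses to a combinatorial allocation of the mass $m = 1 - F(r^0)$ of supporters and mass $1 - m$ of opponents across districts.

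For part (1), when $m \ge 1/2$, I would give a direct constructive argument together with a trivial upper bound. The payoff is at most $1$ since there is a unit mass of districts, so any plan that wins measure $1$ of districts is optimal. The feasibility constraint is just that supporters and opponents are conserved; since supporters are in the majority overall, one can distribute them so that every district has a supporter-majority (for instance uniform districting $P = F$ works, since then $\Pr_P(s \ge r^0) = m \ge 1/2$ in every district), which establishes that such plans exist and are optimal. The ``iff'' direction is immediate: a plan wins measure $1$ iff almost every district satisfies $\Pr_P(s \ge r^0) \ge 1/2$, which is the stated characterization.

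For part (2), when $m < 1/2$, the heart of the argument is an upper bound on the winnable mass. A winning district requires supporter-mass at least $1/2$ of its own population; since districts are equipopulous (each is a probability distribution $P$ and they carry equal weight under $\mathcal H$), the total supporter-mass $m$ caps the measure of winning districts at $m / (1/2) = 2m$. Formally, integrating the inequality $\Pr_P(s \ge r^0)\,\1\{\text{win}\} \ge \tfrac12 \1\{\text{win}\}$ against $\mathcal H$ and using $\int \Pr_P(s \ge r^0)\, d\mathcal H(P) = m$ yields $m \ge \tfrac12 \,\mathcal H(\{\text{winning } P\})$, so at most measure $2m$ of districts can be won. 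This bound is achieved exactly when every winning district has $\Pr_P(s \ge r^0) = 1/2$ (no supporters are ``wasted'' above the bare majority) and all remaining supporter-mass is zero, i.e. the non-winning districts are fully packed with opponents, $\Pr_P(s < r^0) = 1$. Reconciling these with feasibility pins down the measures as $2m$ cracked and $1-2m$ packed, and a construction (split the population into $2m$ districts each half-supporters/half-opponents and $1-2m$ all-opponent districts) shows the bound is attainable, giving the ``iff.''

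The main obstacle, and the step deserving the most care, is the achievability/equality analysis in part (2): showing not merely that $2m$ is an upper bound but that any optimal plan must take the precise pack-and-crack form. This requires arguing that inequality in the bound is tight only under the stated conditions, namely that optimality forces winning districts to be \emph{exactly} balanced and losing districts to be \emph{completely} packed, and that these two requirements are jointly consistent with the feasibility constraint $\int P\, d\mathcal H(P) = F$ only for the claimed measures. I would handle this by tracking the two slack terms in the integrated inequality (excess supporter-mass in winning districts, and supporter-mass leaked into losing districts) and showing each must vanish almost surely under $\mathcal H$, which yields the characterization.
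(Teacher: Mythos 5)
Your proposal is correct and follows essentially the same route as the paper's proof: both reduce the problem to the district-level supporter share $x=\Pr_P(s\geq r^0)$, bound the winnable measure by $2m$ using the feasibility condition $\int x\,dH(x)=m$, and characterize optimality by showing equality forces $x\in\{0,1/2\}$ almost surely. Your two ``slack terms'' (excess supporters in winning districts, supporters leaked into losing districts) are precisely the decomposition of the paper's single pointwise inequality $\1\{x\geq 1/2\}\leq 2x$, which holds with equality iff $x\in\{0,1/2\}$.
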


Case (1) says that a designer with majority support wins all the districts (e.g., with uniform districting). Case (2) says that a designer with minority support $m<1/2$ wins $2m$ districts with 50\% of the vote, and gets zero votes in the remaining $1-2m$ districts. We call any optimal plan in case (2) \emph{pack-and-crack}.

When $m<1/2$ and voter types are continuous, there are many pack-and-crack plans. For example, some types of supporters can be assigned to only a subset of cracked districts, and some types of opponents can be assigned only to packed districts. This seemingly pedantic point will become important once we introduce uncertainty, because optimal plans under a small amount of uncertainty will approximate some but not all pack-and-crack plans.

\begin{figure}[t]
\centering

\begin{tabular}{@{}l@{}}
\hspace{0.05cm}
\begin{tikzpicture}[scale = .65]
	
	\pgfmathsetmacro{\l}{10}
	\pgfmathsetmacro{\x}{3.5}
	
	\fill	[fill = blue, opacity = 0.3]	(0, 0)	--	(\l-2*\x, 0) -- (\l-2*\x, 0.3) -- (0, 0.3);
	\fill	[fill = red, opacity = 0.3]	(\l-2*\x, 0)	--	(\l, 0) -- (\l, 0.3) -- (\l-2*\x, 0.3);
	
	\draw	[ultra thick, smooth, blue]	(0, 0) -- (\l-\x, 0);
	\draw	[ultra thick, smooth, red]	(\l-\x, 0) -- (\l, 0);
	
	\draw	[thick, smooth, black]	(0, -0.15) -- (0, 0.15);
	\draw	[thick, smooth, black]	(\l-2*\x, -0.15) -- (\l-2*\x, 0.15);
	\draw	[thick, smooth, black]	(\l-\x, -0.15) -- (\l-\x, 0.15);
	\draw	[thick, smooth, black]	(\l, -0.15) -- (\l, 0.15);
	
	\node[label={$1-2x^0$}]		at (\l/2-\x, -1.25)			{};
	\node[label={$x^0$}]		at (\l-2*\x+\x/2, -1.25)	{};
	\node[label={$x^0$}]		at (\l - \x/2, -1.25)		{};
	
	\draw [decorate,decoration={brace,amplitude=15pt,raise=3ex}]
	(0,0) -- (\l-2*\x,0) node[midway,yshift=3.5em]{pooling};
	\draw [decorate,decoration={brace,amplitude=15pt,raise=3ex}]
	(\l-2*\x,0) -- (\l,0) node[midway,yshift=3.5em]{pooling};
	
	\node[label={(a) traditional pack-and-crack}]	at (\l/2, -2.5)	{};
	\node[label={}]	at (\l/2, -3.5)	{};
\end{tikzpicture}

\hspace{1cm}
\begin{tikzpicture}[scale = .65]
	
	\pgfmathsetmacro{\l}{10}
	\pgfmathsetmacro{\x}{3.5}
	
	\pattern	[pattern = north east lines, pattern color = blue, opacity = 0.7]	(0, 0)	--	(\l-2*\x, 0) -- (\l-2*\x, 0.3) -- (0, 0.3);
	\fill	[fill = red, opacity = 0.3]	(\l-2*\x, 0)	--	(\l, 0) -- (\l, 0.3) -- (\l-2*\x, 0.3);
	
	\draw	[ultra thick, smooth, blue]	(0, 0) -- (\l-\x, 0);
	\draw	[ultra thick, smooth, red]	(\l-\x, 0) -- (\l, 0);
	
	\draw	[thick, smooth, black]	(0, -0.15) -- (0, 0.15);
	\draw	[thick, smooth, black]	(\l-2*\x, -0.15) -- (\l-2*\x, 0.15);
	\draw	[thick, smooth, black]	(\l-\x, -0.15) -- (\l-\x, 0.15);
	\draw	[thick, smooth, black]	(\l, -0.15) -- (\l, 0.15);
		
	\node[label={$1-2x^0$}]		at (\l/2-\x, -1.25)			{};
	\node[label={$x^0$}]		at (\l-2*\x+\x/2, -1.25)	{};
	\node[label={$x^0$}]		at (\l-\x/2, -1.25)			{};
	
	\draw [decorate,decoration={brace,amplitude=15pt,raise=3ex}]
	(0,0) -- (\l-2*\x,0) node[midway,yshift=3.5em]{segregation};
	\draw [decorate,decoration={brace,amplitude=15pt,raise=3ex}]
	(\l-2*\x,0) -- (\l,0) node[midway,yshift=3.5em]{pooling};
	
	\node[label={(b) pack-opponents-and-pool}]	at (\l/2, -2.5)	{};
	\node[label={($p$-segregation)}]	at (\l/2, -3.5)	{};
\end{tikzpicture}

\vspace{0cm}
\\
\begin{tikzpicture}[scale = .65]
	
	\pgfmathsetmacro{\l}{10}
	\pgfmathsetmacro{\x}{3.5}
	
	\draw[thick, red] (\l, 0) arc (0:180:\x cm and 1.5cm);
	\draw[thick, red] (\l-\x/3, 0) arc (0:180:2/3*\x cm and 1cm);
	\draw[thick, red] (\l-2*\x/3, 0) arc (0:180:1/3*\x cm and 0.5cm);
	
	\pattern	[pattern = north east lines, pattern color = blue, opacity = 0.7]	(0, 0)	--	(\l-2*\x, 0) -- (\l-2*\x, 0.3) -- (0, 0.3);
	\pattern	[pattern = north east lines, pattern color = red, opacity = 0.7] 	(\l-\x, 0)	--	(\l-\x+0.1, 0) -- (\l-\x+0.1, 0.3) -- (\l-\x, 0.3);
	
	\draw	[ultra thick, smooth, blue]	(0, 0) -- (\l - \x, 0);
	\draw	[ultra thick, smooth, red]	(\l - \x, 0) -- (\l, 0);
	
	\draw	[thick, smooth, black]	(0, -0.15) -- (0, 0.15);
	\draw	[thick, smooth, black]	(\l - 2*\x, -0.15) -- (\l - 2*\x, 0.15);
	\draw	[thick, smooth, black]	(\l - \x, -0.15) -- (\l - \x, 0.15);
	\draw	[thick, smooth, black]	(\l, -0.15) -- (\l, 0.15);
	
	\node[label={$1-2x^0$}]		at (\l/2-\x, -1.25)			{};
	\node[label={$x^0$}]		at (\l-2*\x+\x/2, -1.25)	{};
	\node[label={$x^0$}]		at (\l-\x/2, -1.25)			{};
	
	\draw [decorate,decoration={brace,amplitude=15pt,raise=8ex}]
	(0,0) -- (\l-2*\x,0) node[midway,yshift=6em]{segregation};
	\draw [decorate,decoration={brace,amplitude=15pt,raise=8ex}]
	(\l-2*\x,0) -- (\l,0) node[midway,yshift=6em]{pairing};
	
	\node[label={(c) pack-opponents-and-pair (POP)}]	at (\l/2, -2.5)	{};
	\node[label={}]	at (\l/2, -3.5)	{};
\end{tikzpicture}

\hspace{1cm}
\begin{tikzpicture}[scale = .65]
	
	\pgfmathsetmacro{\l}{10}
	\pgfmathsetmacro{\x}{3.5}
	\pgfmathsetmacro{\a}{1/2*\l-1/3*\x}
	\pgfmathsetmacro{\b}{1/2*\l-2/3*\x}
	\pgfmathsetmacro{\c}{1/2*\l-\x}
	\pgfmathsetmacro{\d}{\l/4-\x/2}

	\draw[thick, red] (\l, 0) arc (0:180:1/2*\l cm and 2cm);
	\draw[thick, red] (\l-1/3*\x, 0) arc (0:180:\a cm and 1.5cm);
	\draw[thick, red] (\l-2/3*\x, 0) arc (0:180:\b cm and 1cm);
	\draw[ultra thick, red] (\l-\x, 0) arc (0:180:\c cm and 0.5cm);
	\draw[thick, blue] (\l-\x, 0) arc (0:180:\c cm and 0.5cm);
	\draw[thick, blue] (3/4*\l-\x/2, 0) arc (0:180:\d cm and 0.25cm);
	
	\pattern	[pattern = north east lines, pattern color = blue, opacity = 0.7]	(4/3*\x, 0)	--	(\l-4/3*\x, 0) -- (\l-4/3*\x, 0.3) -- (4/3*\x, 0.3);
	
	\draw	[very thick, smooth, blue]	(0, 0) -- (\l-\x, 0);
	\draw	[very thick, smooth, red]	(\l-\x, 0) -- (\l, 0);
	
	\draw	[thick, smooth, black]	(0, -0.15) -- (0, 0.15);
	\draw	[thick, smooth, black]	(\x, -0.15) -- (\x, 0.15);
	\draw	[thick, smooth, black]	(\l-\x, -0.15) -- (\l-\x, 0.15);
	\draw	[thick, smooth, black]	(\l, -0.15) -- (\l, 0.15);
	
	\node[label={$x^0$}]		at (\x/2, -1.25)		{};
	\node[label={$1 - 2x^0$}]	at (\l/2, -1.25)		{};
	\node[label={$x^0$}]		at (\l-\x/2, -1.25)		{};
	
	\draw [decorate,decoration={brace,amplitude=15pt,raise=8ex}]
	(0,0) -- (\l,0) node[midway,yshift=6em]{pairing};
	
	\node[label={(d) pack-moderates-and-pair (PMP)}]	at (\l/2, -2.5)	{};
	\node[label={(matching slices)}]	at (\l/2, -3.5)	{};
\end{tikzpicture}
\end{tabular}
\caption{Four Varieties of Pack-and-Crack}
\caption*{\emph{Notes:} In each panel, the horizontal axis is the interval of voter types, $s$, where red voters are supporters and blue voters are opponents. The designer wins red districts and loses blue ones. Solid shading indicates pooling; curved lines connecting two voter types indicate pairing; hatched shading indicates segregation.}
\label{f:4pc}
\end{figure}

Figure \ref{f:4pc} illustrates four pack-and-crack plans that play important roles in our analysis. Panel (a) is what we call \emph{traditional pack-and-crack}: the strongest opposing voters are pooled in one type of district, while the remaining voters (a mix of supporters and opponents) are pooled in another type of district. Panel (b) is the same, except now each strong opposing type is segregated in a distinct, homogeneous district. We call this plan \emph{pack-opponents-and-pool}. This plan was previously studied by \citet{GP}, who called it ``$p$-segregation.'' Panel (c) is the same as Panel (b), except now favorable voter types are matched in a negatively assortative manner to form distinct districts. We call this plan \emph{pack-opponents-and-pair}, or POP. This plan plays a central role in our analysis, as we will see that it is optimal for realistic parameter values; however, we will also see that the simpler traditional pack-and-crack and pack-opponents-and-pool plans are approximately optimal for the same parameters.

Finally, we call the plan in Panel (d)---where extreme voter types are matched in a negatively assortative manner, and intermediate voter types are segregated---\emph{pack-moderates-and-pair}, or PMP. This plan was previously studied by \citet{FH}, who called it ``matching slices.''\footnote{\citeauthor{FH} did not emphasize the possibility of segregating a non-trivial interval of intermediate voter types under matching slices, but their results allow this possibility, and we will see that this is actually the typical case.} We also refer to the extreme form of PMP where the segregation region is degenerate, so that only a single voter type is segregated, as \emph{negative assortative districting}.

\subsection{No Aggregate Uncertainty}\label{s:noaggr}

We next consider the case with idiosyncratic uncertainty but no aggregate uncertainty. As we will see, this case is fairly realistic, as empirically idiosyncratic uncertainty is much larger than aggregate uncertainty.

\begin{proposition}\label{nounc}
Assume there is no aggregate uncertainty: there exists $r^0$ such that $r=r^0$ with certainty.
\begin{enumerate}
\item If $\int v(s,r^0)dF(s)\geq 1/2$, a districting plan is optimal iff it creates measure $1$ of districts where $\int v(s,r^0)dP(s)\geq 1/2$. Under such a plan, the designer wins all districts. 

\item If $\int v(s,r^0)dF(s)< 1/2$, let $s^*$ satisfy $\int ^{\ol s}_{s^*} (v(s,r^0)-1/2) d F(s) =0.$ A districting plan is optimal iff it creates measure $1-F(s^*)$ of cracked districts where $\Pr_{P}(s\geq s^*)=1$ and $\int ^{\ol s}_{s^*} v(s,r^0) d P(s) =1/2$, and measure $F(s^*)$ of packed districts where $\Pr_{P}(s<s^*)=1$. Under such a plan, the designer wins the cracked districts.
\end{enumerate}
\end{proposition}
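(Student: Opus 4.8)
The plan is to first strip the objective down. With no aggregate uncertainty $G$ is degenerate at $r^0$, so $G(r^*(P))=\1\{r^0\le r^*(P)\}$; since $v(s,r)$ is decreasing in $r$ and $r^*(P)$ solves $\int v(s,r^*(P))\,dP(s)=1/2$, this indicator equals $\1\{\int v(s,r^0)\,dP(s)\ge 1/2\}$. Writing $\mu_P:=\int v(s,r^0)\,dP(s)$ for a district's vote share at $r^0$, the designer's payoff is just the $\mathcal H$-measure of districts with $\mu_P\ge 1/2$. Part (1) is then immediate: if $\int v(s,r^0)\,dF\ge 1/2$, uniform districting ($\mathcal H=\delta_F$) gives every district $\mu_P=\int v(s,r^0)\,dF\ge 1/2$ and wins measure $1$, the maximum conceivable; any plan leaving a positive measure of districts with $\mu_P<1/2$ wins strictly less and is suboptimal.

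For part (2) the crux is a convex price (Lagrangian) function. For $k\ge 0$ set $\psi_k(x):=\max\{0,\,1+k(x-1/2)\}$, which is convex and satisfies $\psi_k(x)\ge\1\{x\ge 1/2\}$ for all $x$. For any feasible plan, Jensen's inequality applied district-by-district and then the feasibility constraint $\int P\,d\mathcal H=F$ give
\[
\int \1\{\mu_P\ge\tfrac12\}\,d\mathcal H(P)\;\le\;\int \psi_k(\mu_P)\,d\mathcal H(P)\;\le\;\int\!\!\int \psi_k(v(s,r^0))\,dP(s)\,d\mathcal H(P)\;=\;\int \psi_k(v(s,r^0))\,dF(s).
\]
I would then choose the slope $k^\ast:=1/(\tfrac12-v(s^\ast,r^0))>0$ (note $v(s^\ast,r^0)<1/2$, since $\int_{s^\ast}^{\ol s}(v(s,r^0)-1/2)\,dF=0$ with $v$ increasing forces the integrand to be negative at $s^\ast$), so that the kink of $\psi_{k^\ast}$ lands exactly at $v(s^\ast,r^0)$. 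Then $\psi_{k^\ast}(v(s,r^0))=0$ for $s\le s^\ast$ and $\psi_{k^\ast}(v(s,r^0))=1+k^\ast(v(s,r^0)-1/2)$ for $s\ge s^\ast$, and the defining equation for $s^\ast$ makes the linear term integrate to zero, collapsing the right-hand side to $1-F(s^\ast)$. This is the upper bound.

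It remains to show the bound is attained exactly by the plans in the statement, and only by them. Achievability: the threshold plan wins its $1-F(s^\ast)$ cracked districts (each has $\mu_P=1/2$) and loses its packed districts (each supported in $[\ul s,s^\ast)$, hence $\mu_P<v(s^\ast,r^0)<1/2$); it is feasible because the average of $v(\cdot,r^0)$ over $[s^\ast,\ol s]$ equals $1/2$ by the definition of $s^\ast$, so $F$ restricted to $[s^\ast,\ol s]$ can be decomposed into a continuum of mean-$1/2$ districts (e.g.\ by negative assortative pairing). For the converse I would chase equality through the two inequalities above. Pointwise equality $\1\{\mu_P\ge 1/2\}=\psi_{k^\ast}(\mu_P)$ forces every winning district to have $\mu_P=1/2$ exactly and every losing district to have $\mu_P\le v(s^\ast,r^0)$. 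Equality in Jensen forces each district's support not to straddle the kink of $\psi_{k^\ast}$, i.e.\ $\Pr_P(s\ge s^\ast)=1$ or $\Pr_P(s\le s^\ast)=1$ for $\mathcal H$-almost every $P$. Combining, almost every district is either cracked (support in $[s^\ast,\ol s]$, mean $1/2$) or packed (support in $[\ul s,s^\ast]$); since $F$ has a density the boundary type $s^\ast$ is negligible, and feasibility on $[s^\ast,\ol s]$ pins the cracked measure to $1-F(s^\ast)$ and the packed measure to $F(s^\ast)$.

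The main obstacle is guessing the correct price function and, in particular, locating its kink: the slope $k^\ast$ must be tuned so that the unavoidable positive vote share of below-threshold voters is exactly ``priced out,'' which is precisely what the equation defining $s^\ast$ accomplishes. Once $\psi_{k^\ast}$ is in hand the bound is one line of Jensen plus feasibility, and the ``iff'' falls out of the two equality conditions. The only remaining care is routine: verifying feasibility of the mean-$1/2$ decomposition of the cracked region, and handling the measure-zero ambiguity at $s=s^\ast$.
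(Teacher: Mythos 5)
Correct, and essentially the paper's own argument: you majorize the winning indicator by a piecewise-linear convex function with kink at $v(s^*,r^0)$, apply Jensen district by district, use the aggregation constraint $\int P\,d\mathcal H=F$ to collapse the bound to $1-F(s^*)$, and trace the two equality conditions to get the ``iff.'' Your $\psi_{k^*}$ with $k^*=1/\bigl(\tfrac12-v(s^*,r^0)\bigr)$ is exactly the paper's majorant $\ol U$ modulo its change of variables $s\mapsto v(s,r^0)$ (indeed your slope is the right one---the paper's displayed denominator $1-2s^*$ is a typo for $\tfrac12-s^*$), so the two proofs coincide.
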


In case (1), the designer wins all districts under uniform districting. In case (2), the designer assigns all voter types $s>s^*$ to cracked districts that he wins with exactly 50\% of the vote, and packs the remaining voters arbitrarily. The intuition is that the designer wins a district iff the mean vote share $v(s,r^0)$ among voters in the district exceeds 50\%, so to win as many districts as possible the designer assigns only voter types above $s^*$ to cracked districts. This plan approximates the pack-and-crack vote share pattern as closely as possible, given the uncertainty facing the designer.

The optimal plans in Proposition \ref{nounc} coincide with the subset of optimal perfect-information plans that pack opponents (e.g., the plans in Figure \ref{f:4pc}(a)--(c)). Hence, pack-and-crack plans that pack opponents can be optimal with idiosyncratic uncertainty but no aggregate uncertainty, but plans that pack moderates (e.g., PMP) cannot. In Sections \ref{s:nonlinear} and \ref{s:estimation}, we will see that idiosyncratic uncertainty dominates aggregate uncertainty in practice. Hence, any optimal plan in Proposition \ref{nounc}---for example, traditional pack-and-crack---will prove to be approximately optimal for realistic parameters.

\subsection{No Idiosyncratic Uncertainty}\label{s:noindiv}

We now turn to the case with aggregate uncertainty but no idiosyncratic uncertainty.

\begin{proposition}\label{nohet}
Assume there is no idiosyncratic uncertainty: $v(s,r)=\1\{s\geq r\}$ for all $(s,r)$. Denote the population median voter type by $s^{m} = F^{-1}(1/2)$. A districting plan is optimal iff for $\mathcal H-$almost every district $P\in \supp(\mathcal H)$ there exists a voter type $s^{P} \geq s^{m}$ such that $\Pr_P(s=s^P)=\Pr_P(s< s^m)=1/2$. Under such a plan, the designer wins district $P$ iff $r\leq s^P$.
\end{proposition}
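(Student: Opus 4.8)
The plan is to recast the designer's problem as an optimization over the distribution of district medians, establish a sharp feasibility constraint on that distribution, and then read off the per-district structure from the cases in which the constraint binds. First I would simplify the objective: with $v(s,r)=\mathbf 1\{s\ge r\}$ we have $\int v(s,r)\,dP(s)=\Pr_P(s\ge r)$, which is nonincreasing in $r$, so $r^*(P)=\sup\{r:\Pr_P(s\ge r)\ge 1/2\}$ is exactly the (upper) median of $P$ and the designer wins $P$ iff $r\le r^*(P)$. Writing $\mu$ for the law of $r^*(P)$ under $\mathcal H$, the payoff is $\int G(r^*(P))\,d\mathcal H(P)=\int G\,d\mu$, and since $G$ is strictly increasing (as $g>0$) the problem reduces to making $\mu$ as large as possible in the first-order stochastic dominance order.

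Next I would derive the key constraint. For any $t$, $r^*(P)\ge t$ iff $\Pr_P(s\ge t)\ge 1/2$, so from the population identity $\int \Pr_P(s\ge t)\,d\mathcal H(P)=1-F(t)$, discarding the districts with $r^*(P)<t$ and using $\Pr_P(s\ge t)\ge 1/2$ on the rest gives $1-F(t)\ge \tfrac12\,\mu([t,\ol s])$. Hence $\mu([t,\ol s])\le 2(1-F(t))$ for every $t$, which binds only for $t\ge s^m$. The pointwise-maximal survival function consistent with this and with $\mu$ being a probability measure is $\min\{1,2(1-F(t))\}$, attained by the measure $\mu^*$ equal to $F$ conditioned on $\{s\ge s^m\}$ (density $2f$ on $[s^m,\ol s]$). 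Because $\mu^*$ first-order stochastically dominates every feasible $\mu$ and $G$ is strictly increasing, $\int G\,d\mu\le\int G\,d\mu^*$ with equality iff $\mu=\mu^*$; this is the upper bound on the designer's payoff.

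For sufficiency I would verify that any plan of the stated form attains $\mu=\mu^*$. If $\Pr_P(s=s^P)=\Pr_P(s<s^m)=1/2$ with $s^P\ge s^m$, then $P$ places no mass in $[s^m,s^P)\cup(s^P,\ol s]$, so $r^*(P)=s^P$ and the designer wins iff $r\le s^P$. Feasibility $\int P\,d\mathcal H=F$ then forces the law of $s^P$ to equal $\mu^*$, since the only mass each district puts in $[s^m,\ol s]$ is the atom $\tfrac12\delta_{s^P}$ and the total mass of $F$ above $s^m$ is $\tfrac12$; hence the payoff equals $\int G\,d\mu^*$ and the plan is optimal. Existence of at least one such plan (e.g.\ pairing the upper and lower halves of $F$ by quantiles) shows the bound is attained.

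For necessity I would run the inequalities backward. Optimality forces $\mu=\mu^*$, so $\mu([t,\ol s])=2(1-F(t))$ for $t\ge s^m$, making both inequalities above equalities for a.e.\ $t$. Rewriting this as $\int\big[\Pr_P(s\ge t)-\tfrac12\mathbf 1\{r^*(P)\ge t\}\big]\,d\mathcal H(P)=0$ with a manifestly nonnegative integrand shows that, for each fixed $t\ge s^m$, $\mathcal H$-a.e.\ district satisfies $\Pr_P(s\ge t)=1/2$ when $r^*(P)\ge t$ and $\Pr_P(s\ge t)=0$ when $r^*(P)<t$. The main obstacle is upgrading these ``for each $t$, a.e.\ $P$'' statements to a single ``a.e.\ $P$, for all $t$'' statement, which I would handle by restricting $t$ to a countable dense set and invoking monotonicity and right-continuity of $t\mapsto\Pr_P(s\ge t)$ and of the indicator. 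The resulting district has no mass in $[s^m,r^*(P))$ and none above $r^*(P)$, hence an atom of size $1/2$ at $s^P:=r^*(P)\ge s^m$ and mass $1/2$ on $[\ul s,s^m)$, which is exactly the claimed characterization.
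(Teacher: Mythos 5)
Your proof is correct and follows essentially the same route as the paper's: both derive the feasibility bound $\mu([t,\ol s])\le\min\{1,2(1-F(t))\}$ on the distribution of district thresholds (the paper's $1-H(r_-)\le 2(1-F(r))$), identify the dominating distribution ($F$ conditioned on $\{s\ge s^m\}$, i.e.\ the paper's $H^*$), and characterize optimal plans as exactly those inducing it. The only difference is one of rigor rather than approach: you spell out how the per-district structure follows from $\mu=\mu^*$ (the nonnegative-integrand argument plus the countable-dense-set upgrade), a step the paper asserts in one clause without proof.
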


That is, for each voter type $s$ above the population median, the designer creates a district consisting of 50\% voters with this type and 50\% voters with below-median types. Note that, for every realization of aggregate uncertainty $r\in (\ul s,\ol s)$,  the designer wins some districts with exactly 50\% of the vote and wins zero votes in all other districts. This is precisely the pack-and-crack vote share pattern. 

The intuition for Proposition \ref{nohet} is easy to see with a finite number $N$ of districts. With no idiosyncratic uncertainty, the probability that the designer wins a given district is determined by the median voter type in that district. The strongest district the designer can possibly create is formed by combining the $1/(2N)$ highest voter types with any other voters: that is, it is impossible to create a district where the median voter is above the $1-1/(2N)$ quantile of the population distribution. Similarly, it is impossible to create $n$ districts where the median voter is everywhere above the $1-n/(2N)$ quantile of the population distribution. But, by creating districts one at time by always combining the $1/(2N)$ highest remaining voters with $1/(2N)$ below-median voters, the designer ensures that the median voter in the $n^{\text{th}}$ strongest district is exactly the $1-n/(2N)$ quantile. So this plan is optimal.

The optimal plans in Proposition \ref{nohet} are a subset of optimal perfect-information plans. 
For example, the PMP plan in Figure \ref{f:4pc}(d) remains optimal when $v(s,r)=\1\{s\geq r\}$ but $r$ is not degenerate, while the plans in Figures \ref{f:4pc}(a)--(c) that pack opponents are not optimal in this setting. This result is consistent with \citet{FH}, who show that matching slices is optimal when idiosyncratic uncertainty is sufficiently small, under some additional assumptions which we discuss in Section \ref{s:sdd}.\footnote{Note that in every optimal plan in Proposition \ref{nohet}, all voters with the highest type $s$ are assigned to the same district: in \citeauthor{FH}'s words, ``one's most ardent supporters should be grouped together.'' This is what \citeauthor{FH} mean when they write that ``cracking is never optimal'' and summarize their findings as ``sometimes pack, but never crack.''}

\subsection{Linear Swing} \label{s:linear}

Our last benchmark case is when vote shares and swings are linear in voter types. There are two equivalent ways to define this case. The simplest definition is that \emph{vote shares $v(s,r)$ are linear in $s$}:
\[ v(s,r)=\frac{\ol s-s}{\ol s-\ul s}v(\ul s,r)+\frac{s-\ul s}{\ol s-\ul s}v(\ol s,r)\quad \text{for all $(s,r)$}.\]
An alternative, equivalent definition is that vote swings are linear in $s$. To state this definition, first define the \emph{swing} of a voter type $s$ when the aggregate shock changes from $r'$ to $r$ by \[\Delta_{s}^{r,r'}=v(s,r)-v(s,r').\]
We then say that \emph{swings $\Delta_{s}^{r,r'}$ are linear in $s$} if 
\begin{align*}
\Delta_{s}^{r,r'} &= \rho(s)\Delta_{\ul s}^{r,r'}+(1-\rho(s))\Delta_{\ol s}^{r,r'} \quad \text{for all $(s,r,r')$,} \quad \text{where} \\
\rho(s)&=\frac{v(\ol s,r)-v(s,r)}{v(\ol s,r)-v(\ul s,r)}=\frac{v(\ol s,r')-v(s,r')}{v(\ol s,r')-v(\ul s,r')}.
\end{align*}
It is easy to see that, up to a rescaling of $s$, vote shares are linear iff swings are linear.

The linear case nests the \emph{uniform swing case} where $\Delta_{s}^{r,r'}$ is independent of $s$ (for each $r,r'$), so the aggregate shock shifts the vote share equally for all voter types. Political scientists often assume uniform swing to study how a given districting plan would perform under different electoral outcomes.\footnote{See, e.g., \citet{KKR} for a recent discussion of this methodology.} The linear case also nests the case where voter types are binary (i.e., $\supp(F) = \{\ul s,\ol s\}$), as well as the no-aggregate-uncertainty case considered in Section \ref{s:noaggr}. However, the no-idiosyncratic-uncertainty case considered in Section \ref{s:noindiv} cannot be linear, unless voter types are binary.

The key simplification afforded by linearity is that the threshold shock $r^*(P)$ for winning a district $P$ depends only on the district mean voter type $x=\E_{P}[s]$. Under linearity, the designer thus effectively chooses a distribution $H(x)$ of mean types $x$, rather than a distribution $\mathcal{H}(P)$ of distributions of types $P$. With this formulation, the constraint $\int Pd\mathcal{H}(P)=F$ simplifies to the requirement that $H$ is a mean-preserving contraction of $F$, which we denote by $F \succsim H$.\footnote{One way to see this is by analogy to statistics, where if a state $s$ is distributed according to $F$ then there exists an experiment such that the distribution of posterior expectations of $s$ is given by $H$ iff $H$ is a mean-preserving contraction of $F$ \citep[e.g.,][]{Blackwell,Kolotilin2017}. \label{f:MPS}} 

Slightly abusing notation, the designer wins districts with mean voter type at least $x$ iff $r \leq r^*(x)$. The probability of this event is 
\[U(x):=G(r^*(x)).\] 
We can interpret $U$ as the distribution of a re-scaled aggregate shock $z$ where the designer wins a district with mean type $x$ iff $z \leq x$. %
The designer's problem thus becomes
\begin{equation*}
\begin{gathered}
	\max_{H\in \Delta [\ul s,\ol s]} \int U (x) d H(x)\\
	\text{s.t. $F\succsim H$}. 
\end{gathered}
\end{equation*}

Clearly, uniform districting is optimal if $U$ is concave, and segregation is optimal if $U$ is convex. %
However, a more realistic assumption is that $U$ is \emph{strictly S-shaped}, so the marginal impact of replacing a less favorable voter with a more favorable one on the probability of winning a district is first increasing and then decreasing. Formally, this means that there is an inflection point $x^i\in [0,1]$ such that $U$ is strictly convex on $[0,x^i]$ and strictly concave on $[x^i,1]$; equivalently, the re-scaled aggregate shock $z$ is unimodal. %

We will see that $U$ being S-shaped is closely related to the optimality of pack-opponents-and-pool districting (i.e., $p$-segregation, see Figure \ref{f:4pc}(b)), where voter types below some cutoff $s^*$ are segregated, and voter types above $s^*$ are pooled in districts with mean voter type $x^* = \E_{F} [s|s\geq s^*]$. %
Under pack-opponents-and-pool districting with cutoff $s^*$ and pool mean $x^* = \E_{F} [s|s\geq s^*]$, the designer's expected seat share is
\[\int_{\ul s}^{s^*}U(x)dF(x)+U(x^*)(1-F(s^*)).\]
The best pack-opponents-and-pool plan is the one where $s^*$ is chosen to maximize this expectation. When the optimal value of $s^*$ is interior, it is characterized by the first-order condition
\begin{equation*} \label{FOC}
u(x^*)(x^* - s^*)=U(x^*)-U(s^*).\footnote{This equation is analogous to equation (12) of \citet{GP}.}
\end{equation*}
The intuition for this equation is that a marginal increase in $s^*$ increases the pool mean, which increases the designer's expected seat share by $u(x^*)(1-F(s^*))dx^{*}/ds^{*}=u(x^*)(x^{*}-s^*)f(s^*)$; but also decreases the mass of pooled voters, which decreases the designer's expected seat share by $(U(x^*)-U(s^*))f(s^*)$. The first-order condition equates the marginal benefit and marginal cost. See Figure \ref{f:sp}.

\begin{figure}[t]
\centering

\begin{tikzpicture}[scale = 1]
	
	\begin{axis}
		[axis x line = middle,
		axis y line = left,
		xmin = -3, xmax = 4,
		ymin = 0, ymax = 1.2,
		xlabel=$x$,
		ytick=1,
		xtick \empty,
		clip=false]
		
		\draw	[dashed, black]		(-1.6, 0) -- (-1.6, 0.0548);
		\draw	[dashed, black]		(-3, 0.0548) -- (-1.6, 0.0548);
		\draw	[dashed, black]		(-3, 0.758) -- (0.7, 0.758);
		\draw	[dashed, black]		(0.7,0) -- (0.7, 0.758);
		\draw	[ultra thin, solid, black]		(3, -0.015) -- (3, 0.015);
		
		\addplot [domain=-1.6:1.8, thick, dashed, red]	{0.307*x+(0.0548+1.6*0.307)};
		\addplot [domain=-3:3, thick, smooth, blue]			{normcdf(x,0,1)};
		
		\pattern	[pattern = north east lines, pattern color = blue, opacity = 0.7]	(-3, 0)	--	(-1.6, 0) -- (-1.6, 0.03) -- (-3,0.03);
		\fill	[fill = red, opacity = 0.3]	(-1.6, 0)	--	(3, 0) -- (3, 0.03) -- (-1.6, 0.03);
		
		\node[label={$\ul s$}] 		at (-3, -0.16) 		{};
		\node[label={$\ol s$}] 		at (3, -0.16) 		{};
		\node[label={${U}(s^*)$}] at (-3.6, -0.03) 	{};
		\node[label={${U}(x^*)$}] at (-3.6, 0.67) 		{};
		\node[label={$s^*$}] 	at (-1.6, -0.16) 	{};
		\node[label={$x^*$}] 	at (0.7, -0.16) 	{};
		\node[label={\textcolor{blue}{$U$}}] at (2.5, 0.85) {};
		
		\normalsize
		\draw [decorate,decoration={brace,amplitude=7pt,mirror,raise=5ex}]
		(-3,0.05) -- (-1.6, 0.05) node[midway,yshift=-3.5em]{segregation};
		\draw [decorate,decoration={brace,amplitude=7pt,mirror,raise=5ex}]
		(-1.6,0.05) -- (3, 0.05) node[midway,yshift=-3.5em]{pooling};
	\end{axis}
\end{tikzpicture}
\caption{Optimal Pack-Opponents-and-Pool Districting}
\label{f:sp}
\end{figure}
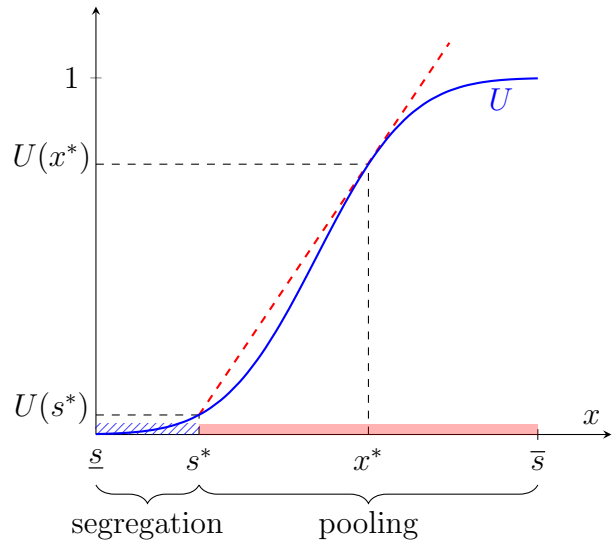

A simple result is that pack-opponents-and-pool is optimal when $U$ is strictly S-shaped.

\begin{proposition} \label{p:linearW} In the linear case where $U$ is strictly S-shaped, pack-opponents-and-pool districting is optimal, and every optimal districting plan has the same distribution of district means.
\end{proposition}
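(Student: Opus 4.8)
The plan is to solve the reduced problem $\max_{H:\,F\succsim H}\int U\,dH$ stated in the text by exhibiting a dual certificate. The natural certificate for a mean-preserving-contraction constraint is a \emph{convex} majorant of $U$: I will construct a convex function $p:[\ul s,\ol s]\to\R$ with $p\ge U$ together with the candidate plan $H^*$ (pack-opponents-and-pool with the optimal cutoff $s^*$, i.e.\ $H^*=F$ on $[\ul s,s^*]$ plus an atom of mass $1-F(s^*)$ at $x^*=\E_F[s\mid s\ge s^*]$), such that $p=U$ on $\supp H^*$ and $p$ is affine on the pooling region $[s^*,\ol s]$. Given these, optimality is immediate: for any feasible $H$,
\[
\int U\,dH\ \le\ \int p\,dH\ \le\ \int p\,dF\ =\ \int p\,dH^*\ =\ \int U\,dH^*,
\]
where the first inequality uses $p\ge U$, the second uses convexity of $p$ with $F\succsim H$, the middle equality uses that $p$ is affine on $[s^*,\ol s]$ (so its $F$- and $H^*$-integrals over the pool both equal $p(x^*)(1-F(s^*))$) while $p=U$ and $H^*=F$ on $[\ul s,s^*]$, and the last uses $p=U$ on $\supp H^*$.

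To build $p$, I take $p=U$ on $[\ul s,s^*]$ and let $p$ equal the tangent line $\ell$ to $U$ at the pool mean $x^*$ on $[s^*,\ol s]$. The first-order condition $u(x^*)(x^*-s^*)=U(x^*)-U(s^*)$ is exactly the statement that this tangent line passes through $(s^*,U(s^*))$, so $p$ is continuous at $s^*$ and $p(x^*)=U(x^*)$, yielding $p=U$ on $\supp H^*$. I will first establish existence of a cutoff $s^*\in(\ul s,x^i)$ solving this condition with $x^*\in(x^i,\ol s)$, by an intermediate-value argument as $s^*$ sweeps the convex region (equivalently, $s^*$ is an interior maximizer of $\int_{\ul s}^{s^*}U\,dF+U(x^*)(1-F(s^*))$); the degenerate cutoffs $s^*=\ul s$ (uniform) and $s^*=\ol s$ (segregation) are handled by the same tangent-line construction.

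The main obstacle is verifying that $p$ is genuinely a convex majorant, and this is where the strict S-shape does the work. Set $D:=U-\ell$, so $D$ is strictly convex on $[\ul s,x^i]$ and strictly concave on $[x^i,\ol s]$, with a double zero at $x^*$ (tangency: $D(x^*)=D'(x^*)=0$) and $D(s^*)=0$ (the FOC). On $[x^i,\ol s]$, concavity of $D$ with the double zero forces $D\le 0$, i.e.\ $\ell\ge U$, and in particular $D(x^i)\le 0$. On $[s^*,x^i]$, $D$ is convex with $D(s^*)=0\ge D(x^i)$, so $D$ lies below the chord joining these endpoints and hence $D\le 0$ there too; thus $p=\ell\ge U$ on all of $[s^*,\ol s]$. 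Convexity of $p$ then reduces to the slope condition at $s^*$: from $D(s^*)=0$ and $D\le 0$ immediately to the right, $D'(s^*)\le 0$, i.e.\ $u(s^*)\le u(x^*)$, which is precisely the required upward jump from the left slope $U'(s^*)$ to the affine slope $\ell'=u(x^*)$. This completes the certificate and hence the optimality claim.

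Finally, for uniqueness of the distribution of district means I exploit strictness. Any optimal $H$ must make the whole chain tight. Tightness of the first inequality forces $\supp H\subseteq\{p=U\}=[\ul s,s^*]\cup\{x^*\}$. Integrating by parts twice (with $\Phi_H(x)=\int_{\ul s}^x H$, equal means so the boundary terms cancel) gives $\int p\,d(F-H)=\int p''(x)\big(\Phi_F(x)-\Phi_H(x)\big)\,dx$, and $\Phi_F\ge\Phi_H$ since $F\succsim H$. Tightness of the second inequality then forces $p''\,(\Phi_F-\Phi_H)\equiv 0$; since $p''=U''>0$ strictly on $[\ul s,s^*]$, we obtain $\Phi_F=\Phi_H$ and hence $H=F$ on $[\ul s,s^*]$, while the remaining mass $1-F(s^*)$ must sit at $x^*$ by the support condition and the mean constraint. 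Thus every optimal plan induces $H=H^*$.
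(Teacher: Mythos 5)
Your proof is correct and is essentially the paper's own argument: your convex majorant $p$ is exactly the paper's $\ol U$ (the tangent line to $U$ at the pool mean $x^*$, glued to $U$ at $s^*$ via the first-order condition), and your chain $\int U\,dH \le \int p\,dH \le \int p\,dF = \int U\,dH^*$ together with the tightness analysis is the same duality argument the paper uses. You fill in two details the paper delegates to its figure---verifying from the strict S-shape that $p$ is genuinely a convex majorant, and deriving uniqueness of the distribution of means by integration by parts against $\Phi_F-\Phi_H$ rather than by directly reading off the support conditions---but these are refinements of the same route, not a different one.
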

Intuitively, when $U$ is S-shaped the designer is risk-loving in the pool mean $x$ for $x\in [0,s^*]$ and is risk-averse in $x$ ``on average'' for $x\in [s^*,1]$, so voters below $s^*$ are segregated and voters above $s^*$ are pooled. Similar results were established by \citet{GP} and, in the persuasion literature, \citet{Kolotilin2017} and \citet{KMZ}.

As aggregate uncertainty vanishes, the best pack-opponents-and-pool plan converges to the plan characterized in Proposition \ref{nounc} with segregated packed districts.\footnote{Note that as $G$ converges to the step function $\mathbf{1}\{r\geq r^0\}$, $U$ converges to the step function $\mathbf{1}\{x\geq {x}^i\}$, where ${x}^i$ is the solution to $v({x}^i,r^0)=1/2$. The first-order condition then reduces to the condition that $x^*={x}^i$, which yields the same condition for $s^*$ as in Proposition \ref{nounc}.} Thus, traditional pack-and-crack (where packed districts are pooled) and pack-opponents-and-pool and POP (where packed districts are segregated) are all optimal without aggregate uncertainty, but only the latter two plans remain optimal with a small amount of aggregate uncertainty.\footnote{Intuitively, the designer optimally segregates packed districts to have a respectable chance of winning the strongest of these districts.} Note that pack-opponents-and-pool and POP induce the same distribution of district mean types, and hence may both be optimal even when the optimal distribution of means is unique. However, the designer's indifference among different ways of creating cracked districts with the same mean type is not robust to introducing slightly non-linear swings, as we show in the next section.

\begin{remark}[\textbf{Means vs. Medians}] An intuition for why packing opponents is optimal with linear swings and unimodal aggregate shocks (including in the no-aggregate-uncertainty case), while packing moderates is optimal with no idiosyncratic uncertainty, is that the designer targets a distribution of district means in the former case and district medians in the latter case. Optimizing the distribution of district means with unimodal aggregate uncertainty entails packing opponents and cracking moderates and supporters among districts with the same mean type. Optimizing the distribution of district medians entails matching voter types above and below the population median. Loosely speaking, whether packing opponents or moderates is optimal in practice depends on whether reality is closer to the linear/mean-dependent case or the no-idiosyncratic-uncertainty/median-dependent case.

The distinction between mean and median-dependence can be used to classify several strands of related literature. In gerrymandering, \citet{OG} and \citet{GP} study the mean-dependent case, while \citet{FH} study an approximately median-depedent case. In electoral competition, probabilisitic voting models with partisan taste shocks such as \citet{Hinich} and \citet{LindbeckWeibull} are mean-dependent, while stochastic median voter models such as \citet{Wittman} and \citet{Calvert} are median-dependent. In persuasion, \citet{GK-RS}, \citet{KMZL}, \citet{Kolotilin2017}, \citet{DM}, and \citet{KMS} study the mean-depedent case, while \citet{KCW} study a general case nesting both the mean and quantile (e.g., median)-dependent case, and \citet{YangZentefis} study the quantile-dependent case.

\end{remark}

\section{General Analysis} \label{s:nonlinear}

We now consider the general case with both idiosyncratic and aggregate uncertainty and non-linear swings. We first impose a natural curvature assumption on swings, and show that it implies that optimal districting is ``strictly single-dipped,'' in that more extreme voters are assigned to stronger districts. We then argue that optimal strictly single-dipped districting plans typically take a ``pack-and-pair'' form, where weaker districts are segregated and stronger districts consist of exactly two voter types. POP and PMP are leading examples of pack-and-pair plans. We next provide theoretical and numerical results that delineate the parameter ranges where POP or PMP is optimal. Here we find that POP is optimal when idiosyncratic uncertainty is much larger than aggregate uncertainty, PMP is optimal when aggregate uncertainty is larger than idiosyncratic uncertainty, and mixed versions of POP or PMP are optimal in the intermediate range. Finally, we observe that when idiosyncratic uncertainty is sufficiently dominant (as we will see is the case in practice), the optimal POP plan closely resembles $p$-segregation, and both $p$-segregation and traditional pack-and-crack districting are approximately optimal.

\subsection{Swingy Moderates and Single-Dipped Districting} \label{s:sdd}

The linear swing case considered in Section \ref{s:linear} is a natural benchmark, but it makes the counterfactual prediction that the ``swingiest'' voters---those with the largest $\Delta_{s}^{r,r'}$---are ``extremists'' with $s\in \{\ul s,\ol s\}$. In contrast, election forecasters (and presumably sophisticated gerrymanderers) take into account that moderate voters are usually swingier than extremists. As Nathaniel Rakich and Nate Silver put it when describing the ``elasticity scores'' in the FiveThirtyEight.com forecasting model, ``Voters at the extreme end of the spectrum---those who have a close to a 0 percent or a 100 percent chance of voting for one of the parties---don't swing as much as those in the middle,'' \citep{Silver}. We provide some evidence for this claim in Section \ref{s:estimation}.

The following assumption formalizes the idea that moderates are swingier than extremists.
\begin{assumption}[Swingy Moderates]
We have
\begin{equation} \label{swing}
\frac{\partial^2 }{\partial s\partial r} \ln \left(\frac{\partial v(s,r)}{\partial s}\right)>0 \quad \text{for all $s$, $r$}.
\end{equation}
\end{assumption}

To see why Assumption 1 corresponds to moderates being swingy, note that integrating \eqref{swing} gives, for all $s<s'<s''$ and $r<r'$,
\begin{align} 
&(v(s'',r')-v(s',r'))(v(s',r)-v(s,r))>(v(s'',r)-v(s',r))(v(s',r')-v(s,r')), \nonumber  
\end{align}
or equivalently
\begin{equation}
\Delta_{s'}^{r,r'} > \frac{v(s'',r)-v(s',r)}{v(s'',r)-v(s,r)}\Delta_{s}^{r,r'}+\frac{v(s',r)-v(s,r)}{v(s'',r)-v(s,r)}\Delta_{s''}^{r,r'} \quad \text{for all $s<s'<s''$, $r<r'$}. \label{swing2}	
\end{equation}
Recall that the linear case is defined by having equality in \eqref{swing2}. Thus, Assumption 1 says that, for any pair of aggregate shocks $r<r'$ and any triple of voter types $s<s'<s''$, when the aggregate shock improves from $r'$ to $r$, type $s'$ voters swing toward the designer more than type $s$ and $s''$ voters, relative to the linear case.

We mention an equivalent condition and an implication of Assumption 1.

\begin{proposition} \label{p:swingy} The following hold:
\begin{enumerate}
\item In the additive taste shock case, Assumption 1 holds iff the density $q$ of the taste shock $t$ is strictly log-concave: \[\frac{d^2 }{d t^2}\ln \left(q(t) \right)<0 \quad \text{for all $t$}.\]
\item Assumption 1 implies that $\partial v(s,r) / \partial r$ is strictly single-dipped (i.e., decreasing and then increasing) in $s$, for each $r$.
\end{enumerate}
\end{proposition}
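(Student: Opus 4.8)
The plan is to prove both parts by directly manipulating the mixed partial derivative appearing in Assumption~1, exploiting two simple structural facts: in the additive case $v$ depends only on the difference $s-r$, and in general the order of a mixed partial may be swapped. Neither part requires anything beyond careful differentiation, so the work is in the bookkeeping.

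For part~(1), I would substitute $v(s,r)=Q(s-r)$, so that $\partial v/\partial s=q(s-r)$ and hence $\ln(\partial v/\partial s)=\ln q(s-r)$. Differentiating in $s$ and then in $r$, and using $\partial_r(s-r)=-1$, the chain rule yields the identity
\[
\frac{\partial^2}{\partial s\,\partial r}\ln\!\left(\frac{\partial v(s,r)}{\partial s}\right)=-\left.\frac{d^2}{dt^2}\ln q(t)\right|_{t=s-r}.
\]
Because $r$ ranges over all of $\R$ while $s$ ranges over the nondegenerate interval $[\ul s,\ol s]$, the argument $t=s-r$ sweeps out all of $\R$; therefore the left-hand side is strictly positive for every $(s,r)$ if and only if $\tfrac{d^2}{dt^2}\ln q(t)<0$ for every $t$, which is exactly strict log-concavity of $q$. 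Both directions of the equivalence fall out of this single identity.

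For part~(2), the key move is to swap the order of differentiation in Assumption~1 and write
\[
\frac{\partial^2}{\partial s\,\partial r}\ln\!\left(\frac{\partial v}{\partial s}\right)=\frac{\partial}{\partial s}\!\left(\frac{\partial^2 v/\partial s\,\partial r}{\partial v/\partial s}\right),
\]
so that Assumption~1 states precisely that the ratio $(\partial^2 v/\partial s\,\partial r)/(\partial v/\partial s)$ is strictly increasing in $s$. Since $v$ is strictly increasing in $s$, the denominator $\partial v/\partial s$ is strictly positive, so $\partial^2 v/\partial s\,\partial r$ shares the sign of this ratio. A strictly increasing function crosses zero at most once, and where it does it passes from negative to positive; hence $\partial^2 v/\partial s\,\partial r$ is negative for small $s$ and positive for large $s$, with a single crossing. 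Writing $w(s,r)=\partial v/\partial r$ and using equality of mixed partials, $\partial w/\partial s=\partial^2 v/\partial s\,\partial r$, so $w$ is strictly decreasing in $s$ below the crossing point and strictly increasing above it, which is exactly strict single-dippedness.

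I do not expect a genuine obstacle here; the only care is in two small points. In part~(1) I must justify that $t=s-r$ attains every real value, so that a pointwise inequality on $q$ is genuinely equivalent to the pointwise inequality on the mixed partial; this uses only that $r$ is unrestricted and $[\ul s,\ol s]$ is nondegenerate. In part~(2) I should record explicitly that the degenerate cases—where the monotone ratio never changes sign—yield a strictly monotone $w$, which is the boundary instance of single-dippedness, so the conclusion holds uniformly across all cases.
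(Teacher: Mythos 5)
Your proposal is correct and takes essentially the same route as the paper: part (1) by direct substitution of $v(s,r)=Q(s-r)$ into Assumption~1, and part (2) by showing that Assumption~1 forces $\partial^2 v/\partial s\,\partial r$ to cross zero at most once in $s$, from negative to positive, so that $\partial v/\partial r$ is strictly single-dipped. The only cosmetic difference is that you obtain the single crossing from monotonicity of the ratio $\left(\partial^2 v/\partial s\,\partial r\right)/\left(\partial v/\partial s\right)$, whereas the paper expands Assumption~1 into the equivalent inequality $v_{ssr}\,v_s > v_{sr}\,v_{ss}$ and argues at zeros of $v_{sr}$; the mathematical content is identical.
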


Many common distributions have strictly log-concave densities, including the normal, logistic, and extreme value distributions (see, e.g., Table 1 in \citealt{BB}), so part 1 of the proposition shows that Assumption 1 is a standard property. The property in part 2 of the proposition gives another sense in which moderates are swingier than extremists. For example, for any $s<s'<s''$, this property implies that (letting $v_r = \partial v/\partial r$) if $v_r (s,r) =v_r (s'',r)$, then $v_r(s',r)<v_r (s,r) =v_r (s'',r) <0$ (recalling that $v_r <0$), so type $s'$ is swingier than types $s$ and $s''$.

We now show that Assumption 1 implies that every optimal districting plan is ``strictly single-dipped,'' in that more extreme voters are assigned to stronger districts. Formally, a districting plan $\mathcal{H}$ is \emph{strictly single-dipped} if any district $P\in \supp(\mathcal{H})$ containing any two voter types $s<s''$ is stronger than any district $P' \in \supp(\mathcal{H})$ containing any intervening voter type $s' \in (s,s'')$, in that $r^*(P')<r^*(P)$.\footnote{Formally, we say that a district $P$ ``contains'' a voter type $s$ if $s \in \supp(P)$.} Note that if districting is strictly single-dipped then each district consists of at most two distinct voter types.

\begin{proposition}\label{p:sdd}
Under Assumption 1, every optimal districting plan is strictly single-dipped.
\end{proposition}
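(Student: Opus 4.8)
The plan is to prove the contrapositive by a feasible exchange of voters between two districts, using the integrated swingy-moderates inequality \eqref{swing2} to exhibit a strict improvement whenever single-dippedness fails. Suppose $\mathcal H$ is optimal but not strictly single-dipped. Then there exist districts $P,P'\in\supp(\mathcal H)$, types $s<s''$ with $s,s''\in\supp(P)$, and a type $s'\in(s,s'')$ with $s'\in\supp(P')$, such that $r^*(P')\ge r^*(P)$. Write $\tau=r^*(P)$ and $\tau'=r^*(P')\ge\tau$. I would first dispose of the strict case $\tau'>\tau$ and then, separately, the knife-edge $\tau'=\tau$.

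For the strict case, fix the weight $\mu\in(0,1)$ by $v(s',\tau)=\mu\,v(s,\tau)+(1-\mu)\,v(s'',\tau)$, which exists because $v(\cdot,\tau)$ is strictly increasing. For small $\delta>0$, perturb the plan by moving mass $\delta$ of type $s'$ from $P'$ into $P$, and mass $\mu\delta$ of type $s$ and $(1-\mu)\delta$ of type $s''$ from $P$ into $P'$. Since this merely relocates voters between two districts, with equal total mass $\delta$ moved in each direction, it preserves district size and the population marginal $F$, so the perturbed plan is feasible. Let $D(r):=\mu\,v(s,r)+(1-\mu)\,v(s'',r)-v(s',r)$. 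The perturbation changes $\int v(\cdot,r)\,dP$ by $-\delta D(r)$ and $\int v(\cdot,r)\,dP'$ by $+\delta D(r)$. By the choice of $\mu$ we have $D(\tau)=0$ \emph{exactly}, so $\int v(\cdot,\tau)\,dP$ is unchanged and $r^*(P)=\tau$ is preserved exactly. Meanwhile \eqref{swing2}, applied with lower shock $\tau$ and this balancing weight, states precisely that $D(r)>0$ for every $r>\tau$; in particular $D(\tau')>0$. Hence $\int v(\cdot,\tau')\,dP'$ rises strictly above $1/2$, and since this integral is continuous and strictly decreasing in $r$, the threshold $r^*(P')$ strictly increases. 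As $G$ is strictly increasing ($g>0$), expected seat share rises strictly, contradicting optimality.

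Two points require care. First, the continuum bookkeeping: if $s,s'',s'$ are not atoms of $P,P'$, I would replace them by small type-neighborhoods (using $s,s''\in\supp(P)$ and $s'\in\supp(P')$ to guarantee available mass) and appeal to continuity of $D$; the perturbation should also be read as acting on positive-$\mathcal H$-measure collections of districts rather than on single ones. Second, and this is the main obstacle, the tie $\tau'=\tau$: there the exchange above leaves \emph{both} thresholds exactly fixed (as $D(\tau)=0$), so it yields no improvement, and a genuinely different, threshold-spreading perturbation is needed. Here I would exploit that \eqref{swing2} also forces the balancing weight $\mu(r)$ to be strictly increasing in $r$, so that $D$ single-crosses zero at $\tau$ from below, together with the strict single-dippedness of $v_r$ from Proposition \ref{p:swingy}(2), to find a swap that strictly strengthens one of the two districts while weakening the other by less; isolating a sign of the swap for which this is a strict first-order gain is the delicate step. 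Finally, strict single-dippedness immediately implies that each optimal district contains at most two distinct types, as noted in the text; the same conclusions could alternatively be reached through the Lagrangian/price-function characterization of the equivalent persuasion problem, with \eqref{swing2} supplying the required single-crossing in $(s,\tau)$.
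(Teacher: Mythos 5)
Your strict case ($r^*(P')>r^*(P)$) is correct and coincides with the paper's own argument (Lemma \ref{L1} in the appendix): the identical three-type exchange, balanced at $r^*(P)$ so that the weaker district's threshold is exactly preserved, with \eqref{swing2} giving $D(r^*(P'))>0$ and hence a strict increase in the stronger district's threshold and in expected seat share. The continuum bookkeeping you flag is not the real issue; the paper likewise illustrates the finite-support case and defers the general case to duality machinery (Lemma \ref{l:dual}, citing \citealt{KCW}).

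The genuine gap is the tie case $r^*(P')=r^*(P)$, which you explicitly leave open, and it is not a peripheral knife-edge: breaking ties is the entire content of \emph{strict} single-dippedness (uniform districting, for instance, has all thresholds equal), and the tie case also contains the sub-case $P=P'$ --- a single district whose support holds three or more types --- where no exchange ``between the two districts'' is even available. Your sketch does not go through as stated: for arbitrary tied districts $P\ni\{s,s''\}$ and $P'\ni s'$, single-crossing of $D$ and single-dippedness of $v_r$ do not determine which district is swingier, i.e.\ how $-\int \frac{\partial v(\tilde s,\tau)}{\partial r}dP(\tilde s)$ compares to $-\int \frac{\partial v(\tilde s,\tau)}{\partial r}dP'(\tilde s)$, because the rest of each support is unrestricted; and if the two districts happen to be equally swingy, every one-shot swap has zero first-order effect, so no ``sign of the swap'' yields a strict first-order gain. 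The paper resolves this with two separate ideas you are missing. First, Lemma \ref{L2}: no optimal district contains three types --- proved by duplicating such a district, performing a balanced exchange (with your weights $\mu,1-\mu$) so both copies keep threshold $r$ but one holds more moderates and the other more extremes, at which point Assumption 1 \emph{does} pin down the swinginess ranking, and an infinitesimal $s$-for-$s''$ swap raises one threshold by strictly more than it lowers the other (via the implicit function theorem, each threshold moves by the vote-share change divided by $-\int \frac{\partial v}{\partial r}dP$). Second, a merging trick for ties across distinct districts: if $r^*(P)=r^*(P')$, merging $P$ and $P'$ produces another optimal plan containing a three-type district, contradicting Lemma \ref{L2}. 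Repairing your direct approach would require importing essentially this same split-and-rebalance construction, so what you have labelled ``the delicate step'' is not a detail but the core of the proof.
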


Similar results were established by \citet{FH} and, in the persuasion context, \citet{KCW}.\footnote{Assumption 1 is equivalent to \citeauthor{FH}'s ``informative signal property.'' \citeauthor{FH} assume a finite number of districts, and also assume that the median and mode of $Q$ coincide. \citet{KCW} give sufficient conditions for single-dippedness in a more general model that allows state-dependent designer preferences.} To see the intuition, suppose a districting plan creates two districts, 1 and 2, with the same threshold aggregate shock $r^*$, but where District 1 consists entirely of moderates and District 2 consists of a mix of left-wing and right-wing extremists. With linear swings, the distribution of vote shares in the two districts are identical. However, under Assumption 1, the vote share is swingier in District 1 than in District 2. Thus, conditional on the aggregate shock being close to $r^*$, a marginal voter is more likely to be pivotal in District 2 than in District 1. The designer can then profitably exploit this asymmetry by re-allocating some unfavorable voters to District 1 and re-allocating some favorable voters to District 2, thus weakening the moderate District 1 and strengthening the extreme District 2. Breaking all ties in favor of extreme disticts in this manner leads to strictly single-dipped districting.

Proposition \ref{p:sdd} implies that, under Assumption 1, the designer should never pool more than two voter types in the same district. Thus, among the plans in Figure \ref{f:4pc}, only POP and PMP can be optimal under Assumption 1 (and, moreover, more extreme paired districts under these plans must be stronger than more moderate districts). In particular, while pack-opponents-and-pool is optimal with linear swings and unimodal aggregate shocks, if moderates are even slightly swingier than extremists then the designer is better-off splitting the pool into distinct districts each consisting of at most two types such that more extreme districts are strictly stronger.

\subsection{Pack-and-Pair Districting} \label{s:segpair}

Strict single-dippedness is an important property of a districting plan, but many plans can be strictly single-dipped. This subsection argues that, among strictly single-dipped plans, it is natural to focus on ``pack-and-pair'' districting, where weaker districts are segregated and stronger districts consist of exactly two voter types. Formally, a strictly single-dipped districting plan $\mathcal{H}$ is \emph{pack-and-pair} if $\delta_s \in \supp(\mathcal{H})$ implies that any $P \in \supp(\mathcal{H})$ such that $r^*(P)<r^*(\delta_s)$ takes the form $P=\delta_{s'}$ for some $s'<s$.

For simplicity, for the remainder of the current section, we restrict attention to the additive taste shock case, and assume that the taste shock density is strictly log-concave and symmetric about $0$. The symmetry assumption has the convenient implication that the threshold shock to win a packed district $P=\delta_s$ is just $r^*(P)=s$.

We first show that any pack-and-pair plan $\mathcal{H}$ can be described in a simple way. First, there exists a \emph{bifurcation point} $r^b\in [\ul s,\ol s]$ such that a district $P \in \supp(\mathcal{H})$ is packed if $r^*(P)\leq r^b$ and is paired if $r^*(P)> r^b$. The bifurcation point thus divides the packed and paired districts. Second, the assignment of voters to paired districts is described by a decreasing function $s_1$ and an increasing function $s_2$ where, for each paired district $P$, the two voter types in district $P$ are $s_1(r^*(P))$ and $s_2(r^*(P)) > s_1(r^*(P))$. Stronger paired districts thus contain more extreme voters, as single-dippedness requires.

\begin{proposition}\label{p:pp}
For any pack-and-pair districting plan $\mathcal H$, there exists a bifurcation point $r^b\in [\ul s,\ol s]$, a decreasing function $s_1: (r^b,\ol s]\rightarrow [\ul s,r^b)$, and an increasing function $s_2:(r^b,\ol s]\rightarrow (r^b,\ul s]$ satisfying $s_1(r)<r <s_2(r)$, such that for each $P\in \supp(\mathcal H)$, we have $\supp(P)=\{r^*(P)\}$ if $r^*(P)\leq r^b$ and $\supp(P)=\{s_1(r^*(P)),s_2(r^*(P))\}$ if $r^*(P)>r^b$. 
\end{proposition}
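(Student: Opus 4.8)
The plan is to read the three objects $(r^b, s_1, s_2)$ off the two defining features of a pack-and-pair plan---strict single-dippedness and the fact that packed districts are weaker than paired ones---and then to verify straddling, monotonicity, and the codomain conditions. Recall that strict single-dippedness already forces every $P \in \supp(\mathcal H)$ to have at most two types, so I only need to organise how these at-most-pairs are arranged.

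\textbf{Bifurcation point.} Let $A = \{r^*(\delta_s) : \delta_s \in \supp(\mathcal H)\}$ be the set of strengths of packed districts and set $r^b = \sup A$ (with $r^b = \ul s$ if $A = \emptyset$). The downward-closure in the definition of pack-and-pair---$\delta_s \in \supp(\mathcal H)$ forces every strictly weaker district to be a singleton $\delta_{s'}$ with $s'<s$---means that, contrapositively, any paired $P$ has $r^*(P) \geq a$ for every $a \in A$, so $r^*(P) \geq r^b$. Moreover no packed and paired district can share a strength: a paired $\{a,b\}$ with $a<r<b$ and a packed $\delta_r$ would make $\delta_r$ contain the intervening type $r \in (a,b)$, so single-dippedness would give $r^*(\delta_r) < r^*(\{a,b\})$, i.e. $r<r$. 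Hence a district is packed iff $r^*(P) \leq r^b$ and paired iff $r^*(P) > r^b$. Because $Q$ is symmetric about $0$ and strictly increasing, $Q(s-r)=1/2 \iff r=s$, so $r^*(\delta_s)=s$ and $\supp(P)=\{r^*(P)\}$ on the packed range.

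\textbf{Straddling, well-definedness, monotonicity.} A paired district at strength $r$ has support $\{s_1,s_2\}$, $s_1<s_2$, with some weight $\alpha \in (0,1)$; the winning condition $\alpha Q(s_1-r)+(1-\alpha)Q(s_2-r)=1/2$ together with $Q$ strictly increasing and symmetric forces $Q(s_1-r)<1/2<Q(s_2-r)$, i.e. $s_1<r<s_2$. If two paired districts shared a strength $r$, writing their pairs as $\{a,b\}$ and $\{c,d\}$ with $a \leq c$, the straddling puts a type of one strictly inside the other's type interval, and single-dippedness again yields $r<r$; so $s_1,s_2$ are well-defined functions of $r$. For monotonicity, take paired strengths $r<r'$: single-dippedness implies the stronger $P'$ contains no type in $(s_1(r),s_2(r))$, and straddling ($s_2(r')>r'>r>s_1(r)$) rules out both types of $P'$ lying below $s_1(r)$. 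In the admissible case---one type below $s_1(r)$ and one above $s_2(r)$---this gives $s_1(r')<s_1(r)$ and $s_2(r')>s_2(r)$, as required. For the codomains, $s_2(r)>r>r^b$ is immediate; and since $s_1$ is strictly decreasing with $s_1(\rho)<\rho$ for every paired $\rho$, letting $\rho \downarrow r^b$ forces $s_1(r)<r^b$. This delivers exactly $s_1:(r^b,\ol s] \to [\ul s, r^b)$ and $s_2:(r^b,\ol s] \to (r^b,\ol s]$.

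\textbf{Main obstacle.} The delicate point in the monotonicity step is excluding the remaining configuration in which \emph{both} types of the stronger district $P'$ lie above $s_2(r)$. Single-dippedness constrains only types that appear \emph{together} in a district, so it does not by itself prevent the paired districts from splitting into two (or more) separately nested families---two ``dips''---rather than the single nested family the statement asserts. I expect closing this gap to be the crux, and I would close it using feasibility, $\int P\, d\mathcal H(P)=F$ with $f>0$: every intermediate type must be allocated, and tracing where the types lying between two putative families are sent---via straddling and single-dippedness---forces the families to merge, leaving a single bifurcation point and hence global monotonicity of $s_1$ and $s_2$.
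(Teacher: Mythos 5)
Your skeleton matches the paper's proof (at most two types and straddling from strict single-dippedness plus the definition of $r^*$, a bifurcation point, then monotonicity of $s_1,s_2$), but the two places where you stop short are exactly where the real work lies, so the proposal has genuine gaps.

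The ``two dips'' configuration you defer is indeed the crux, and your sketch of how to close it is missing the decisive ingredients. The paper's argument runs: if $r^b<r^*(P)<r^*(P')$ but $s_1(r^*(P))<s_1(r^*(P'))$, single-dippedness forces $s_2(r^*(P))\leq s_1(r^*(P'))$, so the paired districts nested in $[s_1(r^*(P')),s_2(r^*(P'))]$ form a second family; let $r^\dagger$ be the infimum of their strengths. Because $\supp(\mathcal H)$ is compact and $r^*$ is continuous, some district in the support has strength exactly $r^\dagger$; if that district were paired, the types strictly between $r^\dagger$ and its upper type could be placed nowhere --- not segregated (single-dippedness) and not paired (single-dippedness or minimality of $r^\dagger$) --- contradicting $\int P\,d\mathcal H(P)=F$ with $f>0$. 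Hence $\delta_{r^\dagger}\in\supp(\mathcal H)$ with $r^\dagger\geq s_1(r^*(P'))>r^*(P)$, and the contradiction is then delivered by the pack-and-pair downward-closure property itself: $P$ is a paired district strictly weaker than the packed district $\delta_{r^\dagger}$. Your sketch invokes only feasibility, straddling, and single-dippedness; it omits both the compactness step (needed to produce a district \emph{at} the infimum) and the final appeal to the pack-and-pair property, which --- not feasibility per se --- is what generates the contradiction.

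There is also an unacknowledged gap earlier: you define $r^b$ as the supremum of packed strengths, while the paper uses the infimum of paired strengths, and these coincide only by virtue of the same compactness-plus-feasibility argument. Concretely, your dichotomy does not exclude a paired district at strength exactly $\sup A$ when the supremum is not attained, and your codomain step (``letting $\rho\downarrow r^b$ forces $s_1(r)<r^b$'') presumes that paired strengths accumulate at $\sup A$ from above. Neither is automatic. In a POP plan, for instance, the positive-mass segregated types form an interval $[\ul s,\hat s)$ with $\hat s$ strictly below the bifurcation point, and every $s_1$ value lies in $(\hat s,r^b)$; the identity $\sup A=r^b$ holds only because $\delta_{r^b}$ --- the weak limit of the weakest paired districts --- belongs to the closed support and is degenerate. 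For a general pack-and-pair plan, that the district at the infimum of paired strengths is degenerate is precisely what the paper proves with the feasibility argument; without it, your $r^b$ can sit strictly below the $s_1$ values, and the claimed codomain $s_1:(r^b,\ol s]\to[\ul s,r^b)$ fails for your choice of $r^b$. (A minor further point: single-dippedness allows two paired districts to share a type, or even both types with different weights, so $s_1$ and $s_2$ are only weakly monotone; your monotonicity and codomain steps assert and use strictness.)
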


Examples of pack-and-pair districting include segregation, POP, PMP, and negative assortative districting. Note that segregation and negative assortative districting represent the extreme pack-and-pair plans where all voter types are segregated and where only a single type is segregated. We first give conditions under which these extreme districting plans are optimal. 

\begin{proposition} \label{p:segNAD}
Negative assortative districting is uniquely optimal if $G$ is concave, and segregation is uniquely optimal if $G$ is ``sufficiently convex,'' in that there exists a constant $c >0$ such that segregation is uniquely optimal if  ${g'(r)}/{g(r)}\geq c$ for all $r$.
\end{proposition}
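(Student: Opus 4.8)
The plan is to certify each candidate plan by a dual price function. For this persuasion problem the designer's value equals $\inf_{p}\int p\,dF$ over functions $p:[\ul s,\ol s]\to\R$ with $G(r^*(P))\le\int p\,dP$ for every district $P$, and a feasible plan is optimal once some such $p$ binds on its support (uniquely optimal if the inequality is strict off the support). Because, for a fixed threshold value $\rho=r^*(P)$, the quantity $\int p\,dP-G(r^*(P))$ is linear in $P$ subject to the single constraint $\int Q(s-\rho)\,dP(s)=\tfrac12$, the domination inequality need only be verified on one- and two-point distributions; making this reduction is the first step in both halves. It is also worth recording at the outset that every attainable threshold $r^*(P)$ lies in $[\ul s,\ol s]$, so only the curvature of $G$ on the compact type range matters---which is what makes ``concave'' and ``sufficiently convex'' meaningful (a CDF cannot be globally concave) and keeps all the constants below finite.

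For the segregation half I would take $p=G$. Segregation gives district $\delta_s$ the payoff $G(s)=p(s)$, so the certificate binds on its entire support and attains value $\int G\,dF$; it then remains only to show $G(r^*(P))\le\E_P[G(s)]$ with equality iff $P$ is degenerate. By the reduction this is the two-point claim $G(r^*)\le\alpha G(s_1)+(1-\alpha)G(s_2)$, where $s_1<s_2$ and $r^*$ solves $\alpha Q(s_1-r^*)+(1-\alpha)Q(s_2-r^*)=\tfrac12$. For $\alpha=\tfrac12$ symmetry of $q$ forces $r^*=\tfrac{s_1+s_2}{2}=\E_P[s]$ and the claim is Jensen; the subtlety, and the reason plain convexity fails, is that for $\alpha\ne\tfrac12$ one can have $r^*>\E_P[s]$. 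I would bound $r^*-\E_P[s]\le B$ uniformly on $[\ul s,\ol s]$ using log-concavity of $q$, and then show the chord surplus $\alpha G(s_1)+(1-\alpha)G(s_2)-G(\E_P[s])$ dominates the shift $G(\E_P[s]+B)-G(\E_P[s])$ as soon as $g'/g\ge c$; the constant $c$ falls out of this worst-case comparison and is uniform exactly because the type range is compact.

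For NAD the choice $p=G$ has the wrong sign (concavity gives $\E_P[G]\le G(\E_P[s])=G(r^*)$ on symmetric pairs), so $p$ must be a nondegenerate function above $G$ built from the matching. NAD pairs the quantiles $s_1(\tau)=F^{-1}(\tfrac12-\tau)$ and $s_2(\tau)=F^{-1}(\tfrac12+\tau)$ into $50$--$50$ districts, for which symmetry of $q$ again gives $r^*=\tfrac{s_1+s_2}{2}$. I would set $p=G$ on the lower half $[\ul s,s^m]$ and define $p$ on $(s^m,\ol s]$ along the matching by $p(s_2(\tau))=2G\!\left(\tfrac{s_1(\tau)+s_2(\tau)}{2}\right)-G(s_1(\tau))$. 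Then the certificate binds on every NAD district by construction, and concavity of $G$ makes $p\ge G$ pointwise, so all one-point districts are dominated.

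The remaining and genuinely hard step is two-point domination, $\alpha p(a)+(1-\alpha)p(b)\ge G(r^*)$ for all $a<b$ and $\alpha$: the NAD-matched symmetric $50$--$50$ pair must be the \emph{minimizer}, with value exactly $G(\rho)$. I expect this to be the main obstacle. I would prove it by fixing the threshold $\rho$, differentiating the left-hand side in the spread and in the weight $\alpha$, showing the matched district is the unique stationary configuration, and signing the second-order conditions using concavity of $G$ together with log-concavity and symmetry of $q$; strictness of these curvature hypotheses yields uniqueness of NAD. The endpoints $\tau\to0,\tfrac12$, the degenerate central district $\delta_{s^m}$, and pairs straddling $s^m$ asymmetrically would each be checked directly to confirm that domination holds globally rather than only near the matched locus.
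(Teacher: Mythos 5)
Your overall framework (weak duality with a price function $p$, and the reduction of the domination inequality to one- and two-point districts) is sound---it is essentially the machinery behind the results the paper imports from \citet{KCW}---but both halves of the execution have genuine gaps, and the NAD half is fatal. The plan you certify is the wrong plan: ``negative assortative districting'' in Proposition \ref{p:segNAD} is not the $50$--$50$ quantile matching, but a pack-and-pair plan whose matching \emph{and within-district weights} are determined endogenously by $(F,G,Q)$ through the first-order conditions in Lemma \ref{l:dual}, and it generically places weight above $\tfrac12$ on the more favorable type. To see that your candidate cannot work, take $F$ symmetric about $0$ (recall $q$ is symmetric and $G$ need only be concave on the attainable range): then every quantile pair satisfies $s_1(\tau)=-s_2(\tau)$, so every $50$--$50$ district has threshold $r^*(P)=0=r^*(F)$, your plan's value is exactly $G(0)$---the same as uniform districting---and your certificate integrates to $\int p\,dF=G(0)$. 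If your two-point domination step held, weak duality would make uniform districting optimal as well; but by Lemma \ref{l:snd}(2), whose pairwise hypothesis holds strictly under concave $G$ and strictly log-concave $q$, the optimum is \emph{uniquely} negative assortative. So the domination step you flag as ``the main obstacle'' is not merely hard, it is false: districts pairing a favorable with an unfavorable type at weight above $\tfrac12$ on the favorable type (pairing ``center'' below the median) attain thresholds strictly above what your piecewise $p$ dominates, and such districts are exactly what the true optimum uses. The paper never constructs the optimal NAD plan at all; it invokes Theorem 6 of \citet{KCW} (Lemma \ref{l:snd}(2)), which reduces the claim to the pairwise condition that pooling any two types $s<s'$ at some weight beats segregating them, verified by taking $r\uparrow s'$ and using $\frac{G(s')-G(s)}{g(s')}\geq s'-s>\frac{\frac12-Q(s-s')}{q(0)}$.

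The segregation half aims at the correct inequality ($p=G$, i.e., $G(r^*(P))\leq \E_P[G(s)]$), but the uniform-constant-$B$ scheme cannot close it. The binding configurations are nearly degenerate districts ($s_2-s_1\to 0$ or $\alpha\to\{0,1\}$): there the chord surplus $\alpha G(s_1)+(1-\alpha)G(s_2)-G(m)$ vanishes (it is $O(\alpha)$ as $\alpha\to 0$), while for any fixed $B>0$ the shift $G(m+B)-G(m)$ is bounded below by $\min_x\{G(x+B)-G(x)\}>0$ on the compact type range, since $g>0$. Hence ``chord surplus $\geq$ shift'' fails near degeneracy no matter how large $c$ is; compactness does not rescue a worst-case comparison because the quantity you actually need to beat, $G(r^*)-G(m)$, vanishes at the same first order as the surplus. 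The comparison must be configuration-dependent and matched to second order at the degenerate point, which is precisely the paper's argument: it proves $\frac{G(s)-G(r)}{g(r)}>\frac{Q(s-r)-\frac12}{q(0)}$ for all $s\neq r$ by noting both sides agree in value and first derivative at $s=r$, comparing second derivatives ($c$ versus $q'(0)/q(0)=0$), and then globalizing via Gronwall's inequality ($g(s)/g(r)\geq e^{c(s-r)}$) with the choice $c=q(0)/(\tfrac12-Q(-\varepsilon))$. Your segregation half is fixable by substituting this local argument for the uniform bound; the NAD half requires a different strategy altogether---either characterizing the endogenous optimal pairing via Lemma \ref{l:dual}, or, as the paper does, using the pairwise-sufficiency theorem so that the optimal plan never needs to be exhibited.
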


The intuition for the first part of the result is as follows. First, any strictly single-dipped districting plan that never segregates any two voter types is negative assortative. So, it suffices to show that if $G$ is concave (and the taste shock density is strictly log-concave and symmetric), it is sub-optimal for the designer to segregate any two voter types $s<s'$. To see this, suppose the designer pools a few type-$s$ voters in with the type-$s'$ voters. The marginal effect of this change on the designer's expected seat share among type-$s$ voters is \[G(s')-G(s),\] which is the increased probability of winning a type-$s$ voter's district when she moves from the weak district $\delta_s$ to the strong district $\delta_{s'}$. On the other hand, the marginal effect of this change on the designer's expected seat share among type-$s'$ voters is
\[\frac{Q(s-s')-\frac 12}{q(0)}g(s').\]
This follows because the first term is the marginal effect on the threshold shock to win the strong district, where this comes from using the implicit function theorem (and $Q(0)=1/2$) to calculate $dr/d\rho$ at $\rho=0$ from the equation
\[\rho Q(s-r)+(1-\rho)Q(s'-r)=\frac{1}{2},\]
and the second term is the density of the aggregate shock at $r^*(\delta_{s'})=s'$. Finally, the sum of the two effects is positive, because
\[\frac{G(s')-G(s)}{g(s')} \geq s'-s > \frac{\frac{1}{2}- Q(s-s')}{q(0)},\]
where the first inequality is by concavity of $G$, and the second inequality is by symmetry and strict convexity of $Q$ on $(-\infty,0]$ (which follows from strict log-concavity of $q$).

The intuition for the second part of the result is that if $G$ is sufficiently convex then, for any two voter types $s$ and $s'$, we have 
\[
\frac{G(s')-G(s)}{g(s')}\leq \frac{Q(s'-s)-\frac{1}{2}}{q(0)},
\]
which by a similar logic as above implies that it is optimal for the designer to separate any two voter types rather than pooling them.

Proposition \ref{p:segNAD} expresses the intuition that concavity of $G$ favors pooling (which, under strict single-dippedness, takes the form of pairing types, rather than pooling intervals of types), while convexity of $G$ favors segregation. %
In the realistic case where $G$ is strictly S-shaped (i.e., the aggregate shock is unimodal), segregation and negative assortative districting are both sub-optimal, unless the two parties are substantially asymmetric.\footnote{Proposition \ref{p:nosegNAD} can be compared to Proposition 1 of \citet{FH}. \citeauthor{FH} show that PMP (``matching slices'') is optimal when idiosyncratic uncertainty is sufficiently small, but their discussion focuses on the extreme case of negative assortative districting, where only a single voter type is segregated. Proposition \ref{p:nosegNAD} shows that this extreme case never arises when the distribution of the aggregate shock is unimodal and the two parties are symmetric.}

\begin{proposition} \label{p:nosegNAD}
If $G$ is strictly S-shaped with inflection point $r^*(F)$, then segregation and negative assortative districting are both sub-optimal.
\end{proposition}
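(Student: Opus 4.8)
The plan is to prove each of the two claims by exhibiting a strictly profitable deviation from the candidate plan, reusing the marginal ``pool-versus-separate'' calculus from the proof of Proposition \ref{p:segNAD}. Recall from that argument that moving an infinitesimal mass of a segregated type $s$ into a district that otherwise segregates a higher type $s'>s$ raises the designer's expected seat share if and only if
\[
\frac{G(s')-G(s)}{g(s')} > \frac{Q(s'-s)-\tfrac12}{q(0)},
\]
and lowers it if the reverse strict inequality holds. First I would record one structural fact: the inflection point $r^i:=r^*(F)$ lies strictly inside $(\ul s,\ol s)$ and $F$ puts positive mass on both sides of it. This follows from the defining equation $\int Q(s-r^i)\,\df F(s)=\tfrac12$, since if $F$ were supported on $\{s\le r^i\}$ then $Q(s-r^i)\le\tfrac12$ would force the integral strictly below $\tfrac12$, and symmetrically for $\{s\ge r^i\}$.

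For segregation I would pick two types $s<s'$ both strictly above $r^i$, which exist by the fact just noted. On $[s,s']$ the distribution $G$ is strictly concave, so $g$ is decreasing there and $G(s')-G(s)=\int_s^{s'}g\ge g(s')(s'-s)$, giving $\tfrac{G(s')-G(s)}{g(s')}\ge s'-s$. On the other hand, strict log-concavity and symmetry of $q$ make $0$ the unique mode, so $q(t)<q(0)$ for $t\ne0$ and hence $\tfrac{Q(s'-s)-1/2}{q(0)}=\tfrac{1}{q(0)}\int_0^{s'-s}q<s'-s$. Combining the two bounds, the displayed inequality holds strictly, so pooling these two segregated types strictly improves on segregation; hence segregation is sub-optimal. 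This step is entirely clean and uses only that the inflection is interior, not its precise location.

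For negative assortative districting the argument is dual but genuinely harder, because the clean concave-side move is unavailable. By Proposition \ref{p:pp}, NAD has a degenerate packed region: every district is a pair $\{s_1(r),s_2(r)\}$ with $s_1(r)<s^0<s_2(r)$ straddling the pivot $s^0$, so no two same-side types are ever segregated. The approach I would take is to show that it pays to enlarge the degenerate packed region, converting the weakest pairs into segregated districts and then pairing the freed types that lie above $r^i$. Concretely, the weakest pairs collapse onto the pivot as $r\downarrow s^0$, so separating them into $\delta_{s_1},\delta_{s_2}$ opens a short segregated interval straddling $r^i$; the segment of that interval lying strictly above $r^i$ sits in the strictly concave region, where the segregation step above shows that re-pairing two of its types yields a strict gain. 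Chaining the window-opening with the strictly profitable concave-side pairing should produce a plan that strictly beats NAD.

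The main obstacle is signing the net effect of this construction away from the fully symmetric case. Because the inflection coincides exactly with $r^*(F)$, the leading-order effects of opening the central window cancel (under full symmetry the window-opening changes the objective by $\int[G(s)-G(r^i)]\,\df F$, which vanishes identically, the integrand being odd against an even density), so one cannot conclude from a one-shot marginal perturbation at the pivot: the window-opening term and the concavity gain harvested by re-pairing are of comparable (higher) order in the window width. The delicate part is therefore to choose the window and the pairing so that the strict concavity of $G$ on a region of positive width above $r^i$ dominates the ambiguous window-opening term — that is, to establish the correct second-order comparison (or, equivalently, a global comparison against an explicit alternative pack-and-pair plan) rather than relying on the degenerate first-order condition at $s^0$. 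Verifying that this net effect is strictly positive, using strict log-concavity of $q$ together with strict S-shapedness of $G$ around $r^*(F)$, is where the real work lies.
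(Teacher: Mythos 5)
Your treatment of segregation is correct and is essentially the paper's own argument: both types are taken from the concave region $[r^*(F),\ol s]$, where the pool-versus-separate calculus of Proposition \ref{p:segNAD} strictly favours pairing, so segregation cannot be optimal.

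The NAD half, however, contains a genuine gap, and it originates in a false structural premise: you place the NAD pivot $s^0$ at (or straddling) the inflection point $r^i=r^*(F)$. That assumption is what creates the spurious degeneracy (``the leading-order effects of opening the central window cancel'') and leaves you needing a second-order comparison you cannot sign. In fact, the pivot of any negative assortative plan lies \emph{strictly below} $r^*(F)$, and proving this is the missing step. Under NAD the segregated district $\delta_{r^b}$ has zero $\mathcal H$-mass (since $F$ has a density), so $r^*(P)>r^b$ for $\mathcal H$-almost every district $P$, and feasibility $\int P\,d\mathcal H(P)=F$ gives
\[
\int Q(s-r^b)\,dF(s)=\iint Q(s-r^b)\,dP(s)\,d\mathcal H(P)>\iint Q(s-r^*(P))\,dP(s)\,d\mathcal H(P)=\tfrac 12 ,
\]
and since $r\mapsto\int Q(s-r)\,dF(s)$ is strictly decreasing and equals $\tfrac 12$ at $r^*(F)$, this forces $r^b<r^*(F)$. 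Note that this holds for arbitrary $F$---no symmetry is assumed in the proposition, and even a symmetric $F$ does not place the pivot at $r^i$---so the ``fully symmetric case'' you worry about never arises, and your proposed re-pairing of ``freed types above $r^i$'' is vacuous: for a small window around the pivot, \emph{all} the freed types lie strictly below $r^i$.

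With $r^b<r^*(F)$ in hand, no window-opening or re-pairing is needed. The weakest NAD pairs collapse onto $r^b$ (as you note), hence for small $\varepsilon$ they match types lying entirely inside the strictly \emph{convex} region of $G$. There the marginal calculus runs in the direction opposite to your segregation step: for $s\neq r$ in such a window,
\[
\frac{G(s)-G(r)}{g(r)}>\frac{Q(s-r)-\tfrac 12}{q(0)},
\]
because both sides vanish and have $s$-derivative $1$ at $s=r$, while the second-derivative comparison is $g'(r)/g(r)>0=q'(0)/q(0)$. By Lemma \ref{l:snd}(1) (equivalently, display \eqref{e:segregation} in the paper's proof of Proposition \ref{p:segNAD}), this is exactly the condition for segregation to be uniquely optimal among the types in the window, so segregating them strictly improves on the NAD pairing. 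This closes the argument cleanly; your construction, by contrast, rests on the wrong pivot location and is left unfinished at precisely the point where this feasibility bound is what is needed.
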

The intuition is simple. By Proposition \ref{p:segNAD}, the designer prefers pooling any two voter types above the inflection point $r^*(F)$, so segregation is suboptimal. Moreover, for any negative assortative districting, there exist nearby voter types that are paired in a district $P$ with $r^*(P)<r^*(F)$, but the designer prefers segregating such types.

Since convexity of $G$ favors segregation, concavity of $G$ favors pairing, and it is natural to assume that $G$ is S-shaped (first convex, then concave), a natural conjecture is that pack-and-pair districting (first segregation, then pairing) is optimal. We can verify this conjecture numerically (for an extremely wide range of parameters) in the special case where $G$ and $Q$ are both normal. The following proposition states this result, as well as giving a general sufficient condition for pack-and-pair districting to be uniquely optimal.

\begin{proposition}\label{p:pap}
If there do not exist $\ul s\leq s<r<s'\leq s''\leq \ol s$ satisfying
\begin{equation}\label{e:pap}
\begin{gathered}
G(r)+\lambda(r)\left (Q(s-r)-\tfrac 12 \right )\geq G(s)\quad \text{and}\\
G(r)+\lambda(r)\left (Q(s-r)-\tfrac 12 \right )\geq G(s'')+\lambda(s'')\left(Q(s-s'')-\tfrac 12 \right),	
\end{gathered}
\end{equation}
where
\[
\lambda(r)= \frac{g(r)(Q(s'-r))-Q(s-r))}{\left(Q(s'-r)-\frac 12\right)q(s-r)-\left(Q(s-r)-\frac 12\right)q(s'-r)} \quad \text{and}\quad \lambda(s'')=\frac{g(s'')}{q(0)},
\]

then every optimal districting plan is pack-and-pair. Moreover, when $Q$ is the standard normal distribution and $G$ is the centered normal distribution with standard deviation $\gamma^{-1}$, there do not exist $\gamma\in \{.1,.2,\ldots, 99.9,100\}$ and $s<r<s'\leq s''$ with $s,r,s',s''\in \{-5,-4.9,\ldots,4.9,5\}$ that satisfy \eqref{e:pap}.
\end{proposition}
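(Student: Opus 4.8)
The plan is to characterize optimal plans through the Lagrangian dual of the designer's problem and to show that the two inequalities in \eqref{e:pap} are \emph{necessary} conditions for an optimal plan to fail pack-and-pair; the displayed non-existence hypothesis then rules this out. Throughout I use Proposition \ref{p:sdd}: since $q$ is strictly log-concave, every optimal plan is strictly single-dipped, so $\mathcal H$-almost every district is either a singleton $\delta_t$ (with $r^*(\delta_t)=t$ by symmetry of $Q$) or a pair $\{s,s'\}$ with $s<s'$, whose threshold $r$ necessarily satisfies $s<r<s'$ (a convex combination of $Q(s-r)$ and $Q(s'-r)$ equals $\tfrac12$ only if one argument lies on each side of $0$).

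First I would set up the dual. The constraint $\int P\,d\mathcal H(P)=F$ is a linear, majorization-type constraint whose multiplier is a function $\psi:[\ul s,\ol s]\to\R$; by the strong-duality machinery of our companion paper \citet{KCW}, an optimal $\mathcal H$ admits such a $\psi$ satisfying dual feasibility $G(r^*(P))\le\int\psi\,dP$ for every feasible district $P$, with equality (complementary slackness) for $\mathcal H$-almost every $P\in\supp(\mathcal H)$. Here $\psi(s)$ plays the role of a ``price'' of type $s$: it bounds the seat-share contribution of any district and binds on used districts.

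The heart of the argument is to suppose, for contradiction, that some optimal plan is not pack-and-pair and to extract the quadruple. By the definition of pack-and-pair together with single-dippedness, a violation means $\supp(\mathcal H)$ contains a genuine pair $P=\{s,s'\}$ (threshold $r$, so $s<r<s'$) together with a strictly stronger singleton $\delta_{s''}$, $r<s''$; single-dippedness forces $s''\notin(s,s')$, and since $s''>r>s$ this gives $s'\le s''$, so $s<r<s'\le s''$. I would then read off $\psi(s)$ from the used pair: complementary slackness gives $\rho\psi(s)+(1-\rho)\psi(s')=G(r)$, while interior optimality of the proportion $\rho\in(0,1)$ (dual feasibility holds at all nearby proportions) yields the envelope condition $\psi(s')-\psi(s)=\lambda(r)\bigl(Q(s'-r)-Q(s-r)\bigr)$, where differentiating the threshold identity $\rho Q(s-r)+(1-\rho)Q(s'-r)=\tfrac12$ produces exactly the stated $\lambda(r)$. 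Eliminating $\psi(s')$ and $\rho$ gives the key identity $\psi(s)=G(r)+\lambda(r)\bigl(Q(s-r)-\tfrac12\bigr)$, which is the common left-hand side of \eqref{e:pap}.

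The two inequalities are then just two instances of dual feasibility. Feasibility for the segregated district $\delta_s$ reads $G(s)\le\psi(s)$, which is the first inequality. Feasibility for pairs $\{s,s''\}$ with vanishing weight on $s$—whose threshold tends to $s''$—gives, after dividing by the weight and passing to the limit, $\psi(s)\ge G(s'')+\lambda(s'')\bigl(Q(s-s'')-\tfrac12\bigr)$, with $\lambda(s'')=g(s'')/q(0)$ emerging as the limit of $\lambda(r)$ as the pair's threshold approaches $s''$; this is the second inequality. Hence a non-pack-and-pair optimum forces a quadruple satisfying both parts of \eqref{e:pap}, and the contrapositive gives the first claim. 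The main obstacle is the duality step—justifying existence of the multiplier $\psi$ and that complementary slackness together with interior tangency (not merely subgradient inequalities) hold at the used pair—which is where I would lean on \citet{KCW}; the remaining manipulations are routine calculus. For the final claim I would set $Q=\Phi$, $q=\phi$, $G(\cdot)=\Phi(\gamma\,\cdot)$, $g(\cdot)=\gamma\phi(\gamma\,\cdot)$ and verify by direct computation over the stated finite grid of $(\gamma,s,r,s',s'')$ that the two inequalities are never simultaneously satisfied.
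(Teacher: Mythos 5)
Your proposal is correct and follows essentially the same route as the paper: both extract the quadruple $s<r<s'\le s''$ from a hypothetical non-pack-and-pair optimum via strict single-dippedness (Proposition \ref{p:sdd}), then show the two inequalities in \eqref{e:pap} are necessary optimality conditions via the duality machinery of \citet{KCW}, with identical computations yielding $\lambda(r)$ (implicit differentiation of the threshold identity in the pair's mixing weight) and $\lambda(s'')=g(s'')/q(0)$, and the same grid check for the numerical claim. The only difference is packaging: the paper invokes its Lemma \ref{l:dual}, with a multiplier $\lambda$ indexed by thresholds, whereas you re-derive the same conditions from a type-indexed price function $\psi$ with dual feasibility over \emph{all} feasible districts --- which is in fact the cleaner way to get the first inequality, since the comparison district $\delta_s$ need not lie in $\supp(\mathcal H)$ and Lemma \ref{l:dual}(1) as literally stated only compares supported districts (its proof, via the set $\Gamma$ in \citet{KCW}, does deliver the stronger statement you use).
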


Condition \eqref{e:pap} can be explained as follows. For any (strictly single-dipped) non-pack-and-pair plan, there exist $s<r<s'\leq s''$ such that voter types $s<s'$ are paired in a district $P$ with $r^* (P)=r\in (s,s')$ and voter type $s''$ is segregated. By a similar logic to Proposition \ref{p:segNAD}, if the first inequality in \eqref{e:pap} fails, the designer prefers to segregate a few type-$s$ voters from district $P$; and if the second inequality in \eqref{e:pap} fails, the designer prefers to move a few type-$s$ voters from district $P$ to district $\delta_{s''}$.
Thus, if there do not exist $s<r<s'\leq s''$ that satisfy \eqref{e:pap}, then any optimal plan must be pack-and-pair.

\subsection{Should Opponents or Moderates be Packed?}\label{s:Y}

Having provided some arguments for pack-and-pair districting, the last part of our analysis compares two key forms of pack-and-pair---POP and PMP---as well as mixed versions of these districting plans. The mixed versions of POP and PMP that we will encounter fall into a class of districting plans that we call ``Y-districting.'' Formally, a pack-and-pair plan $\mathcal{H}$ is \emph{Y-districting} if there exists a positive number $\varepsilon >0$ such that
\begin{enumerate}
\item For all $r \in [r^b -\varepsilon,r^b +\varepsilon]$ (where $r^b$ is the bifurcation point), there exists $P \in \supp (\mathcal{H})$ such that $r^*(P)=r$.
\item Districts $P\in \supp(\mathcal H)$ with $r^*(P)\in [r^b -\varepsilon, r^b]$ are segregated (i.e., $\supp(P)=\{r^*(P)\}$).
\item Districts $P\in \supp(\mathcal H)$ with $r^*(P)\in (r^b, r^b +\varepsilon]$ are paired (i.e., $\supp(P)=\{s_1 (r^*(P)),s_2 (r^*(P))\}$ for some $s_1 (r^*(P))<s_2 (r^*(P))$).
\item The functions $s_1$ and $s_2$ describing the voter types in paired districts are twice differentiable and satisfy $\lim_{r\downarrow r^b}s_1(r)=\lim_{r\downarrow r^b}s_2(r)$.\footnote{The differentiability condition is used in the proof of Proposition \ref{p:gamma}. It may be possible to drop~it.}
\end{enumerate}
We will see that Y-districting encompasses a mixed version of POP, where there exists $\hat s \in (\ul s, r^b)$ such that voter types in $[\ul s, \hat s)$ are always segregated and types in $(\hat s, r^b)$ are sometimes segregated and sometimes paired, as well as a mixed version of PMP, where there exists $\hat s \in (\ul s, r^b)$ such that types in $[\ul s, \hat s)$ are always paired and types in $(\hat s,r^b)$ are sometimes segregated and sometimes paired. (In contrast, recall that under POP there exists $\hat s \in (\ul s, r^b)$ such that types in $[\ul s, \hat s)$ are always segregated and types in $(\hat s, r^b)$ are always paired, while under PMP there exists $\hat s \in (\ul s, r^b)$ such that types in $[\ul s, \hat s)$ are always paired and types in $(\hat s, r^b)$ are always segregated.) We will give theoretical and numerical results that indicate that POP is optimal when idiosyncratic uncertainty is much larger than aggregate uncertainty, PMP is optimal when aggregate uncertainty is larger than idiosyncratic uncertainty, and Y-districting (and, in particular, mixed POP or mixed PMP) is optimal in the intermediate range.

We first discuss how POP, PMP, and Y-districting relate to the set of all pack-and-pair plans. POP and PMP are both \emph{pure} districting plans, in that each voter type $s$ is assigned to a single district $P$: formally, for each $s \in [\ul s,\ol s]$, there exists a unique $P \in \supp (\mathcal{H})$ such that $s \in \supp (P)$. They are not the only pure districting plans: for example, a pack-and-pair plan could segregate voter types below a cutoff $s_0$ and match slices (including with an intermediate segregation region) among voter types above $s_0$. However, POP and PMP are the simplest such plans, as they involve only a single non-degenerate interval of segregated voter types. %
We are not aware of any parameters for which a more complex pure pack-and-pair plan is optimal. %

In contrast, Y-districting plans are mixed, because voter types $s$ just below the bifurcation point are sometimes segregated and sometimes paired with higher types. Somewhat surprisingly, we will see that such plans are uniquely optimal for a range of parameters, even though voter types are continuous. While not every mixed pack-and-pair plan is Y-districting, we will see that, at least numerically, optimal plans always take one of the three forms we consider.

We would like to have general necessary and sufficient conditions for the optimality of POP, PMP, and Y-districting. Unfortunately, this seems very challenging, because the form of optimal districting is driven by global constraints that are difficult to analyze. We instead present a seemingly modest result, which is that if Y-districting is optimal, then the ratio of idiosyncratic uncertainty to aggregate uncertainty must fall in an intermediate range. However, numerically it appears that this result actually characterizes when all three forms of districting are optimal: at least in the case where aggregate and idiosyncratic shocks are both normally distributed, our necessary conditions for optimality of Y-districting are also approximately sufficient, and when the ratio of idiosyncratic uncertainty to aggregate uncertainty is below (resp., above) the range where Y-districting is optimal, then PMP (resp., POP) is optimal.

To facilitate a comparison of the amount of aggregate and idiosyncratic uncertainty, the distributions $G$ and $Q$ should have the same shape. We therefore assume that there exists a parameter $\gamma >0$ such that $G(r)=Q(\gamma r)$ for all $r$. The parameter $\gamma$ thus meaures the ratio of the standard deviation of the idiosyncratic shocks (which is normalized to $1$) to that of the aggregate shock (which equals $\gamma^{-1}$). The following is our key result.

\begin{proposition} \label{p:gamma}
If Y-districting is optimal, then $r^b=0$ and $\gamma \in (1,\sqrt{1+\sqrt{3}}\approx 1.65]$.
\end{proposition}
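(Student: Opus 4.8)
The plan is to exploit the local structure of Y-districting near the bifurcation point $r^b$ via first-order (and second-order) optimality conditions. Since a Y-districting plan, by parts (1)--(4) of the definition, has segregated districts with $r^*(P) \in [r^b - \eps, r^b]$ and paired districts with $r^*(P) \in (r^b, r^b + \eps]$ whose voter types $s_1(r), s_2(r)$ merge continuously at $r^b$ (so $\lim_{r \downarrow r^b} s_1(r) = \lim_{r \downarrow r^b} s_2(r) =: s^\dagger$), the key observation is that at $r^b$ the plan is \emph{locally indifferent} between packing and pairing. By the symmetry assumption, a packed district $\delta_s$ has $r^*(\delta_s) = s$, so the segregated types just below the bifurcation are exactly the types $s = r^*(P) \leq r^b$ themselves, which forces the merge point to satisfy $s^\dagger = r^b$. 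I would first pin down that $r^b = 0$: the bifurcation point is where the marginal trade-off between segregating and pairing vanishes, and under the symmetry of $Q$ (hence of $G(r) = Q(\gamma r)$ about $0$), this balance point must sit at the center, $r^b = 0$. This should follow from the same marginal-effect comparison used in Proposition \ref{p:segNAD}, evaluated at the merge point where $s, s', s''$ all collapse to $r^b$.

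Next I would extract the inequality on $\gamma$. The strategy is to write the designer's objective contribution from the region near $r^b$ as a functional of the pairing functions $s_1, s_2$, impose that the Y-districting plan is a local optimum, and Taylor-expand. Using the twice-differentiability in part (4) of the definition, I would expand $s_1(r)$ and $s_2(r)$ about $r^b = 0$ to second order. The packing option and the pairing option must agree to first order at $r^b$ (this is what makes $r^b$ the bifurcation point and gives $r^b = 0$), so the \emph{discriminating} condition comes at \emph{second order}: for the mixed plan to be optimal rather than a pure POP or PMP, the second-order term governing whether it is locally beneficial to convert a packed district into a paired one (or vice versa) must vanish or have a definite sign. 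Converting the marginal trade-off equations from Proposition \ref{p:segNAD} into a condition on the second derivatives, and substituting the normal/symmetric structure through $G(r) = Q(\gamma r)$, should yield a scalar inequality purely in $\gamma$. Carrying out the expansion of $Q(s_i(r) - r) - \tfrac12$ and of $g, q$ at $0$, and collecting the $O((r-r^b)^2)$ coefficients, I expect the condition to reduce to a polynomial inequality whose admissible root interval is precisely $\gamma \in (1, \sqrt{1+\sqrt{3}}\,]$.

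The main obstacle, I expect, is correctly setting up the variational/local-optimality condition that distinguishes Y-districting from pure POP and pure PMP, and identifying the \emph{right} second-order object. Because POP, PMP, and Y-districting all share the same first-order behavior at $r^b$ (they are all pack-and-pair with a bifurcation), the separation between them is genuinely a second-order phenomenon, so the delicate step is tracking the curvature of $s_1$ and $s_2$ as they emerge from the merge point and relating the sign of the relevant second-order coefficient to the requirement that \emph{both} segregated and paired districts coexist on a neighborhood $[r^b - \eps, r^b + \eps]$ (part (1) of the definition). Concretely, I anticipate that the lower bound $\gamma > 1$ comes from requiring that pairing be locally improving just above $r^b$ (ruling out pure segregation / the convex regime of Proposition \ref{p:segNAD}), while the upper bound $\gamma \leq \sqrt{1+\sqrt{3}}$ comes from a competing second-order inequality ensuring the paired branch does not immediately dominate in a way that would force pure PMP. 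The algebra linking $\sqrt{1+\sqrt{3}}$ to the log-concavity/symmetry of the normal density through the ratios $G(s')-G(s))/g(s')$ versus $(Q(s-s') - \tfrac12)/q(0)$ is where the specific constant must be verified, and I would treat the normal case as the benchmark where these ratios admit closed forms via the Gaussian density and its derivatives.
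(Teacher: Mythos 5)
Your overall strategy---a local expansion at the bifurcation point, with $r^b$ pinned down as the inflection point of $G$---matches the paper's intuition, but as written the proposal has two genuine gaps. First, the claim that $r^b=0$ follows ``by symmetry'' does not go through: the designer's problem is not symmetric, since $F$ is arbitrary (nothing in the proposition assumes a symmetric type distribution). What the paper actually does is introduce a dual multiplier $\lambda(r)$ (from linear-programming duality, Lemma \ref{l:dual}), derive $\lambda$ and $\lambda'$ from the first-order conditions in the paired region, and impose the consistency requirement $d\lambda(r)/dr=\lambda'(r)$; applying L'Hopital's rule as $r\downarrow r^b$ yields $g'(r^b)q(0)=g(r^b)q'(0)$, hence $g'(r^b)=0$, and only then does symmetry of $Q$ (not of the problem) give $r^b=0$. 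Your marginal-effect heuristic from Proposition \ref{p:segNAD} gestures at this but provides no mechanism to turn it into an equation, because at the merge point the segregation/pairing comparison degenerates and must be resolved by higher-order terms.

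Second, and more seriously, your plan to extract the $\gamma$-bounds from ``the'' second-order discriminating condition cannot work, for a counting reason: the local behavior at the bifurcation involves \emph{two} unknown limiting slopes, $\lim_{r\downarrow r^b}s_1'(r)=1-\beta_1$ and $\lim_{r\downarrow r^b}s_2'(r)=1+\beta_2$, so one scalar condition in $\gamma$ cannot close the system. The paper needs three separate conditions: (i) the twice-differentiated consistency condition $d\lambda/dr=\lambda'$ (equation \eqref{e:Y1}, which involves $q''(0)$, i.e., third-order information about $Q$); (ii) the indifference of the mixed types between their packed district and the paired district, $Q(\gamma s_1(r))=Q(\gamma r)+\lambda(r)\bigl(Q(s_1(r)-r)-\tfrac12\bigr)$, expanded to \emph{third} order (equation \eqref{e:Y2}); and (iii) a mass-balance \emph{feasibility} inequality, $f(s_1(r))s_1'(r)\bigl(Q(s_1(r)-r)-\tfrac12\bigr)\geq f(s_2(r))s_2'(r)\bigl(Q(s_2(r)-r)-\tfrac12\bigr)$, reflecting that type $s_1(r)$ is split between packing and pairing while type $s_2(r)$ is only paired (equation \eqref{e:Y3}). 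This last condition is not an optimality condition at all, and it is entirely absent from your proposal; yet it is indispensable: the optimality equations \eqref{e:Y1}--\eqref{e:Y2} alone admit two solution branches, $(\beta_1,\beta_2)=\bigl(3\gamma^2/(2(\gamma^2-1)),\gamma^2/2\bigr)$ and $(\beta_1,\beta_2)=\bigl(1,(2\gamma^2+1)/3\bigr)$, and it is \eqref{e:Y3} that discards the second branch and, applied to the first, delivers the upper bound $\gamma\leq\sqrt{1+\sqrt{3}}$ (the lower bound $\gamma>1$ comes from $\beta_1\geq 1$ on that branch). Without the duality machinery and the feasibility inequality, the expansion you describe would produce one underdetermined equation and no way to reach the stated interval.
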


The proof of Proposition \ref{p:gamma} proceeds by deriving three necessary conditions for optimal districting to involve a bifurcation point at $r$ (which are based on linear programming duality), and then showing that these conditions imply that the bifurcation point must coincide with the inflection point, and the ratio of idiosyncratic to aggregate uncertainty must lie in an intermediate range. The first condition (equation \eqref{e:Y1} in Appendix \ref{a:B}) says that it is optimal to pair voter types just below and just above $r$. The second condition (equation \eqref{e:Y2}) says that it is optimal to segregate types just below $r$. The third condition (equation \eqref{e:Y3}) says that the proportions of favorable and unfavorable voters in each district $P$ with $r^*(P)=r'$ just above $r$ actually generate the desired cutoff $r'$. Intuitively, for it to be optimal to pair nearby voter types around $r$, $G$ must be weakly concave at $r$; and for it to be optimal to segregate voter types just below $r$, $G$ must be weakly convex at $r$. Hence, bifurcation can occur only at the inflection point of $G$, which by symmetry equals $0$. Moreover, if we take parameters where Y-districting is optimal and increase aggregate uncertainty, it eventually becomes optimal to always segregate voter types just below $0$ rather than pairing them with higher voter types, at which point optimal districting becomes PMP (with a bifurcation point below $0$). On the other hand, if we take parameters where Y-districting is optimal and decrease aggregate uncertainty, it eventually becomes optimal to always pair voter types just below $0$ with higher voter types rather than segregating them, at which point optimal districting becomes POP (with a bifurcation point above $0$). We discuss the mechanics of the transition from PMP to POP as $\gamma$ increases below.

If we take for granted that the condition $\gamma \in (1,1.65)$ is sufficient as well as necessary for Y-districting to be optimal, the above intuition suggests that:
\begin{enumerate}
\item PMP is optimal when $\gamma \leq 1$.
\item Y-districting is optimal when $\gamma \in (1,1.65)$.
\item POP is optimal when $\gamma \geq 1.65$.
\end{enumerate}
Figure \ref{fig:numeric} presents numerical solutions that verify this heuristic. In the figure, $Q$ is the standard normal distribution, $G$ is the centered normal distribution with standard deviation $\gamma^{-1}$, and $F$ is the uniform distribution on $[-1,1]$.\footnote{More precisely, we approximate the designer's problem by a finite-dimensional linear program and then solve it using Gurobi Optimizer. Our approximation specifies that $s$ is uniformly distributed on $\{-1,-.99,\ldots,.99, 1\}$ and that the designer is constrained to create districts $P$ satisfying $r^*(P)\in \{-1,-.99,\ldots,.99, 1\}$.} Voter types are on the $x$-axis, and the threshold shocks to win the districts to which each voter type is assigned are on the $y$-axis. (Thus, packed districts lie on the $45^\circ$ line, while paired districts straddle the $45^\circ$ line.) For mixed districting plans (i.e., Y-districting, the middle row of the figure), the shading intensity indicates the probability that a voter type is assigned to each district. We see that optimal districting takes exactly the conjectured form: PMP is optimal for $\gamma \in \{0.2,0.5,1\}$, Y-districting is optimal for $\gamma \in \{1.2,1.4,1.6\}$, and POP is optimal for $\gamma \in \{1.7,3,6\}$. The highest value of $\gamma$ in the figure, $\gamma=6$, is the value closest to our empirical estimates. When $\gamma =6$, POP remains optimal but now closely resembles $p$-segregation. Thus, for what we will see is the empirically relevant parameter range, $p$-segregation is approximately optimal. %

\begin{figure}
    \centering
    \begin{subfigure}{0.31\textwidth}
        \includegraphics[width=\linewidth]{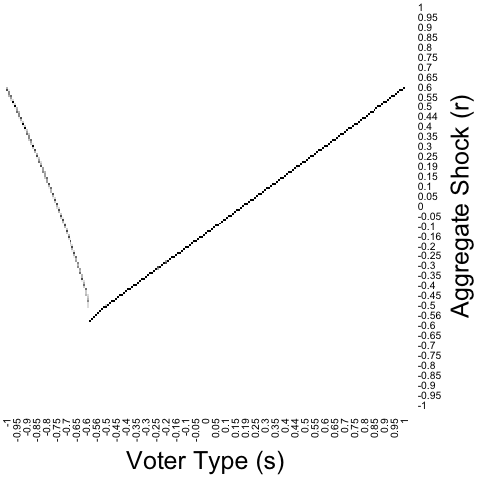}
        \caption{$\gamma=0.2$}
        \label{fig:panel1}
    \end{subfigure}
    \hspace{0.018\linewidth}
    \begin{subfigure}{0.31\textwidth}
        \includegraphics[width=\linewidth]{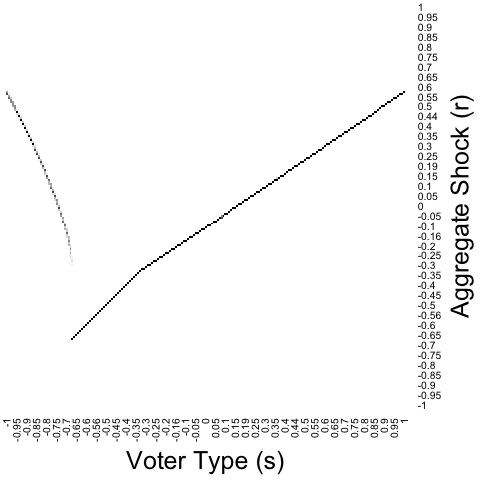}
        \caption{$\gamma=0.5$}
        \label{fig:panel2}
    \end{subfigure}
    \hspace{0.018\linewidth}
    \begin{subfigure}{0.31\textwidth}
        \includegraphics[width=\linewidth]{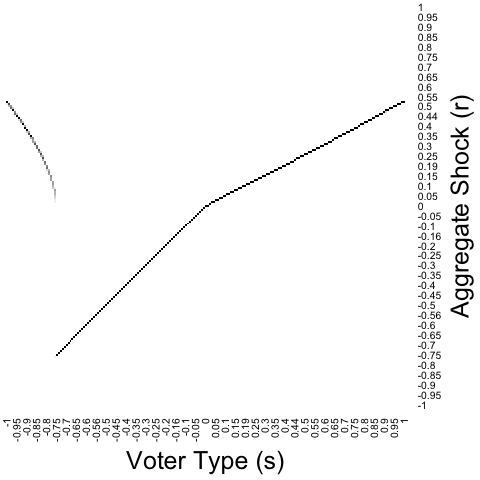}
        \caption{$\gamma=1$}
        \label{fig:panel3}
    \end{subfigure}
    
    \bigskip
    
    \begin{subfigure}{0.31\textwidth}
        \includegraphics[width=\linewidth]{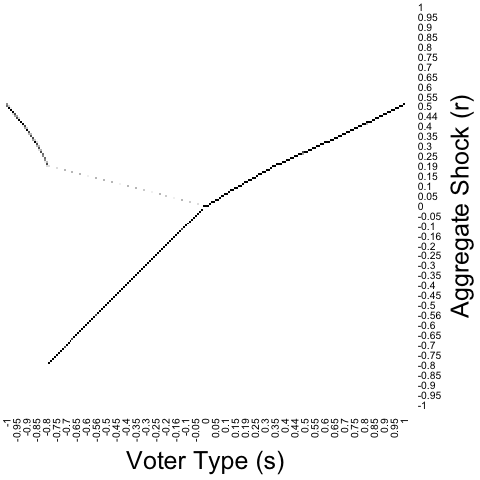}
        \caption{$\gamma=1.2$}
        \label{fig:panel4}
    \end{subfigure}
    \hspace{0.018\linewidth}
    \begin{subfigure}{0.31\textwidth}
        \includegraphics[width=\linewidth]{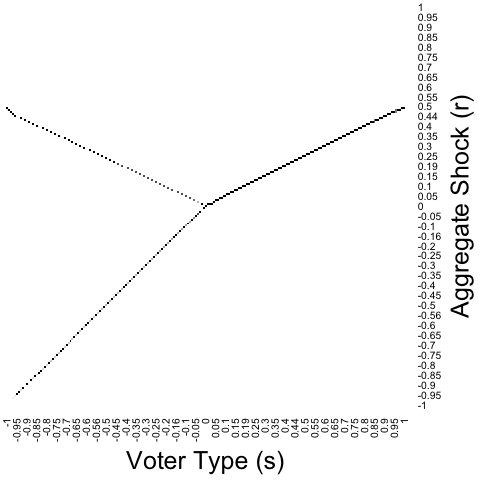}
        \caption{$\gamma=1.4$}
        \label{fig:panel5}
    \end{subfigure}
    \hspace{0.018\linewidth}
    \begin{subfigure}{0.31\textwidth}
        \includegraphics[width=\linewidth]{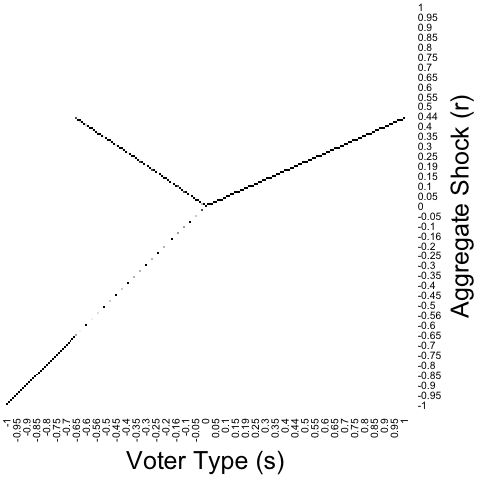}
        \caption{$\gamma=1.6$}
        \label{fig:panel6}
    \end{subfigure}
    
    \bigskip
    
    \begin{subfigure}{0.31\textwidth}
        \includegraphics[width=\linewidth]{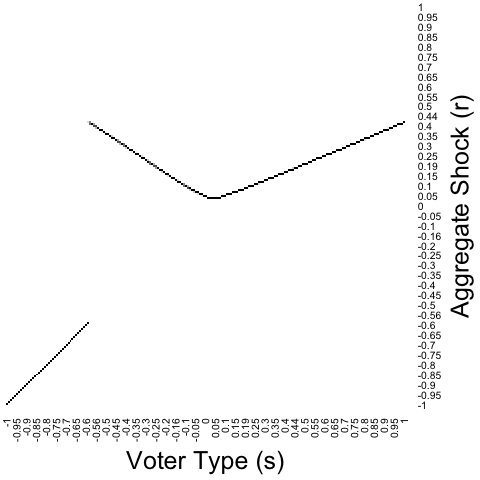}
        \caption{$\gamma=1.7$}
        \label{fig:panel7}
    \end{subfigure}
    \hspace{0.018\linewidth}
    \begin{subfigure}{0.31\textwidth}
        \includegraphics[width=\linewidth]{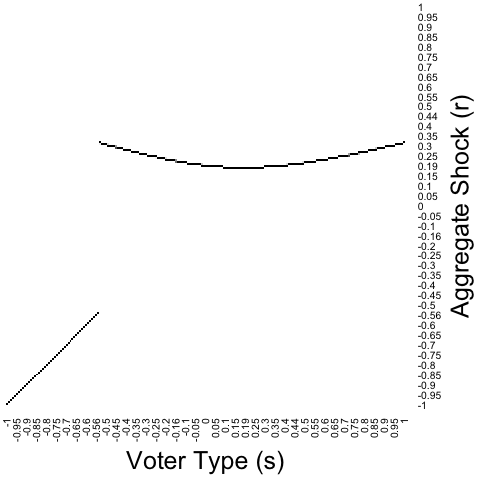}
        \caption{$\gamma=3$}
        \label{fig:panel8}
    \end{subfigure}
    \hspace{0.018\linewidth}
    \begin{subfigure}{0.31\textwidth}
        \includegraphics[width=\linewidth]{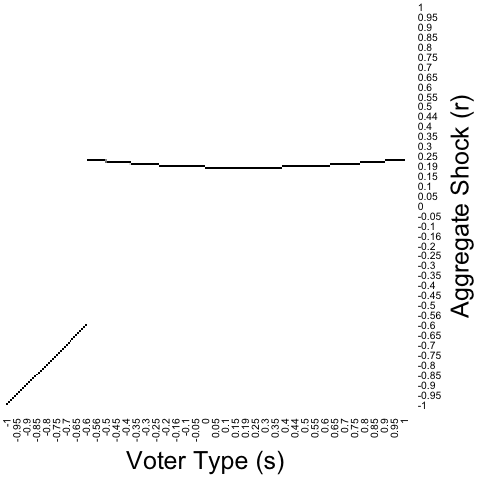}
        \caption{$\gamma=6$}
        \label{fig:panel9}
    \end{subfigure}
    
    \caption{Optimal Districting as $\gamma$ Varies}
    \caption*{\emph{Notes:} The optimal districting plan is PMP for $\gamma \in \{0.2,0.5,1\}$, Y-districting for $\gamma \in \{1.2,1.4,1.6\}$ (and, specifically, mixed PMP for $\gamma \in \{1.2,1.4\}$ and mixed POP for $\gamma = 1.6$), and POP for $\gamma \in \{1.7,3,6\}$. Our empirical estimates of $\gamma$ in Section \ref{s:estimation} are above $6$.}
    \label{fig:numeric}
\end{figure}

We can give an intuition for how and why optimal districting transitions from PMP to POP as $\gamma$ increases, as illustrated in Figure \ref{fig:numeric}. Along the way, we also mention some additional features of optimal PMP and POP plans, as well as describing the transition from mixed PMP to mixed POP within the Y-districting regime.

First, recall the extreme cases where $\gamma$ is close to $0$ (almost no idiosyncratic uncertainty) and where $\gamma$ is very large (almost no aggregate uncertainty). When $\gamma$ is close to $0$, PMP is optimal; moreover, when $F$ is symmetric about $0$ as in Figure \ref{fig:numeric}, almost all voters are paired, so optimal districting is approximately negative assortative, which implies that the bifurcation point is below $0$ and the range of values of $r^*(P)$ across paired districts $P$ is large.\footnote{Another property of optimal PMP plans is that the left arm of the ``Y'' is infinitely steep at the bifurcation point, i.e., $\lim_{r\downarrow r^b}s_1'(r)=0$.} When $\gamma$ is very large, POP is optimal; moreover, $p$-segregation is approximately optimal, which implies that the bifurcation point is above $0$ and the range of values of $r^*(P)$ across paired districts is very small.\footnote{Another property of optimal POP plans is that pairing at the bifurcation point is smooth, i.e., $\lim_{r\downarrow r^b}s_1'(r)=-\infty$ and $\lim_{r\downarrow r^b}s_2'(r)=\infty$.} Now, when $\gamma$ increases from $0$ toward $1$, the range of $r^*(P)$ across paired districts decreases (as the range of probable aggregate shocks decreases), and the proportion of packed districts increases. When $\gamma$ reaches $1$, it becomes optimal to pack voters with $s=0$, the inflection point of $G$. Since it cannot be optimal to pack voters above the inflection point, once $\gamma$ crosses $1$ it becomes optimal to pair voters with $s$ just above $0$ with a few slightly less favorable voters. At this point, districting takes the form of mixed PMP.

As $\gamma$ increases farther above $1$, the range of $r^*(P)$ across paired districts continues to decrease. This implies a flattening out of the right arm of the ``Y''---i.e., an increase in $s_2'$---which increases the mass of favorable voters assigned to districts where $r^*(P)$ is positive but small. To keep $r^*(P)$ small in these districts, this effect must be offset by also assigning more unfavorable voters to these districts, which is achieved by assigning more of the ``mixed'' unfavorable voters type to paired districts rather than packed districts, while the range of unfavorable voter types assigned to each interval of mixed districts actually decreases---i.e., the left arm of the Y gets steeper.\footnote{The proof of Proposition \ref{p:gamma} shows that, for all sufficiently small positive $r$, $|s_1'(r)|$ is decreasing in $\gamma$ (i.e., the left arm gets steeper) and $s_2'(r)$ is increasing in $\gamma$ (i.e., the right arm gets flatter). \label{f:gamma}} At some point, the right arm of the Y becomes flatter than the left arm so that the most extreme left-wing voters have no right-wing voters to match with, at which point these voters are segregated: this point marks the transition from mixed PMP to mixed POP, which occurs at $\gamma=\sqrt{2} \approx 1.41$ in the uniform case illustrated in Figure \ref{fig:numeric}.\footnote{The transition point $\gamma=\sqrt{2}$ is defined as the unique value of $\gamma$ at which $\lim_{r\downarrow 0} |s_1'(r)|=\lim_{r\downarrow 0} s_2'(r)$. The $\gamma=1.4$ panel in the figure illustrates a point just before this transition occurs.} As $\gamma$ increases further, more and more mixed unfavorable voters are assigned to paired districts, until all such voters are assigned to paired districts, at which point optimal districting becomes POP, and the bifurcation point becomes positive. This occurs when $\gamma \approx 1.65$. Finally, as $\gamma$ increases further beyond $1.65$, the range of $r^*(P)$ across paired districts continues to decrease, and the optimal POP plan approximates $p$-segregation more and more closely.

\begin{remark}[\textbf{Approximate Optimality of Traditional Pack-and-Crack}]
We conclude this setting by noting that, for what we will see is the empirically-relevant range of parameters, the optimal POP plan closely resembles $p$-segregation, and in fact both $p$-segregation and traditional pack-and-crack districting are approximately optimal. Our central estimates for $\gamma$ in Section \ref{s:estimation} are above 6, and for most states are above 10. Figure \ref{fig:numeric} shows that, for these parameters, POP is optimal, and the optimal POP plan closely resembles $p$-segregation. Moreover, for the parameters used in Figure \ref{fig:numeric} (where the standard deviation of $s$ is fixed at what we will see is a realistic level, while $\gamma^{-1}$, the standard deviation of $r$, varies), we have calculated that the designer's expected seat share under the optimal districting plan never exceeds his expected seat share under the optimal traditional pack-and-crack plan by more than $1.4\%$ for any value of $\gamma$, or by more than $0.1\%$ for any value of $\gamma$ above $5$.\footnote{\citeauthor{FH} (\citeyear*{FH}, p. 129) and  \citeauthor{CH} (\citeyear*{CH} p. 571) present an example with large aggregate uncertainty ($\gamma=1/\sqrt{2}\approx 0.71$) \emph{and} a large standard deviation of $s$ (equal to $3$, while our empirical estimate of this parameter is $0.63$) where the designer's expected seat share is over $20\%$ greater under matching slices than under traditional pack-and-crack. This shows that, when the standard deviations of \emph{both} $r$ and $s$ are (unrealistically) large, the advantage of optimal districting over traditional pack-and-crack can be significantly larger than the $1.4\%$ upper bound that we obtain by varying the standard deviation of $r$ while fixing the standard deviation of $s$ at a realistic level.} For example, when $\gamma=6$ the optimal expected seat share is approximately $.7087$, while the optimal traditional pack-and-crack plan gives an expected seat share of approximately $.7082$.\footnote{When $\gamma=2$ (an unrealistic low value), the corresponding expected seat shares are $.5392$ and $.5357$. When $\gamma=15$ (close to our central estimate), they are $.8488$ and $.8485$.} An intuition for this result is that in practice aggregate uncertainty is small (relative to both idiosyncratic uncertainty and the range of voter/precinct types $s$), so the no-aggregate uncertainty case considered in Section \ref{s:noaggr}---where traditional pack-and-crack is exactly optimal---is fairly realistic.
\end{remark}

\section{Estimation}\label{s:estimation}

We have argued that the form of optimal districting depends on a comparison of the amount of aggregate and idiosyncratic uncertainty facing the designer, and in particular on the parameter $\gamma$ introduced in the previous section (i.e., the ratio of idiosyncratic to aggregate uncertainty, or equivalently the inverse standard deviation of the aggregate shock $r$, recalling that the the standard deviation of the idiosyncratic shocks $t$ is normalized to $1$). We now estimate $\gamma$ using precinct-level returns from recent US House elections, while also providing empirical support for some of our key theoretical assumptions. We first describe our data and empirical model, then present some simple summary statistics and plots, and finally estimate $\gamma$.

\subsection{Data and Empirical Model} Our data are the precinct-level returns for the US House elections in 2016, 2018, and 2020, which were recently standardized and made freely available by \citet{Baltz}. For each precinct $n$ and election $t \in \{2016,2018,2020\}$, we observe the total two-party vote $k_{nt}$ and the share of the two-party vote for the Republican candidate $v_{nt}$.\footnote{A ``precinct'' is the smallest election-reporting unit in a state, which typically corresponds to a geographic area where all voters vote at the same polling place. Maine and New Jersey report election returns only at the township level, so for these states $n$ indexes townships rather than precincts. Also, for some elections where a nominally third-party candidate runs in place of an official Democratic or Republican candidate, we manually re-label this candidate as a Democrat or Republican. For example, in New York, we re-assign Working Families Party candidates as Democrats and re-assign Conservative Party candidates as Republicans.} The data are a repeated cross-section rather than a panel, because there is no general way to match precincts across elections (for example, because precinct boundaries change frequently; \citealt{Baltz}, p. 6). We drop all districts with an uncontested House race in any of 2016, 2018, or 2020 (which drops 25\% of all districts).\footnote{Keeping these districts would bias our estimate of $\gamma$, because the relevant vote shares are for contested elections, and if these districts were contested their vote shares would be different from 0 or 1. Keeping a district with one or two uncontested elections only for the elections where it is contested would also bias our estimate of $\gamma$, by distorting the estimated swing across elections. Dropping uncontested districts does likely bias our estimate of the distribution $F$ of voter types $s$, as uncontested districts are presumably more extreme; however, this bias is irrelevant for our main goal of estimating $\gamma$.} Moreover, for each of the three elections, we drop precincts where there are fewer than 50 total votes (which drops .13\% of all votes) or where the Republican vote share is 0 or 1 (which drops an additional .015\% of votes). 

To take the model to these data, we assume that the designer has voter information at the precinct level. This is a reasonable assumption, since this is the finest level at which election data is available. As a voter type $s$ in the model captures the information available to the designer, we therefore assume that all voters in a given precinct $n$ have the same type $s_n$. We will also assume that precincts are relatively large (in the data, the mean precinct vote count is 789 with standard deviation 1,399, after dropping precincts with fewer than 50 total votes or a 0 or 1 vote share), and idiosyncratic taste shocks are normally distributed, so that the designer's vote share in precinct $n$ in election $t$ is given by \[v(s_n,r_t)=\Phi \left( {s_n - r_t} \right),\] where $\Phi$ is the standard normal cdf. %

While our estimation relies on the assumption that taste shocks are normally distributed, it is important to note that our estimates are quite insensitive to this assumption: because we will find that $\gamma$ is very large, the taste shock distribution is approximately uniform over the relevant range, so specifying any smooth taste shock distribution leaves our estimates almost unchanged.

\subsection{Descriptive Figures and Summary Statistics} We first present a histogram (Figure \ref{fig:HistogramPrecinct}(a)) showing the number of voters in the United States who live in a precinct with Republican vote share $v$, with bin breaks  $\{0,.05,\ldots,.95,1\}$, averaging over elections $t \in \{2016,2018,2020\}$. The histogram shows that the distribution of $v_{nt}$ is unimodal, with a large majority (74\%) of the mass on $v \in [.25,.75]$. This pattern has two simple, but important, implications for our model. First, the distribution of voter/precinct types is far from bimodal: there is a continuum of types, with most mass ``toward the middle.'' A designer choosing how to partition precincts into districts must thus decide how to allocate a continuum of types, as in our model.\footnote{In practice, the smallest ``districtable unit'' is not a precinct but a \emph{census block}, which is the smallest geographic unit for which the US Census tabulates complete data. However, the number of voters in a precinct or a census block are roughly similar (typically around 1,000, albeit with fairly wide variation), so we believe there is little loss in proceeding as if designers partition precincts rather than census blocks.} Second, idiosyncratic uncertainty appears to be large relative to aggregate uncertainty. To see this, note that if idiosyncratic uncertainty were extremely large, Figure \ref{fig:HistogramPrecinct}(a) would show a degenerate distribution at $v=1/2$, while if aggregate uncertainty were extremely large, it would show a bimodal distribution with all mass at $0$ and $1$. The former case is a better approximation, as the actual distribution in Figure \ref{fig:HistogramPrecinct}(a) is unimodal, with 74\% of the mass on $v \in [.25,.75]$. While we will quantitatively estimate $\gamma$ in the next subsection, this observation already suggests what we will find, which is that $\gamma$ is much greater than $1$.

\begin{figure}[t]
  \centering
  \begin{subfigure}[b]{0.49\textwidth}
    \includegraphics[width=\textwidth]{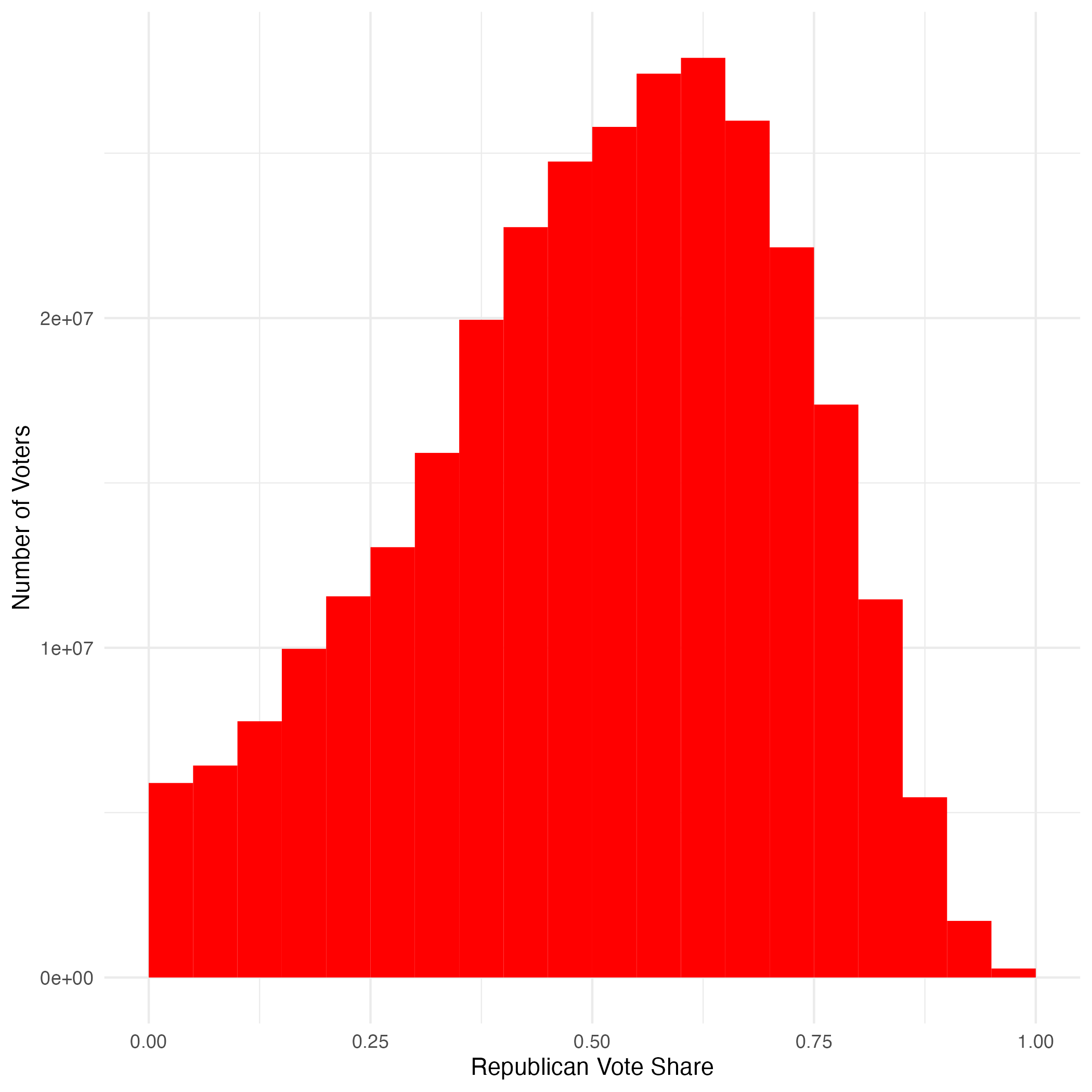}
    \captionsetup{labelformat=empty}
    \caption{(a) Precinct Vote Shares}
    \label{fig:HistogramPrecinct}
  \end{subfigure}
  \hfill
  \begin{subfigure}[b]{0.49\textwidth}
    \includegraphics[width=\textwidth]{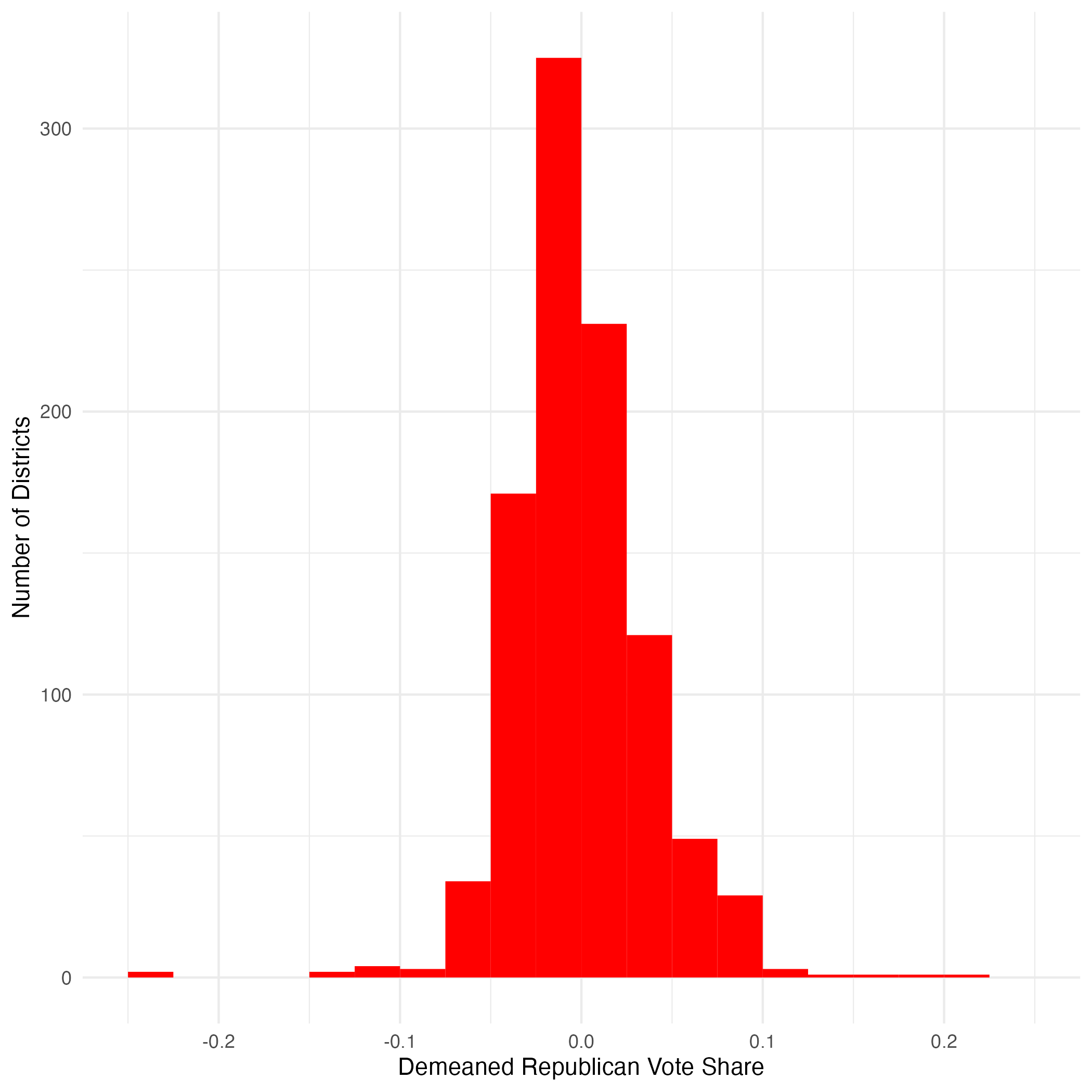}
    \captionsetup{labelformat=empty}
    \caption{(b) District Vote Swings}
    \label{fig:HistogramDistrict}
  \end{subfigure}
  \caption{Distributions of Precinct Vote Shares and District Vote Swings}
  \label{fig:Distributions}
\end{figure}

Next we present another histogram (Figure \ref{fig:HistogramDistrict}(b)), which shows the number of \emph{(district, election)} pairs where the district-wide Republican vote share deviated from its mean over the three elections we consider by $x$, with bin breaks $\{-.25,-.225,\ldots,.225,.25\}$.\footnote{This histogram is compiled at the district level because precincts are not matched across elections.} This histogram gives another way of showing that aggregate shocks are small: the distribution is centrally unimodal, and most of the mass (57\%) is on $x \in [-.025,.025]$. In contrast, if aggregate shocks were large, we would again have a bimodal distribution with all mass far from $0$.

Finally, we consider the empirical distribution of vote shares $v_{nt}$ across precincts $n$ (weighted by the number of votes in each precinct), for each election $t$. This is shown in Figure \ref{fig:CDFVoteShare}(a). The S-shaped curve for each election again indicates that most precincts have vote shares relatively close to $1/2$. The ordering of the curves (except for the lowest-vote-share precincts, discussed below) reflects the fact that, among the 2016, 2018, and 2020 elections, 2018 was the best year for Democrats, 2016 was the best year for Republicans, and 2020 was in the middle.

\begin{figure}[t]
  \centering
  \begin{subfigure}[b]{0.49\textwidth}
  \includegraphics[width=\textwidth]{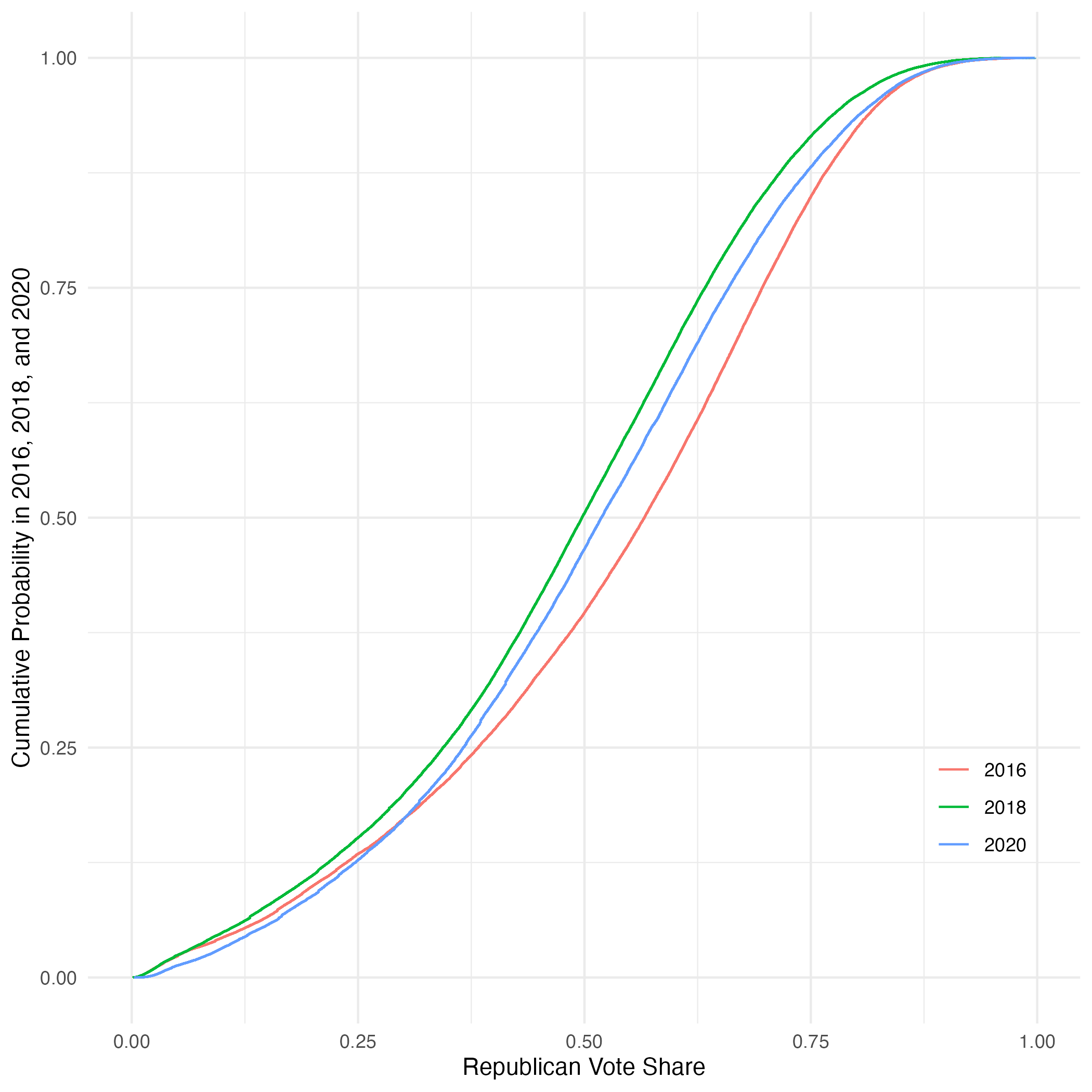}
  \captionsetup{labelformat=empty}
  \caption{(a) CDF}
  \label{fig:CDFVoteShare}	
  \end{subfigure}
 \hfill
  \begin{subfigure}[b]{0.49\textwidth}
  \includegraphics[width=\textwidth]{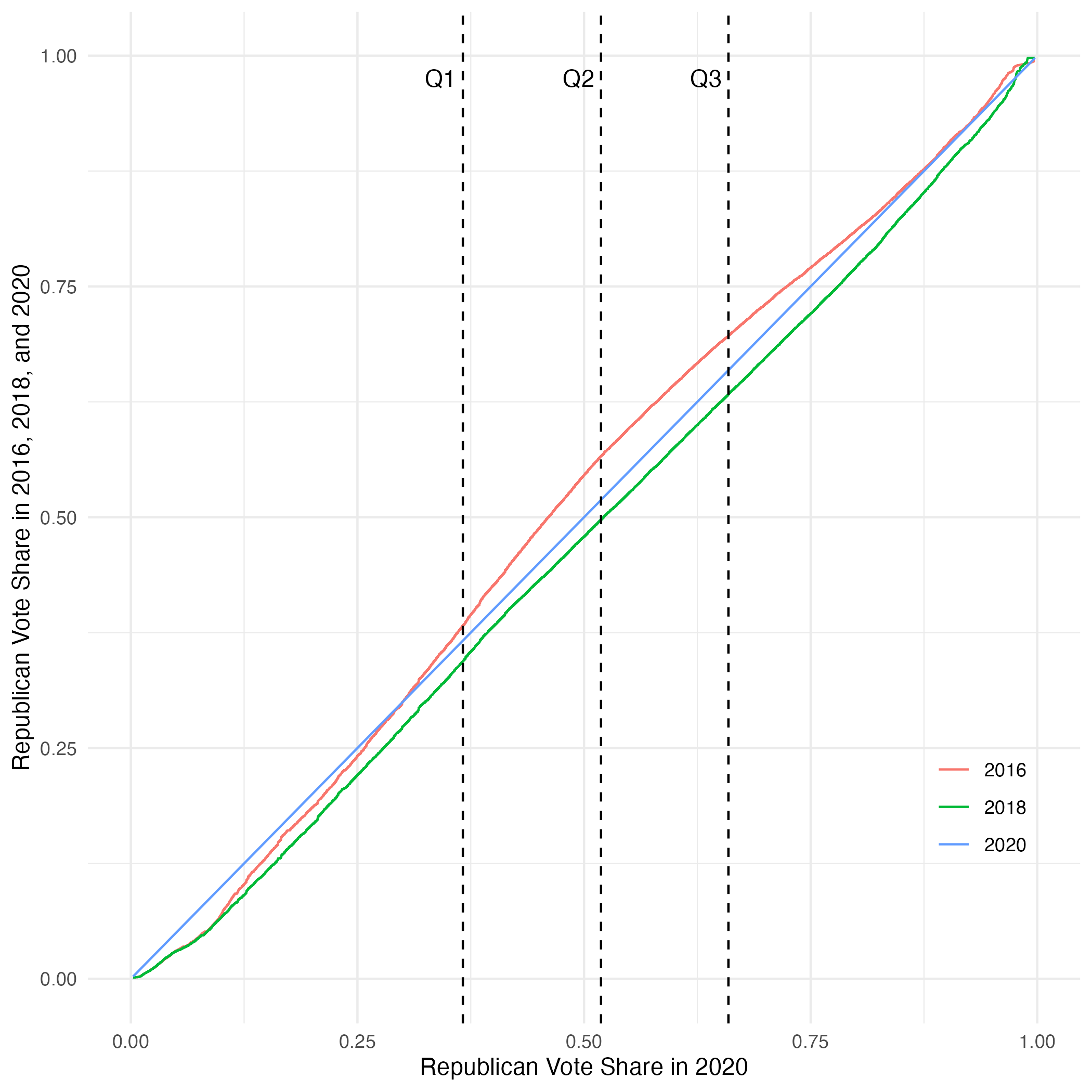}
  \captionsetup{labelformat=empty}
  \caption{(b) Normalized CDF}
  \label{fig:InverseCDF}
  \end{subfigure}  
  \caption{CDF and Normalized CDF of Precinct Vote Shares in 2016, 2018, and 2020}
  \caption*{\emph{Notes:} The left panel displays the empirical cdf of the precinct vote share in 2016, 2018, and 2020, which we denote by $J_t (v)$ for $t \in \{2016,2018,2020\}$. The right panel displays the curves $J^{-1}_{2016} (J_{2020} (v))$, $J^{-1}_{2018} (J_{2020} (v))$, and $J^{-1}_{2020} (J_{2020} (v))=v$, as well as the first, second, and third quantiles of $J_{2020} (v)$.}
\end{figure}

We can use these curves to assess the realism of our key assumption that moderates are swingier than extremists (Assumption 1). Figure \ref{fig:InverseCDF}(b) transforms Figure \ref{fig:CDFVoteShare}(a) by normalizing by the empirical vote-share distribution in 2020. Thus, in Figure \ref{fig:InverseCDF}(b) the blue curve is the $45^\circ$ line; the red curve is the 2016 Republican vote share for a precinct with a given 2020 Republican vote share; and the green curve is the analogous curve for 2018.\footnote{Technically, since we cannot match precincts across elections, the red curve is the 2016 Republican vote share for a precinct \emph{at the same quantile of the vote share distribution} as a precinct with a given 2020 Republic vote share, and similarly for the green curve.} Under our assumptions---including Assumption 1---the red curve should be concave and everywhere above the blue curve, and the green curve should be convex and everywhere below the blue curve, where these concavity/convexity properties reflect Assumption 1. Figure \ref{fig:InverseCDF} shows that this is not exactly true in our data, because the green and red curves are ``too low'' for the left-most districts (a small minority of districts, lying well into the lowest quartile of the vote-share distribution, as indicated in the figure). We believe that this small deviation from Assumption 1 likely reflects an unusually strong performance by Republicans in urban districts in 2020, largely due to a well-documented shift in the Hispanic vote toward Republicans (e.g., \citealt{Igielnik}, \citealt{Kolko}). Such demographic-specific shocks are, of course, outside our model, but could be explored in future work. Overall, we believe Figure \ref{fig:InverseCDF} is well-explained by a combination of our assumptions (including Assumption 1) and an unexpected shift toward Republicans in urban areas in 2020. %

\subsection{Estimates for $\gamma$} We now estimate the key parameter $\gamma$ under the assumption that aggregate and idiosyncratic shocks are both normally distributed. Since districting plans in the US are drawn at the state level, we estimate $\gamma$ separately for each US state. Without loss, we normalize the variance of the taste shock distribution to $1$, so that $Q=\Phi$, the standard normal cdf, and the aggregate shock distribution $G$ is given by a centered normal cdf with standard deviation $\gamma^{-1}$. Recall that our theoretical and numerical results in Section \ref{s:Y} indicate that PMP is optimal if $\gamma\leq 1$, Y-districting is optimal if $\gamma \in (1,1.65)$, and POP is optimal if $\gamma \geq1.65$. Thus, a key question of interest is which of these three regions contains our estimate of $\gamma$.

We estimate $\gamma$ by method of moments. Recall that $v_{nt}$ is the Republican share of the two-party vote in precinct $n$ and election $t$. Let $w_{nt}=\Phi^{-1}(v_{nt})$, the corresponding quantile of the standard normal distribution. Next, define
\[w_t = \frac{\sum_n k_{nt}w_{nt}}{\sum_n k_{nt}} \quad \text{and} \quad w=\frac{\sum_t w_t}{T},\] where the sums over $n$ range over all precincts in a given state. Thus, $w_t$ is the average value of $w_{nt}$  over precincts in the state, weighted by the number of votes in each precinct; and $w$ is the average value of $w_t$ over elections $t$. It is then easy to show that an unbiased and consistent estimator of $\gamma$ is given by
\[\wh{\gamma}={1}\bigg/ {\sqrt{\frac{\sum_t (w_t -w)^2}{T-1}} },\]
and, for any $\alpha \in (0,1)$, a $1-\alpha$ confidence interval for $\gamma$ is given by
\[
\sqrt{\frac{\chi^2_{T-1}(\alpha/2)}{T-1}}\wh \gamma \leq \gamma\leq \sqrt{\frac{\chi^2_{T-1}(1-\alpha/2)}{T-1}}\wh \gamma.
\]

\begin{figure}
  \centering
  \includegraphics[width=0.8\textwidth]{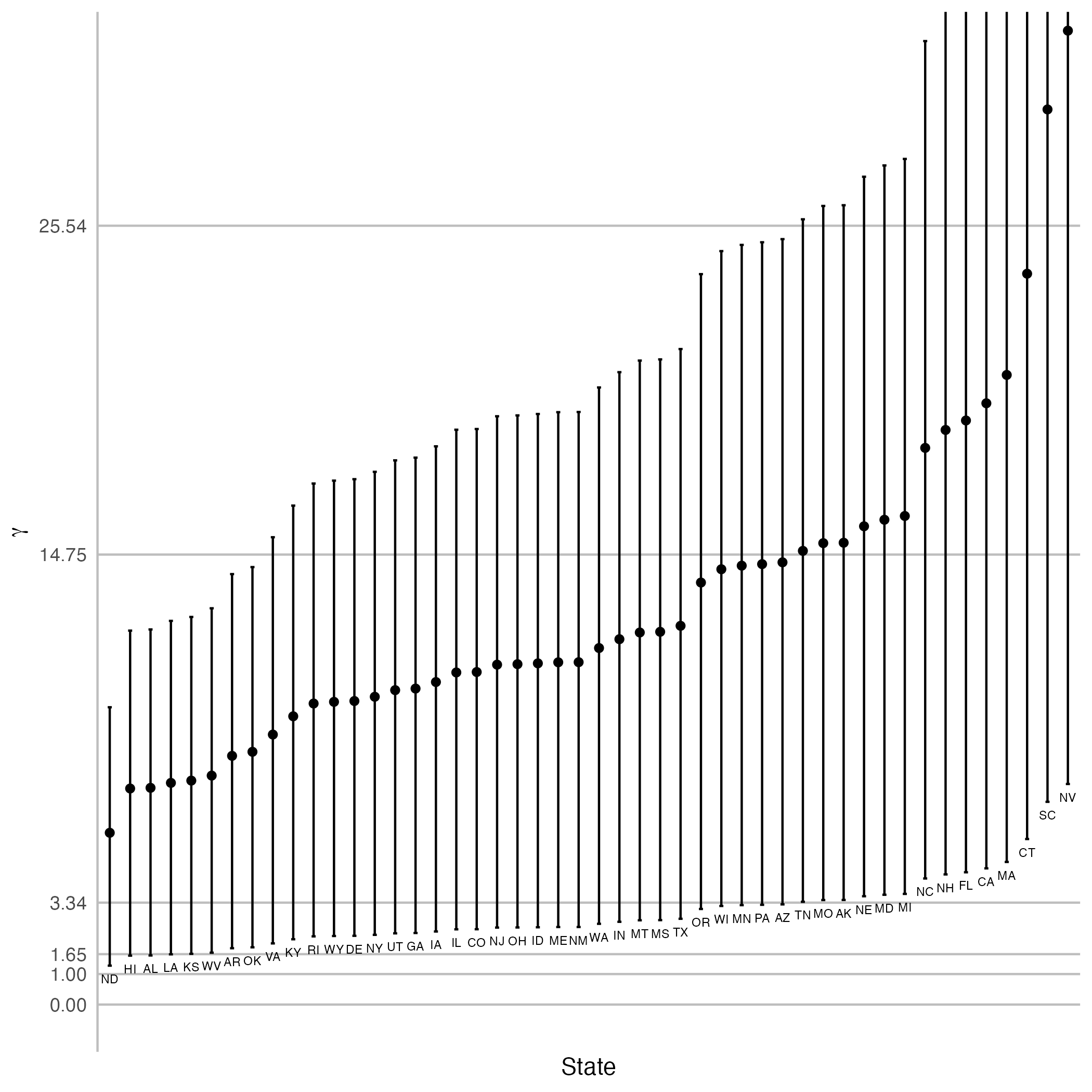}
  \caption{Point Estimates and 90\% Confidence Intervals for $\gamma$ by State}
    \caption*{\emph{Notes:} POP is optimal for $\gamma\geq 1.65$. The point estimate and 90\% confidence interval for $\gamma$  for the entire US are 14.75 and [3.34, 25.54].}
  \label{fig:Sigma}
\end{figure}

Figure \ref{fig:Sigma} displays the results of this estimation. The figure shows the 90\% confidence interval for $\gamma$ for each state. The confidence intervals are extremely wide, because we only have data from three elections, i.e., $T=3$. However, it is clear that the central estimates for $\gamma$, as well as the lower bound of the 90\% confidence interval for almost all states, is well above the critical value of 1.65. The lowest estimate for $\gamma$ for any state is 5.63, the mean estimate for $\gamma$ (weighted by the number of districts in each state) is 14.32, and the corresponding estimate when we estimate $\gamma$ for the US as a whole is 14.75. These estimates are all far above the critical value of 1.65. Moreover, even with $T=3$, the lower endpoint of the 90\% confidence interval is above 1.65 for all states except North Dakota (where the lower endpoint is 1.28), Hawaii (1.6), Alabama (1.61), and Louisiana (1.65). We expect that if we expanded our dataset to include the returns from the 2012 and 2014 elections (thus covering all five congressional elections held under the 2010 districting plans), the lower endpoints of the 90\% confidence interval would exceed 1.65 for these states as well.\footnote{Precinct-level returns for 2012 and 2014 have been compiled by \citet{Ansolabehere} but are less complete and less standardized than the \citet{Baltz} data we use, which only cover 2016, 2018, and 2020.} The data thus clearly indicate that $\gamma$ is well above 1.65 in practice, at least for the vast majority of states, and probably for all of them. Together with the results in Section \ref{s:Y}, this provides strong evidence that optimal gerrymandering is given by POP for realistic parameters.\footnote{While it is not relevant for determining the qualitative form of optimal districting, we can also estimate the distribution $F$ of voter types $s$. At the country-level, the mean estimate of $F$ (calculated as $w$) is very close to $0$, and the standard deviation estimate of $F$ (calculated as $\sqrt{\sum_{n,t} k_{nt}(w_{nt}-w_t)^2/\sum_{n,t} k_{nt}}$) is $0.63$. These values are similar to those in Figure \ref{fig:numeric}. Note however that these estimates may be biased by dropping uncontested elections (unlike our estimates of $\gamma$, which remain unbiased after dropping any set of districts). We also note that the correlation between our estimates of $\gamma$ and the standard deviation of $F$ at the state level (weighted by the number of districts in each state) is small ($-.28$), which is consistent with varying $\gamma$ in $G(r):=Q(\gamma r)$ for fixed $Q$ and $F$  as in Figure \ref{fig:numeric}. In contrast, for an alternative normalization with $Q(t):= G(t/\gamma)$ for fixed $G$ and $F$, the weighted correlation between our estimates of $\gamma$ and the standard deviation of $F$ is large (.79), which would be inconsistent with varying $\gamma$ for fixed $G$ and $F$.}

Our estimates for $\gamma$ are so high that not only is POP clearly optimal rather than PMP, but the optimal POP plan is very similar to $p$-segregation, and both $p$-segregation and traditional pack-and-crack districting are approximately optimal. (Recall Figure \ref{fig:numeric}, where POP is already close to $p$-segregation when $\gamma =6$.) This result can rationalize why actual gerrymandered districting plans usually resemble $p$-segregation or traditional pack-and-crack, rather than POP.

\section{Discussion: Why Does the Form of Gerrymandering Matter?} \label{s:discussion}

Gerrymandering has been a major concern in American politics for many years and has been tied to several important political and legal issues. In this section, we briefly discuss potential implications of our results on the form of optimal partisan gerrymandering---in particular, whether gerrymanderers optimally pack opponents or moderates---for some of these broader issues. We focus on two areas: implications for how regulations and restrictions on districting affect partisan representation, and implications for how gerrymandering affects political competition and polarization.

\subsection{Effects of Districting Restrictions on Partisan Representation}

American state and federal election laws have long recognized potential harms associated with gerrymandering and have therefore restricted gerrymandering in various ways. At the federal level, the key laws are the Equal Protection Clause of the Fourteenth Amendment and the Voting Rights Act of 1965. These laws have been interpreted as not only prohibitting adverse racial gerrymandering, but also as affirmatively requiring states to create electoral districts where racial or ethnic minority voters form either a majority (a so-called ``majority-minority district'') or a large enough minority so as to have a strong opportunity to elect their candidate of choice, perhaps in coalition with some majority voters (often called a ``minority opportunity district'') (e.g., \citealt{Canon}). The creation of such districts played a significant role in increasing Black representation in state legislatures and the US Congress from the 1970's onward, especially in the South (\citealt{GrofmanHandley}, \citealt{CH}). However, the overall \emph{partisan} impact of majority-minority and minority opportunity districts has long been a hotly contested issue, with some observers arguing that these districts effectively pack strong Democratic supporters and thus resemble a component of a Republican-optimal districting plan. This issue came to a head following the 1994 Republican takeover of the US House, which many journalists and political scientists blamed in part on the creation of majority-minority districts in the 1990 redistricting cycle; however, other observers have disputed this narrative (see, e.g., \citealt{CH} and references therein, \citealt{Cameron}, \citealt{Washington}).

Following \citet{CH}, we argue that whether a requirement to create majority-minority or minority opportunity districts is likely to increase or decrease overall Republican representation hinges to a large degree on whether optimal partisan gerrymandering packs opponents or moderates. The convential view throughout the 1990's (what Cox and Holden call the ``pack-and-crack consensus'') was that optimal gerrymandering packs opponents, and hence that a requirement to create majority-minority districts that pack strong Democratic supporters may well increase overall Republican representation.\footnote{Minority opportunity districts may or may not raise similar issues, depending on the share of strong Democratic supporters in these districts \citep{Lublin}.} Based on the analysis of \citet{FH}, \citet{CH} challenge this consensus by arguing that optimal districting is given by PMP, and thus packs moderates rather than opponents. Since a PMP plan does not create districts packed with strong Democratic supporters, \citeauthor{CH} argue that a requirement to create such districts precludes PMP and is therefore likely to reduce overall Republican representation.

We agree with \citeauthor{CH} that whether optimal districting packs opponents or moderates is likely to be an important determinant of whether a requirement to create majority-minority or minority opportunity districts increases or decreases overall Republican representation. However, \citeauthor{CH}'s argument that PMP is optimal in practice rests on the implicit assumption that the low-idiosyncratic-uncertainty case studied by \citet{FH} is representative. For example, \citeauthor{CH} write, ``In a world with diverse voter types, however, there is no plausible distribution of African American voters that would make it optimal for Republican redistricting authorities to create districts in which African Americans make up a supermajority of voters. Within the model, packing one's opponents is never the optimal strategy,'' (p. 574). Our results instead indicate that, empirically, idiosyncratic uncertainty is much larger than aggregate uncertainty, and that in this case POP is optimal (and traditional pack-and-crack districting is approximately optimal), so Republicans do benefit from packing strong Democratic voters. Thus, by analyzing a general model that allows diverse voter types but does not restrict the relative amounts of idiosyncratic and aggregate uncertainty, we can let the data determine which form of districting plan is optimal in practice, and we find that POP is optimal for realistic parameters. Overall, our results support the traditional ``pack-and-crack consensus''---Republicans benefit from packing strong Democratic voters---over \citeauthor{CH}'s challenge based on the optimality of packing moderates for certain parameter values.

Of course, even if POP is optimal, so that packing strong Democratic voters \emph{in the Republican-optimal manner} benefits Republicans, whether a requirement to create majority-minority or minority opportunity districts benefits Republicans in practice is an empirical question. A requirement to create a large numbers of districts with relatively small Democratic majorities can obviously hurt Republicans. Moreover, as emphasized by \citet{Shotts01}, \emph{any} constraint on districting weakly hurts Republicans in states where Republicans control districting. In general, we believe that understanding the form of partisan-optimal unconstrained districting is useful for assessing the likely impact of restrictions on districting, such as those imposed by Voting Rights Act, but as a complement to empirical analysis rather than a substitute.

\subsection{Effects of Gerrymandering on Political Competition and Polarization}

A second area of debate concerns the impact of gerrymandering on the intensity of electoral competition (e.g., the fraction of ``competitive'' districts or the extent of incumbency advantage) and political polarization. Popular discourse often blames gerrymandering for reducing competition and increasing polarization. While the scholarly literature is generally skeptical of the claim that gerrymandering plays a large role in explaining overall secular trends in competition and polarization (e.g., \citealt{GelmanKing}, \citealt{Abramowitz}, \citealt{MPR}, \citealt{FH09}), some work does find such effects (e.g., \citealt{Cottrell}, \citealt{Kenny}), and the issue remains contested.

Regardless of the overall effects of gerrymandering on competition and polarization, the nature of these effects likely depends on the form that gerrymandering takes. Roughly speaking, with a right-wing designer, POP (as well as $p$-segregation and traditional pack-and-crack) create a few strongly left-leaning districts and many slightly right-leaning districts, with a ``gap'' between the left-leaning and right-leaning districts. Formally, under POP, there is always a gap between the highest value of $r^* (P)$ for a district $P$ in the interval of segregated voter types and the lowest value of $r^* (P)$ for a district $P$ in the interval of paired types (see, e.g., the last three panels in Figure \ref{fig:numeric}). POP also involves relatively low polarization within each district, since the lowest voter types in cracked districts are ``moderates'' rather than extreme left-wingers. In contrast, PMP creates a continuum of districts ranging from left-leaning to right-leaning---formally, the set $\{ r:r=r^* (P) \text{ for some } P\in \supp (\mathcal{H}) \}$ is an interval (see, e.g., the first three panels in Figure \ref{fig:numeric})---with less extreme left-leaning districts than under POP. PMP also involves greater within-district polarization than POP, at least in the sense that the maximum range of voter types that are pooled together under PMP is greater than under POP (since this range is as large as possible under PMP, but is strictly smaller under POP).

Our model does not encompass any endogenous political responses to districting, such as effects of districting on which politicians run for office and on what platforms. With this caveat in mind, we can draw some tentative implications of the above features of POP (or $p$-segregation or traditional pack-and-crack) and PMP for political competition and polarization. First, the fact that the distribution of threshold shocks $r^* (P)$ has a gap under POP but not under PMP suggests that pack-and-crack plans may lead to a more polarized legislature, where the packed districts elect left-wing representatives, and the cracked districts elect right-leaning representatives. The possibility that packing opponents can increase polarization in this manner is a long-standing political and legal concern (see, e.g., \citealt{CH}, p. 595). \citet{CK}, \citet{BesleyPreston}, and \citet{Bracco} develop models with this feature. In contrast, PMP may lead to a less polarized legislature. Second, POP may lead to a larger number of ``uncompetitive,'' far-left districts. Creating uncompetitive districts is usually viewed as a socially undesirable feature of a districting plan, but see \citet{Buchler} and \citet{Brunell} for opposing views. Finally, lower within-district polarization under POP may be socially desirable if voters benefit from being ideologically close to their representative, as in \citet{BesleyPreston} and \citet{GPS}. These and other implications of optimal districting for political processes and outcomes could be studied more fully in a model that endogenized additional aspects of political competition beyond districting. This is a promising direction for future research.

\section{Conclusion} \label{s:conclusion}

This paper has developed a simple and general model of optimal partisan gerrymandering. Our main message has four parts. First, pack-and-pair districting---a generalization of traditional packing-and-cracking---is typically optimal for the gerrymanderer. Second, the optimal form of pack-and-pair depends on the relative amounts of aggregate and idiosyncratic uncertainty facing the gerrymanderer: opposing voters are packed when idiosyncratic uncertainty dominates, while moderate voters are packed when aggregate uncertainty dominates. Third, empirically, idiosyncratic uncertainty dominates in practice, so we expect pack-opponents-and-pair (POP) districting to be optimal. This finding also establishes that the relevant parameter range for future research on gerrymandering (and electoral competition more generally) is that where idiosyncratic uncertainty is much larger than aggregate uncertainty. Fourth, estimated idiosyncratic uncertainty is so large that the optimal POP plan closely resembles a simpler pack-opponents-and-pool plan, where more favorable voters are all pooled together, rather than being paired as they are under POP; moreover, traditional pack-and-crack districting, where less favorable voters are also all pooled together, rather than being segregated, is also approximately optimal. This final observation can rationalize the use of traditional pack-and-crack districting plans in practice.

Methodologically, we develop and exploit a tight connection between gerrymandering and information design. We show that a general model of partisan gerrymandering is equivalent to a general Bayesian persuasion problem where the state of the world and the receiver's action are both one-dimensional and the sender's preferences are state-independent. This common framework nests the important prior contributions of \citet{OG}, \citet{FH}, and \citet{GP}, and facilitates a more general and realistic analysis that allows diverse voter types and non-linear vote swings without restricting the relative amounts of aggregate and idiosyncratic uncertainty.

We hope our model can inform future research on various aspects of redistricting. We mention a few directions for future research.

First, we have assumed that the designer maximizes his party's expected seat share. It may be more realistic to assume that the designer's utility is non-linear in his party's seat share, for example because he puts a premium on winning a majority of seats. We examined this case in an earlier version of the current paper \citep{KW}. While non-linear designer utility introduces some new complications, it also reinforces the main message of the current paper, in that if the designer's utility is S-shaped in his party's seat share (as in the case with a premium on winning a majority), then pack-opponents-and-pool is strictly optimal even with linear swing and uniform aggregate shocks (whereas a designer with linear utility is indifferent among all districting plans in this case).

Second, we have assumed that all voters always vote, or at least always vote at the same rate (as is equivalent). It would be interesting to incorporate heterogeneous turnout in the analysis. A recent contribution by \citet{Bouton} considers voters with a binary partisan type (as in \citealt{OG}) and a continuous ``turnout type,'' which captures fixed turnout heterogeneity across voters. An alternative model, which captures variable turnout heterogeneity, would retain one-dimensional voter types but assume that voters abstain when they are close to indifferent between the parties. It would be interesting to compare these models, as in practice turnout heterogeneity has both fixed sources (e.g., education, race) and variable ones (e.g., almost-indifferent voters turn out less).

Third, a robust prediction of our analysis is that there should be greater within-district polarization in districts that are more favorable for the designer's party. It would be interesting to test this prediction empirically.

Further questions include, what does the model imply for political competition and the resulting policy choices? What are the model's comparative statics---for example, what factors determine the proportion of packed and cracked districts?\footnote{\citet{KW} analyze comparative statics with binary voter types.} What does the model imply about how gerrymandering should be measured and regulated? A better understanding of the form of optimal partisan gerrymandering can contribute to the study of these questions and related ones.

\begingroup
\setstretch{1.0}

\bibliographystyle{econometrica}
\bibliography{persuasionlit}
\endgroup

\appendix

\section*{\Large{Appendix: Proofs}}

Given the equivalence between our model and a class of Bayesian persuasion problems described in Section \ref{s:model}, Propositions \ref{packandcrack}, \ref{nounc}, \ref{p:linearW}, and \ref{p:sdd} follow from prior results in the persuasion literature. For these results, we give references to the literature as well as (mostly) self-contained proofs, for completeness. In contrast, Propositions \ref{nohet}, \ref{p:swingy}, and \ref{p:pp}--\ref{p:gamma} are new to both the persuasion and gerrymandering literatures. We give complete proofs of these results.

\section{Proofs for Section \ref{s:benchmark}} \label{a:A}

\begin{proof}[Proof of Proposition \ref{packandcrack}] This result is standard (see, e.g., Figure 1 in \citealt{OG}). Case (1) is trivial, as the designer wins all districts if he creates measure $1$ of districts satisfying $\Pr_P(s\geq r^0)\geq 1/2$ and loses a positive measure of districts otherwise. For case (2), note that since the designer wins a district $P$ iff $\Pr_P(s\geq r^0)\geq 1/2$, a districting plan can be described by a distribution $H$ over $x=\Pr_P(s\geq r^0)$.  The designer's utility for any feasible $H$ is
\begin{equation}\label{e:pc2}
\int \1 \left\{x\geq \tfrac 12\right \}dH(x)\leq \int 2xdH(x)=2m,
\end{equation}
where the inequality holds because $\1 \{x\geq 1/2\}\leq 2x$ for all $x\in [0,1]$, and the equality holds because $\int xdH(x)=m$ for any feasible $H$, by the law of iterated expectations. Thus, any plan that creates measure $2m$ of cracked districts satisfying $\Pr_P(s\geq r^0)=1/2$ and measure $1-2m$ of packed districts satisfying $\Pr_P(s<r^0)=1$ is optimal. Moreover, any other plan creates a positive measure of districts with $\Pr_P(s\geq r^0)\notin \{0,1/2\}$ (i.e., $\supp (H)\nsubseteq \{0,1/2\}$), so that the inequality in \eqref{e:pc2} is strict, because $\1 \{x\geq 1/2\}= 2x$ iff $x\in \{0,1/2\}$. So any such plan is suboptimal.
\end{proof}

\begin{proof}[Proof of Proposition \ref{nounc}] 
The proposition can be obtained using the proofs of Lemmas 1 and C1 in \citet{Kolotilin2014b}. Case (1) is trivial, as the designer wins all districts if he creates measure $1$ of districts satisfying $\int v(s,r^0)dP(s)\geq 1/2$ and loses a positive measure of districts otherwise. For case (2), note that since $v(s,r^0)$ is differentiable and strictly increasing in $s$, we can redefine $s$ as $v(s,r^0)$, so that the redefined $s$ has a full-support density on  $[\ul s,\ol s]$, with $0\leq \ul s<\ol s\leq 1$. Assume that $\ol s>1/2$, as otherwise the result is trivial. Since $\E_F [s]< 1/2$, there is a unique $s^*\in (\ul s,\ol s)$ satisfying $\E_F[s|s\geq s^*]=1/2$. Define
\[
\ol U(x)= 
\begin{cases}
0, & x< s^*,\\
\frac{x-s^*}{1-2s^*}, &x\geq s^*.
\end{cases}
\]
Since the designer wins a district $P$ iff $\E_P [s]\geq 1/2$, his expected seat share under a plan $\mathcal H$ is
\begin{equation}\label{e:nounc}
\begin{gathered}
\int \1 \left\{\E_P[s]\geq \tfrac 12\right \}d\mathcal H(P)\leq \int \ol U(\E_P[s])d\mathcal H(P)\leq\iint \ol U(s)dP(s)d\mathcal H(P)\\
= \int \ol U(s)d F(s)= \int_{s^*}^{\ol s}\frac{s-s^*}{1-2s^*}dF(s)= 1-F(s^*),
\end{gathered}
\end{equation}
where the first inequality holds because $\1 \{x\geq 1/2\}\leq \ol U(x)$ for all $x$, the second inequality holds because $\ol U$ is convex, the first equality holds because $\int P dH(P)=F$, and the last equality holds because $\E_F[s|s\geq s^*]=1/2$. Thus, a plan $\mathcal H$ is optimal iff for all $P\in \supp(\mathcal H)$ we have: (i) $\E_P[s]\leq s^*$ or $\E_P[s]=1/2$ (as otherwise the first inequality in \eqref{e:nounc} is strict), and (ii) $\supp (P)\subset [\ul s,s^*]$ if $\E_P[s]< s^*$ and $\supp (P)\subset [s^*,\ol s]$ if $\E_P[s]=1/2$  (as otherwise the second inequality in \eqref{e:nounc} is strict). This means that $\mathcal H$ contains measure $F(s^*)$ of districts $P$ where $\Pr_P(s<s^*)=1$ and measure  $1-F(s^*)$ of districts $P$ where $\Pr_P (s\geq s^*)=1$ and $\E_P[s]=1/2.$
\end{proof}

\begin{proof}[Proof of Proposition \ref{nohet}]
For a districting plan $\mathcal H$, define $H$ as $H(r) = \Pr_{\mathcal H}(r^*(P)\leq r)$ for all $r$. The designer thus wins measure $1-H(r_-)$ of districts when the realized aggregate shock is $r$. For each realization $r$, the designer wins a district $P$ iff it contains at least measure $1/2$ voters with types $s\geq r$ (i.e., $\Pr_P (s\geq r)\geq 1/2$). Since the population has measure $1-F(r)$ voters with types $s\geq r$, the designer wins at most measure $2(1-F(r))$ districts, so $1-H(r_-)\leq 2(1-F(r))$. Since the designer can win at most measure $1$ districts, any feasible $H$ satisfies $H(r_-)\geq H^* (r)$, where
\[
H^*(r)=
\begin{cases}
0, &\text{if $r\leq s^m$},\\
1-2(1-F(r)), &\text{if $r> s^m$}.
\end{cases}
\]
Thus, the designer's expected seat share for any feasible $H$  is
\begin{equation*}
\int \left(1-H(r_-)\right)dG(r) \leq \int \left(1-H^*(r)\right)dG(r),
\end{equation*}
with strict inequality if $H(r_-)> H^*(r)$ for some $r$ (and thus on some interval $(r,r')$ with $r'>r$, by continuity of $H^*$ and monotonicity of $H$), because $G(r)$ is strictly increasing in $r$. Thus, a districting plan $\mathcal H$ is optimal iff it induces $H^*$, which means that $\mathcal H-$almost every district $P$ that the designer wins iff the aggregate shock is at most $r$ satisfies $\Pr_P(s=r)=\Pr_P(s< s^m)=1/2$.	
\end{proof}

\begin{proof}[Proof of Proposition \ref{p:linearW}]
The proposition follows from Theorem 1 in \citet{KMZ}. The proof is similar to the proof of Proposition \ref{nounc}. 
The most interesting case is where there is an interior cutoff $s^*$ and pool mean $x^*=\E_F[s\geq s^*]$ satisfying $u(x^*)(x^* - s^*)=U(x^*)-U(s^*)$. As follows from Figure \ref{f:sp}, such $s^*$ is unique. Define
\[
\ol U(x)= 
\begin{cases}
U(x), & x< s^*,\\
U(x^*)+u(x^*) (x-x^*), &x\geq s^*.
\end{cases}
\]
The designer's expected seat share under a plan $\mathcal H$ is
\begin{equation}\label{e:linearW}
\begin{gathered}
\int U(\E_P[s]) d\mathcal H(P)\leq \int \ol U(\E_P[s])d\mathcal H(P)\leq\iint \ol U(s)dP(s)d\mathcal H(P)\\
= \int \ol U(s)d F(s)=\int_{0}^{s^*}U(x)dF(x)+U(x^*)(1-F(s^*)),
\end{gathered}
\end{equation}
where the first inequality holds by $U\leq \ol U$, the second inequality holds by convexity of $\ol U$, the first equality holds by $\int P dH(P)=F$, and the second equality holds by the definition of $s^*$, $x^*$, and $\ol U$. Thus, a plan $\mathcal H$ is optimal iff for all $P\in \supp (\mathcal H)$ we have: (i) $\E_P[s]\leq s^*$ or $\E_P[s]=x^*$ (as otherwise the first inequality in \eqref{e:linearW} is strict), and (ii) $P=\delta_{\E_P[s]}$ if $\E_P[s]< s^*$ and $\supp (P)\subset [s^*,\ol s]$ if $\E_P[s]=x^*$ (as otherwise the second inequality in \eqref{e:linearW} is strict). This implies that the distribution of district means induced by pack-opponents-and-pool districting with cutoff $s^*$ is uniquely optimal.
\end{proof}

\section{Proofs for Section \ref{s:nonlinear}} \label{a:B}
We start with a lemma that distills some key results from \citet{KCW}.
\begin{lemma}\label{l:dual}
There exists a bounded, measurable function $\lambda : \R\rightarrow \R $ such that, for any optimal districting plan $\mathcal H$, the following hold:
\begin{enumerate}
	\item For all $P,P'\in \supp (\mathcal H)$ and all $s\in \supp (P)$, we have
	\[
	G(r^*(P))+\lambda (r^*(P)) \left(v(s,r^*(P))-\tfrac 12\right)\geq G(r^*(P'))+\lambda (r^*(P')) \left(v(s,r^*(P'))-\tfrac 12\right).
	\]
	\item For all $P\in \supp (\mathcal H)$, we have
	\[
	\lambda  (r^*(P)) = - \frac{g(r^*(P))}{{\int} \frac{\partial v(s,r^*(P))}{\partial r}d P(s)}.
	\]
	\item For any non-degenerate $P\in \supp (\mathcal H)$, $\lambda$ has a derivative $\lambda '(r^*(P))$ at $r^*(P)$ satisfying, for all $s\in \supp (P)$,
	\[
	g(r^*(P))+\lambda (r^*(P)) \frac{\partial v(s,r^*(P))}{\partial r} +\lambda' (r^*(P)) \left ( v(s,r^*(P))-\tfrac 12 \right)=0.
	\]
\end{enumerate} 
\end{lemma}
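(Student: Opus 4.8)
The plan is to treat the designer's problem as an infinite-dimensional linear program and to read conditions (1)--(3) off its Lagrangian dual, importing from \citet{KCW} the existence and regularity of the multiplier. Concretely, I would first pass to the ``state-allocation'' formulation: rather than a distribution $\mathcal H$ over posteriors, describe a plan by the conditional distribution $\pi(\cdot\mid s)$ over district thresholds $r$ assigned to each type $s$. Bayes plausibility $\int P\,d\mathcal H(P)=F$ then becomes the requirement that, for each induced threshold $r$, the associated posterior $P_r$ satisfy the receiver-optimality condition $\int\left(v(s,r)-\tfrac12\right)dP_r(s)=0$, i.e. $r=r^*(P_r)$; and the objective $\int G(r^*(P))\,d\mathcal H(P)$ rewrites as $\iint G(r)\,\pi(r\mid s)\,dF(s)$.

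Next I would dualize the receiver-optimality constraints with a multiplier $\lambda(r)$, one per threshold. The Lagrangian then separates across types: type $s$ contributes $\int \left(G(r)+\lambda(r)\left(v(s,r)-\tfrac12\right)\right)\pi(r\mid s)\,dr$, so its inner maximization over $\pi(\cdot\mid s)$ is solved by placing all mass on thresholds that maximize $\Phi_s(r):=G(r)+\lambda(r)\left(v(s,r)-\tfrac12\right)$. Hence in any optimal plan each type $s\in\supp(P)$ is assigned to a threshold $r^*(P)$ maximizing $\Phi_s$ over the thresholds actually used, which is exactly statement (1). The existence of a bounded, measurable $\lambda$ with no duality gap---so that complementary slackness holds between \emph{any} primal optimum and this one fixed dual optimum---is the content I would import from \citet{KCW}, and it is precisely what makes the same $\lambda$ work for every optimal $\mathcal H$.

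Conditions (2) and (3) I would then obtain from (1) together with the envelope reading of $\lambda$. Condition (2) identifies $\lambda(r^*(P))$ as the shadow price of the receiver-optimality constraint: differentiating $\int v(s,r)\,dP(s)=\tfrac12$ via the implicit function theorem gives the response of $r^*(P)$ to a posterior perturbation, and the sender values such a shift at rate $g(r^*(P))$, yielding $\lambda(r^*(P))=-g(r^*(P))/\int\partial_r v(s,r^*(P))\,dP(s)$; this holds for every $P$, degenerate or not, and is the computation already used in the sketch of Proposition \ref{p:segNAD}. Condition (3) is the first-order condition for the interior maximization in (1): at a non-degenerate $P$ the maximizer $r^*(P)$ of $\Phi_s$ is interior for each $s\in\supp(P)$, so once $\lambda$ is differentiable there, $\Phi_s'(r^*(P))=0$ gives (3). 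As a consistency check, integrating (3) against $dP(s)$ and using $\int\left(v(s,r^*(P))-\tfrac12\right)dP(s)=0$ annihilates the $\lambda'$ term and recovers (2).

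The main obstacle is the regularity of $\lambda$, not the algebra: a priori $\lambda$ is merely bounded and measurable, yet (3) asserts differentiability at the thresholds of non-degenerate districts. The structural fact forcing this is that a non-degenerate $P$ contains two distinct types $s<s''$, each of which must attain an interior maximum of $\Phi_s$ at the common threshold $r=r^*(P)$; these two support conditions, combined with the smoothness of $v$, $G$, and $g$, pin down both $\lambda(r)$ and $\lambda'(r)$ as the solution of a $2\times 2$ linear system and deliver the derivative. Establishing this carefully, and ruling out a duality gap in the first place, is where I would lean most heavily on the apparatus of \citet{KCW}; the reformulation and the two short derivations above are the routine remainder.
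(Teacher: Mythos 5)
Your proposal is correct and follows essentially the same route as the paper: the paper's proof likewise passes to the induced joint distribution of $(s,r)$, imports from \citet{KCW} (their Appendix B and Theorem 7) the existence of the bounded, measurable multiplier $\lambda$, the complementary-slackness property, and the differentiability of $\lambda$ at thresholds of non-degenerate districts, and then reads off part 1 as pointwise maximization of $G(r)+\lambda(r)\left(v(s,r)-\tfrac12\right)$, part 3 as the corresponding first-order condition, and part 2 by integrating part 3 against $P$ (non-degenerate case) or by the degenerate-district first-order condition. Your ``consistency check'' of integrating (3) over $dP(s)$ to recover (2) is in fact exactly how the paper establishes part 2 for non-degenerate districts, so the only difference is presentational.
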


Intuitively, $\lambda(r^*(P))$ is the multiplier on the constraint $\int v(s,r^*(P))dP=\frac{1}{2}$. Part 2 of the lemma says that $\lambda(r^*(P))$ equals the product of the designer's marginal utility of increasing $r^*(P)$ (which equals $g(r^*(P))$) and the rate at which $r^*(P)$ increases as the constraint $\int v(s,r^*(P))dP=\frac{1}{2}$ is relaxed (which equals $-1/{\int} \frac{\partial v(s,r^*(P))}{\partial r}d P(s)$ by the implicit function theorem). Part 1 of the lemma says that the designer assigns a type-$s$ voter to a district $P$ so as to maximize $G(r^*(P))+\lambda (r^*(P)) \left(v(s,r^*(P))-\tfrac 12\right)$. Part 3 says that the first-order condition of this maximization problem with respect to $r$ holds for all non-degenerate $P \in \supp(\mathcal{H})$ and all $s \in \supp(P)$.

\begin{proof}
Any districting plan $\mathcal H$ induces a joint distribution $\pi_{\mathcal H}$ of voter type $s$ and the threshold aggregate shock $r$ below which the designer wins a district containing voter type $s$. Specifically, denoting $\ul r=r^*(\delta_{\ul s})$ and $\ol r=r^*(\delta_{\ol s})$, $\mathcal H$ induces $\pi_{\mathcal H}$ given by
\[\pi_{\mathcal H}(S,R):= \int_{P:r^*(P)\in R} P(S) d\mathcal H(P)\quad \text{for all measurable $S\subset [\ul s,\ol s]$ and $R\subset [\ul r, \ol r]$}.\]
Appendix B in \citet{KCW} constructs a suitable bounded, measurable function $\lambda : [\ul r, \ol r]\rightarrow \R $, and defines the set $\Gamma$ as
\[
\Gamma: = \{(s,r)\in [\ul s,\ol s]\times [\ul r, \ol r]: \sup_{\tilde r\in [\ul r,\ol r]} \{G(\tilde r)+\lambda(\tilde r)\left(v(s,\tilde r)-\tfrac 12\right)\}=G(r)+\lambda( r)\left(v(s, r)-\tfrac 12\right)\}.
\]
Moreover, they define 
\begin{align*}
R_\Gamma &:=\{r\in [\ul r,\ol r]:(s,r)\in \Gamma\quad \text{ for some }s\in [\ul s,\ol s]\},\\
\Gamma_r &:=\{s\in [\ul s,\ol s]:(s,r)\in \Gamma\}\quad \text{for all $r\in [\ul r, \ol r]$}.
\end{align*}
Part 1 of their Theorem 7 shows that the set $\Gamma$ is compact and satisfies 
\begin{equation}\label{e:Gamma}
\min \Gamma_r\leq s^*(r)\leq \max  \Gamma_r \quad  \text{for all $r\in R_{\Gamma}$},	
\end{equation} 
where $s^*(r)$ is defined by $v(s^*(r),r)=1/2$. Moreover, the same result shows that 
\begin{equation}\label{e:suppGamma}
\supp(\pi_{\mathcal H})\subset \Gamma\quad  \text{for each optimal $\mathcal H$}.	
\end{equation} 

Furthermore, \citeauthor{KCW} define the set $\Gamma^*\subset \Gamma$ such that
\[
\Gamma^*_{r}=
\begin{cases}
	\{s^*(r)\}, &r\in R_\Gamma\text{ and }s^*(r) \in \{\min \Gamma_r, \max \Gamma_r\},\\
	\Gamma_r, &\text{otherwise},
\end{cases}
\quad \text{for all $r\in [\ul r, \ol r]$}.
\]
Part 2 of their Theorem 7 shows that, if $\Gamma^*_r=\{s^*(r)\}$, then
\begin{equation}\label{e:min=max}
g(r)+\lambda(r)\frac{\partial v(s^*(r),r)}{\partial r}=0,	
\end{equation}
and if $\min \Gamma^*_r<s^*(r)< \max \Gamma^*_r$, then $\lambda$ has a derivative $\lambda'(r)$ at $r$ satisfying, for all $s\in \Gamma^*_r$,
\begin{equation}\label{e:min<max}
g(r)+\lambda(r)\frac{\partial v(s,r)}{\partial r}+\lambda'(r) \left(v(s,r)-\tfrac 12\right)=0.	
\end{equation}

Now, consider any optimal $\mathcal H$. By \eqref{e:suppGamma}, we have $\supp(P)\subset \Gamma_{r^*(P)}$ for all $P\in \supp(\mathcal H$). By the definition of $r^*(P)$, we have $\int v(s,r^*(P))dP(s)=1/2$, so either $\supp(P)=\{s^*(r^*(P))\}$ or $\min \supp (P)<s^*(r^*(P))<\max \supp (P)$. In both cases, we have $\supp (P)\subset \Gamma^*_{r^*(P)}$, by \eqref{e:Gamma} and the definition of $\Gamma^*_{r^*(P)}$. Thus, part 1 of the lemma follows from the definition of $\Gamma$. In turn, part 2 follows from \eqref{e:min=max} when $P$ is degenerate and from integrating \eqref{e:min<max} over $P$ when $P$ is non-degenerate. Finally, part 3 follows from \eqref{e:min<max}.
\end{proof}

\begin{proof}[Proof of Proposition \ref{p:swingy}]
Part 1 follows from \eqref{swing} and $v(s,r)=Q(s-r)$. For part 2, notice that \eqref{swing} is equivalent to
\[
\frac{\partial^3 v(s,r)}{\partial s^2\partial r} \frac {\partial v(s,r)}{\partial s}> \frac{\partial ^2 v(s,r)}{\partial s \partial r} \frac{\partial^2 v(s,r)}{\partial s^2}  \quad \text{for all $s$, $r$}.
\]
Thus, letting subscripts denote partial derivatives, $v_{sr}(s,r)=0$ implies $v_{ssr}(s,r)>0$, so $v_{sr}(s,r)=0$ implies $v_{sr}(s',r)>0$ for all $s'>s$, showing that $v_{sr}(s,r)$ satisfies strict single crossing in $s$, and hence $v_r(s,r)$ is strictly quasi-convex in $s$.
\end{proof}

\begin{proof}[Proof of Proposition \ref{p:sdd}] 
The proposition follows from Theorem 3 in \citet{KCW} for the state-independent sender case, where $V(a,\theta)=V(a)$. We illustrate the proof in the case where $\supp(F)$ and $\supp (\mathcal H)$ are finite. The general proof has the same logic but involves additional technicalities, which can be handled using Lemma \ref{l:dual}. The proof rests on two lemmas.
\begin{lemma}\label{L1}
For any optimal $\mathcal H$ (with finite support), there do not exist $P,P'\in \supp (\mathcal H)$ such that $P$ contains types $s<s''$, $P'$ contains a type $s'\in (s,s'')$, and $r^*(P)<r^*(P').$
\end{lemma}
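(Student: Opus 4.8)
The plan is to argue by contradiction, using the dual multiplier $\lambda$ supplied by Lemma \ref{l:dual} together with Assumption 1. Suppose such $P,P'$ exist, and abbreviate $\rho:=r^*(P)$ and $\rho':=r^*(P')$, so that $s,s''\in\supp(P)$, $s'\in\supp(P')$ with $s<s'<s''$ and $\rho<\rho'$. Part 1 of Lemma \ref{l:dual} says that each voter type is assigned to a district whose threshold $r$ maximizes $G(r)+\lambda(r)\bigl(v(\cdot,r)-\tfrac12\bigr)$. Applying this to $s\in\supp(P)$ and to $s''\in\supp(P)$ (both of which weakly prefer $\rho$ to $\rho'$), and to $s'\in\supp(P')$ (which weakly prefers $\rho'$ to $\rho$), yields three inequalities. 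Writing $\Delta:=G(\rho')-G(\rho)$ and
\[ \phi(x):=\lambda(\rho)\bigl(v(x,\rho)-\tfrac12\bigr)-\lambda(\rho')\bigl(v(x,\rho')-\tfrac12\bigr), \]
these three inequalities are exactly $\phi(s)\geq\Delta$, $\phi(s'')\geq\Delta$, and $\phi(s')\leq\Delta$. Hence it suffices to show that $\phi$ cannot dip below $\Delta$ at the intermediate type $s'$ while lying weakly above $\Delta$ at both endpoints $s$ and $s''$; that is, it suffices to show that $\phi$ is strictly single-peaked (strictly quasi-concave) in $x$.

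To establish strict single-peakedness, I would first record that $\lambda(\rho),\lambda(\rho')>0$. This follows from Part 2 of Lemma \ref{l:dual}, since $g>0$ and $\partial v/\partial r<0$ make the denominator $\int(\partial v/\partial r)\,dP$ negative. Differentiating $\phi$ and factoring out $v_s(x,\rho):=\partial v(x,\rho)/\partial x>0$ gives
\[ \phi'(x)=v_s(x,\rho)\Bigl[\lambda(\rho)-\lambda(\rho')\,k(x)\Bigr],\qquad k(x):=\frac{v_s(x,\rho')}{v_s(x,\rho)}. \]
By Assumption 1, $\ln v_s$ has strictly increasing differences in $(s,r)$, so for $\rho<\rho'$ the ratio $k(x)$ is strictly increasing in $x$. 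Since $\lambda(\rho')>0$, the bracket $\lambda(\rho)-\lambda(\rho')k(x)$ is then strictly decreasing in $x$, so $\phi'$ is positive and then negative with at most one sign change. Hence $\phi$ is strictly single-peaked, which gives $\phi(s')>\min\{\phi(s),\phi(s'')\}\geq\Delta$, contradicting $\phi(s')\leq\Delta$.

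The main obstacle is this single-peakedness step, and it has two parts that must be handled carefully. First, one must pin down the sign of $\lambda$: the argument hinges on $\lambda(\rho')>0$, since only then does the monotonicity of $k$ (from Assumption 1) translate into monotonicity of the bracket and thus into single-peakedness of $\phi$; this is precisely where Part 2 of Lemma \ref{l:dual} enters. Second, one must use the \emph{strict} form of Assumption 1 to obtain \emph{strict} single-peakedness, because only the strict inequality $\phi(s')>\min\{\phi(s),\phi(s'')\}$ contradicts $\phi(s')\leq\Delta$; with weak swinginess all three relations would be mutually consistent, holding with equality. A minor point to verify is that $\phi$ is smooth in $x$, which holds because $v$ is four-times differentiable and $\lambda(\rho),\lambda(\rho')$ are constants. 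In the finite-support case one may invoke Lemma \ref{l:dual} verbatim; the only role of finiteness is to render the ``argmax over districts'' in Part 1 a maximum over a finite set.
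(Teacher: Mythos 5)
Your proof is correct, but it takes a genuinely different route from the paper's. The paper proves Lemma \ref{L1} by a \emph{primal perturbation}: it shifts mass $(v(s'',r)-v(s',r))\varepsilon$ of type-$s$ voters and mass $(v(s',r)-v(s,r))\varepsilon$ of type-$s''$ voters from $P$ to $P'$, and the compensating mass of type-$s'$ voters from $P'$ to $P$, chosen so that $\int v(\cdot,r^*(P))\,dP$ is unchanged; Assumption 1, rewritten as a double-integral inequality, then shows that $\int v(\cdot,r^*(P'))\,dP'$ strictly increases, so $r^*(P')$ strictly rises and the original plan was not optimal. You instead argue in the \emph{dual}: the three complementary-slackness inequalities from Part 1 of Lemma \ref{l:dual}, namely $\phi(s)\geq\Delta$, $\phi(s'')\geq\Delta$, and $\phi(s')\leq\Delta$, are incompatible with strict single-peakedness of $\phi$, which you correctly derive from $\lambda(\rho')>0$ (Part 2 of Lemma \ref{l:dual}, using $g>0$ and $\partial v/\partial r<0$) together with strict monotonicity of the ratio $k(x)=v_s(x,\rho')/v_s(x,\rho)$ (Assumption 1 is exactly strict log-supermodularity of $v_s$). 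Your handling of strictness is also right: weak swinginess would let all three inequalities hold with equality. As for what each approach buys: the paper's perturbation is elementary and self-contained---it needs no duality machinery, and in the discrete illustration where the paper situates this lemma (both $\supp(F)$ and $\supp(\mathcal H)$ finite, so the types $s,s',s''$ carry atoms that can literally be moved) every step is available by hand. Your dual argument is shorter once Lemma \ref{l:dual} is granted, and it never uses atoms---only that $s,s''\in\supp(P)$ and $s'\in\supp(P')$---which is precisely why it is the route the paper itself gestures at for the general continuum case (``additional technicalities, which can be handled using Lemma \ref{l:dual}''). The one soft spot is your closing claim that Lemma \ref{l:dual} applies ``verbatim'' here: the paper proves that lemma by importing Theorem 7 of \citet{KCW} under the maintained smoothness assumptions (in particular $F$ with strictly positive density), so if Lemma \ref{L1} is read in its native finite-type setting, the existence of the multiplier $\lambda$ is something you would still owe an argument for (finite-dimensional linear-programming duality would deliver it, but that is an extra step, not a triviality); the paper's perturbation proof avoids this dependence entirely.
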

\begin{proof}
Suppose for contradiction that such districts $P$ and $P'$ exist, and denote $r^*(P)=r$ and $r^*(P')=r'$, with $r<r'$. Consider a perturbation that shifts mass $\rho=(v(s'',r)-v(s',r))\varepsilon $ of type-$s$ voters and mass $\rho''=(v(s',r)-v(s,r))\varepsilon$ of type-$s''$ voters from  $P$ to  $P'$, and shifts an equal mass $\rho'=\rho+\rho''=(v(s'',r)-v(s,r))\varepsilon$ of type-$s'$ from $P'$ to $P$, for a sufficiently small $\varepsilon>0$. Since $v(s,r)$ is strictly increasing in $s$, these masses are strictly positive and thus this perturbation is well-defined. Since the perturbation does not change the mass of voters in $P$ and $P'$, to show that it strictly increases the designer's expected seat share, it suffices to show that $r^*(P)$ does not change and $r^*(P')$ strictly increases. First, $r^*(P)$ does not change because $\int v(s,r)dP(s)$ does not change, as
\[
-v(s,r)\rho + v(s',r)\rho'  -v(s'',r)\rho''=0.
\]
Second,  $r^*(P')$ strictly increases because $\int v(s,r')dP'(s)$ strictly increases, as
\begin{gather*}
v(s,r')\rho -v(s',r')\rho' +v(s'',r')\rho'' \\
=[(v(s'',r')-v(s',r'))(v(s',r)-v(s,r))-(v(s'',r)-v(s',r))(v(s',r')-v(s,r'))]\varepsilon\\
=\left[\int_{s'}^{s''}\int_{s}^{s'}\frac{\partial v(\tilde s',r')}{\partial s}\frac{\partial v(\tilde s,r)}{\partial s} d \tilde s d\tilde s' - \int_{s'}^{s''}\int_{s}^{s'}\frac{\partial v(\tilde s',r)}{\partial s}\frac{\partial v(\tilde s,r')}{\partial s} d \tilde s d\tilde s' \right]\varepsilon\\
=\left[\int_{s'}^{s''}\int_{s}^{s'}\left(\frac{\partial v(\tilde s',r')}{\partial s}\frac{\partial v(\tilde s,r)}{\partial s}-\frac{\partial v(\tilde s',r)}{\partial s}\frac{\partial v(\tilde s,r')}{\partial s}\right)d \tilde s d\tilde s'\right]\varepsilon>0,
\end{gather*}
where the inequality holds because the integrand is strictly positive for $r<r'$ and $\tilde s < \tilde s'$ by Assumption 1.
\end{proof}
\begin{lemma}\label{L2}
For any optimal $\mathcal H$ (with finite support) and any $P\in \supp (\mathcal H)$, we have $|\supp(P)|\leq 2$.
\end{lemma}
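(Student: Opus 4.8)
The plan is to argue by contradiction through the duality characterization in Lemma \ref{l:dual}, which holds for \emph{any} optimal $\mathcal H$; the same argument then disposes of both the finite illustration and the general case at once. Suppose some $P\in\supp(\mathcal H)$ contains three distinct types $s_1<s_2<s_3$, and write $r=r^*(P)$. Such a $P$ is non-degenerate, so part 3 of Lemma \ref{l:dual} applies to each $s_i\in\supp(P)$ and yields
\[
g(r)+\lambda(r)\,\frac{\partial v(s_i,r)}{\partial r}+\lambda'(r)\left(v(s_i,r)-\tfrac12\right)=0,\qquad i=1,2,3 .
\]
These three equations share the constants $g(r),\lambda(r),\lambda'(r)$, so the idea is to read them as a single affine relation and extract a rigidity that Assumption 1 forbids.

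First I would record that $\lambda(r)\neq0$: by part 2 of Lemma \ref{l:dual}, $\lambda(r)=-g(r)\big/\int \tfrac{\partial v(s,r)}{\partial r}\,dP(s)$, which is nonzero since $g>0$ and $\partial v/\partial r<0$. Writing $w_i=v(s_i,r)$ and $y_i=\tfrac{\partial v(s_i,r)}{\partial r}$ and solving each equation for $y_i$ gives $y_i=A+Bw_i$ with $A,B$ independent of $i$ (indeed $B=-\lambda'(r)/\lambda(r)$). Because $v$ is strictly increasing in $s$, the $w_i$ are distinct, so this says the three planar points $\bigl(v(s_i,r),\,\tfrac{\partial v(s_i,r)}{\partial r}\bigr)$, $i=1,2,3$, are collinear.

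The crux is to rule this out using Assumption 1. Parametrize the curve $s\mapsto\bigl(v(s,r),\tfrac{\partial v(s,r)}{\partial r}\bigr)$ by $w=v(s,r)$ (legitimate since $v_s>0$) and set $y=\tfrac{\partial v}{\partial r}$. Letting subscripts denote partial derivatives, a direct computation gives
\[
\frac{d^2 y}{dw^2}=\frac{v_{ssr}\,v_s-v_{sr}\,v_{ss}}{v_s^{\,3}}=\frac{1}{v_s}\,\frac{\partial}{\partial s}\!\left(\frac{v_{sr}}{v_s}\right)=\frac{1}{v_s}\,\frac{\partial^2}{\partial s\,\partial r}\ln\!\left(\frac{\partial v}{\partial s}\right)>0,
\]
where the final inequality is exactly \eqref{swing}. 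Hence $y$ is a strictly convex function of $w$, and the graph of a strictly convex function contains no three collinear points, contradicting the collinearity just derived. Therefore no district contains three distinct types, i.e.\ $|\supp(P)|\le2$.

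I expect the main obstacle to be the curvature translation in the last paragraph---converting the log\-/supermodularity of $v_s$ in \eqref{swing} into strict convexity of $\partial v/\partial r$ viewed as a function of $v$---together with the bookkeeping that certifies $\lambda(r)\neq0$ (so the affine solve is valid) and that the three points are genuinely distinct. For the purely finite\-/support illustration one could instead attempt a direct perturbation: split $P$ into a moderate\-/heavy district and an extreme\-/heavy district whose threshold shocks straddle $r$, so that the bracketing pair $\{s_1,s_3\}$ lands in the weaker district and the intervening type $s_2$ in the stronger one, and then invoke the reallocation of Lemma \ref{L1}. That route, however, forces one to weigh the second\-/order cost of the split against an improvement from Lemma \ref{L1} whose size vanishes as the two thresholds merge; sidestepping this delicacy is precisely why I would rely on Lemma \ref{l:dual}.
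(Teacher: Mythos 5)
Your proof is correct, but it takes a genuinely different route from the one the paper displays for Lemma \ref{L2}. The paper's own proof is an elementary perturbation argument: it splits the offending three-type district $P$ into two identical halves, performs a payoff-neutral finite transfer that makes one half moderate-heavy and the other extreme-heavy while keeping both thresholds equal to $r^*(P)$, and then shows via the implicit function theorem that an infinitesimal swap of the extreme types between the two halves raises the designer's expected seat share at first order---the closing inequality being exactly that $v_{sr}/v_s$ is strictly increasing in $s$ under Assumption 1. You instead invoke the dual characterization of Lemma \ref{l:dual}: the first-order conditions of part 3 at the three types force the points $\bigl(v(s_i,r),\,\partial v(s_i,r)/\partial r\bigr)$ to be collinear, while Assumption 1 makes $\partial v/\partial r$ a strictly convex function of $v(\cdot,r)$ at fixed $r$, and a strictly convex graph admits no three collinear points. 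The mathematical core is the same fact in both proofs---your identity $\tfrac{d^2y}{dw^2}=\tfrac{1}{v_s}\,\partial_s\!\left(v_{sr}/v_s\right)>0$ is precisely the monotonicity of $v_{sr}/v_s$ that the paper's chain of inequalities rests on---but the deployment differs. What your route buys: it nowhere uses finiteness of $\supp(\mathcal H)$ or $\supp(F)$, so it proves the statement for arbitrary optimal plans in one stroke; this is essentially the general argument the paper alludes to when it says the non-finite case "can be handled using Lemma \ref{l:dual}" but never writes out. What it costs: it leans on the full strength of Lemma \ref{l:dual} (imported from the persuasion literature), including differentiability of $\lambda$ at $r^*(P)$, whereas the paper's perturbation is self-contained and shows constructively how the gerrymanderer profits from breaking up a three-type district. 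Two small remarks: your justification of $\lambda(r)\neq 0$ is right in spirit, but note that strict monotonicity of $v$ in $r$ only gives $v_r\le 0$ pointwise; the cleanest statement is that part 2 of Lemma \ref{l:dual} equates $\lambda(r^*(P))$ with $-g(r^*(P))\big/\!\int \! v_r\,dP$, and since $g>0$ this is nonzero whenever the formula is well defined. Also, your closing description of the perturbation route's "delicacy" does not match the paper's actual argument, which avoids any second-order trade-off by making the initial split exactly payoff-neutral before taking the infinitesimal swap.
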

\begin{proof}
Suppose for contradiction that there exists a district $P\in \supp (\mathcal H)$ that contains three types $s<s'<s''$. Denote $r^*(P)=r$. Suppose we split district $P$ into two identical equal-sized districts $P'$ and $P''$. Then consider a perturbation that shifts mass $\rho=(v(s'',r)-v(s',r))\varepsilon $ of type-$s$ voters and mass $\rho''=(v(s',r)-v(s,r))\varepsilon$ of type-$s''$ voters from  $P'$ to  $P''$, and shifts an equal mass $\rho'=\rho+\rho''=(v(s'',r)-v(s,r))\varepsilon$ of type-$s'$ voters from $P''$ to $P'$, for a sufficiently small $\varepsilon>0$. Notice that $r^*(P'')=r^*(P')=r$, because
\[
v(s,r)\rho -v(s',r)\rho' +v(s'',r) \rho'' =0.
\] 
Now consider an additional perturbation that moves an infinitesimal mass $d\rho $ of type-$s$ voters from $P''$ to $P'$ and moves the same mass $d\rho $ of type-$s''$ voters from $P'$ to $P''$. By the implicit function theorem, $r^*(P'')=r+dr''+o(dr'')$ and $r^*(P')=r-dr'+o(dr')$, where
\[
dr''= \frac{(v(s'',r)-v(s,r))}{-\int \frac{\partial v(\tilde s,r)}{\partial r} dP''(\tilde s)}dm\quad\text{and}\quad dr'= -\frac{(v(s'',r)-v(s,r))}{-\int \frac{\partial v(\tilde s,r)}{\partial r} dP'(\tilde s)}dm.
\]
To show that this perturbation strictly increases the designer's expected seat share, it suffices to show that $dr''>dr'$, or equivalently $-\int \frac{\partial v(\tilde s,r)}{\partial r} dP''(\tilde s)<-\int \frac{\partial v(\tilde s,r)}{\partial r} dP'(\tilde s)$. This holds because
\begin{gather*}
-\frac{\partial v(s,r)}{\partial r}\rho +\frac{\partial v(s',r)}{\partial r}\rho ' -\frac{\partial v(s'',r)}{\partial r}\rho ''\\
=\left[-\tfrac{\partial v(s,r)}{\partial r} (v(s'',r)-v(s',r))+\tfrac{\partial v(s',r)}{\partial r}(v(s'',r)-v(s,r)) -\tfrac{\partial v(s'',r)}{\partial r}(v(s',r)-v(s,r))	\right]\varepsilon\\
=\left[\left(\tfrac{\partial v(s',r)}{\partial r}-\tfrac{\partial v(s,r)}{\partial r}\right)(v(s'',r)-v(s',r))-\left(\tfrac{\partial v(s'',r)}{\partial r}-\tfrac{\partial v(s',r)}{\partial r}\right)(v(s',r)-v(s,r))\right]\varepsilon\\
=\left[\int _{s}^{s'}\frac{\partial^2 v(\tilde s,r)}{\partial s\partial r}d\tilde s \int_{s'}^{s''}\frac{\partial v(\tilde s',r)}{\partial s}d\tilde s'-\int _{s'}^{s''}\frac{\partial^2 v(\tilde s',r)}{\partial s\partial r}d\tilde s'\int_{s}^{s'}\frac{\partial v(\tilde s,r)}{\partial s}d\tilde s\right]\varepsilon\\
<\frac{\frac{\partial^2 v(s',r)}{\partial s\partial r}}{\frac{\partial v(s',r)}{\partial s}}\left[\int _{s}^{s'}\frac{\partial v(\tilde s,r)}{\partial s}d\tilde s \int_{s'}^{s''}\frac{\partial v(\tilde s',r)}{\partial s}d\tilde s'-\int _{s'}^{s''}\frac{\partial v(\tilde s',r)}{\partial s}d\tilde s'\int_{s}^{s'}\frac{\partial v(\tilde s,r)}{\partial s}d\tilde s\right]\varepsilon=0,
\end{gather*}
where the inequality follows from Assumption 1, which implies that $\partial\ln (\partial v(s,r)/\partial s)/\partial r$ is strictly increasing in $s$, and thus
\[
\frac{\frac{\partial^2 v(\tilde s,r)}{\partial s\partial r}}{\frac{\partial v(\tilde s,r)}{\partial s}}<\frac{\frac{\partial^2 v(s',r)}{\partial s\partial r}}{\frac{\partial v(s',r)}{\partial s}}<\frac{\frac{\partial^2 v(\tilde s',r)}{\partial s\partial r}}{\frac{\partial v(\tilde s',r)}{\partial s}}\quad \text{for $\tilde s <s'<\tilde s'$}.\qedhere
\]

\end{proof}
By Lemmas \ref{L1} and \ref{L2}, to show that every optimal districting plan $\mathcal H$ (with finite support) is single-dipped, it suffices to show that for any district $P \in \supp(\mathcal H)$ consisting of voter types $s<s''$ and any district $P' \in \supp(\mathcal H)$ containing a voter type $s'\in (s,s'')$, we have $r^*(P)\neq r^*(P')$. But this follows because, if $r^*(P)= r^*(P')$, then merging districts $P$ and $P'$ into one district would also be optimal, but the merged district would contain three voter types, contradicting Lemma \ref{L2}.
\end{proof}

\begin{proof}[Proof of Proposition \ref{p:pp}]
Let $\mathcal H$ be a pack-and-pair districting plan. 
Since $\mathcal H$ is strictly single-dipped, the support of each $P\in \supp (\mathcal H)$ has at most two elements and thus can be represented as $\{s_1(r^*(P)),s_2(r^*(P))\}$ with $s_1(r^*(P))\leq r^*(P)\leq s_2(r^*(P))$. Moreover, for each $P,P'\in \supp(\mathcal H)$ with $r^*(P)<r^*(P')$, we have $s_2(r^*(P))\leq s_2(r^*(P'))$, as otherwise we would have $s_2(r^*(P'))\in (s_1(r^*(P)),s_2(r^*(P)))$ contradicting strict single-dippedness of $\mathcal H.$

Assume that there exists $P$ such that $s_1(r^*(P))<s_2(r^*(P))$, as otherwise the proposition obviously holds with $r^b=\ol s$. Define $r^b=\inf \{r^*(\tilde P):\tilde P\in \supp (\mathcal H),\ s_1(r^*(\tilde P))<s_2(r^*(\tilde P))\}$, so that, for each $P\in\supp (\mathcal H)$ with $r^*(P)<r^b$, we have $\supp (P)=\{r^*(P)\}$. Since $\supp(\mathcal H)$ is compact, there exists $P^b\in \supp(\mathcal H)$ with $r^*(P^b)=r^b.$ It follows that $\supp (P^b)=\{r^b\}$, as otherwise (i.e., if $s_1(r^*(P^b))<r^b<s_2(r^*(P^b))$ voter types in $(r^b,s_2(r^*(P^b))$ (which have strictly positive mass since $f$ is strictly positive on $[\ul s,\ol s]$) cannot be segregated, as this would contradict strict single-dippedness of $\mathcal H$, and also cannot be paired with other types, as this would contradict either strict single-dippedness of $\mathcal H$ or the definition of $r^b$.

Finally, we show that, for each $P,P'\in \supp(\mathcal H)$ with $r^b<r^*(P)<r^*(P')$, we have $s_1(r^*(P))\geq s_1(r^*(P'))$. Suppose by contradiction that $s_1(r^*(P))<s_1(r^*(P'))$. Since $\mathcal H$ is a strictly single-dipped pack-and-pair districting plan, by the definition of $r^b$, we have $s_1(r^*(P))<r^*(P)<s_2(r^*(P))\leq s_1(r^*(P'))<r^*(P')<s_2(r^*(P'))$. Define $r^\dagger=\inf \{r^*(\tilde P):\tilde P\in \supp (\mathcal H),\ s_1(r^*(P'))\leq s_1(r^*(\tilde P))<s_2(r^*(\tilde P))\leq s_2(r^*(P'))\}\geq s_1(r^*(P'))$. By the same argument as in the previous paragraph, we have $\delta_{r^\dagger} \in \supp(\mathcal H)$, contradicting that $\mathcal H$ is pack-and-pair.
\end{proof}

The next lemma restates some results from \citet{KCW}, which we use to prove Propositions \ref{p:segNAD} and \ref{p:nosegNAD}.
\begin{lemma}\label{l:snd} Consider the additive taste shock case where the taste shock density is strictly log-concave and symmetric about 0.
\begin{enumerate}
	\item If for all $s<r<s'$, we have
	\[
	G(r)< \frac{Q(s'-r)-\frac{1}{2}}{Q(s'-r)-Q(s-r)} G(s) +\frac{\frac{1}{2}-Q(s-r)}{Q(s'-r)-Q(s-r)}G(s'),
	\]
	then the unique optimal plan is segregation.
	\item If for all $s<s'$ there exists $r\in (s,s')$ such that
	\[
	G(r)> \frac{Q(s'-r)-\frac{1}{2}}{Q(s'-r)-Q(s-r)} G(s) +\frac{\frac{1}{2}-Q(s-r)}{Q(s'-r)-Q(s-r)}G(s'),
	\]
	then the unique optimal plan is negative assortative.
\end{enumerate}
\end{lemma}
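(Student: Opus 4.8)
The plan is to reduce the global optimization to a family of pairwise comparisons between \emph{pairing} and \emph{segregating} two voter types, which is exactly what the two displayed inequalities encode, and then to rule out the dominated local configuration via an explicit feasible perturbation. The first step is translation into the additive symmetric setting. Since $Q$ is symmetric about $0$, a packed district $\delta_s$ wins exactly when $Q(s-r)\geq\tfrac12$, i.e.\ $r^*(\delta_s)=s$. For a two-type district $P$ with $\supp(P)=\{s,s'\}$, $s<s'$, and threshold $r=r^*(P)$, the defining equation $\int Q(\tilde s-r)\,dP=\tfrac12$ pins down the mass on the low type as $\rho=\big(Q(s'-r)-\tfrac12\big)\big/\big(Q(s'-r)-Q(s-r)\big)$, with $1-\rho=\big(\tfrac12-Q(s-r)\big)\big/\big(Q(s'-r)-Q(s-r)\big)$, and forces $s<r<s'$ (since then $Q(s-r)<\tfrac12<Q(s'-r)$). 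Hence $G(r)$ is the per-unit-mass expected seat contribution from pairing $s$ and $s'$ at threshold $r$, while the right-hand side of each displayed inequality is precisely $\rho G(s)+(1-\rho)G(s')$, the per-unit-mass payoff from instead segregating the same voters into $\delta_s$ and $\delta_{s'}$. So Part~1 asserts that segregating strictly beats pairing for every admissible triple, and Part~2 that for every pair there is a pairing that strictly beats segregating; these are the specializations of the \citet{KCW} conditions.

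For Part~1, strict log-concavity of $q$ yields Assumption~1 by Proposition~\ref{p:swingy}(1), so by Proposition~\ref{p:sdd} every optimal plan is strictly single-dipped and in particular each district contains at most two types. Suppose an optimal plan assigned positive mass to paired districts. I would then \emph{de-pair}: reassign, for each paired district with types $s_1(r)<r<s_2(r)$, its voters to the packed districts $\delta_{s_1(r)}$ and $\delta_{s_2(r)}$. This preserves the marginal constraint $\int P\,d\mathcal H=F$ and changes the designer's payoff by $\int\!\big[\rho(r)G(s_1(r))+(1-\rho(r))G(s_2(r))-G(r)\big]\,d\mathcal H$, which is strictly positive by the Part~1 hypothesis, contradicting optimality. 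Hence every optimal plan is segregation, and strictness of the inequality gives uniqueness.

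For Part~2, I run the reverse argument: reduce to single-dipped plans as above, and show no optimal plan can segregate two distinct types. If $\delta_s,\delta_{s'}\in\supp(\mathcal H)$ with $s<s'$, choose $r\in(s,s')$ satisfying the Part~2 inequality, peel thin type-intervals near $s$ and near $s'$ of total masses $\rho\varepsilon$ and $(1-\rho)\varepsilon$, and combine them into one district of mass $\varepsilon$; its threshold converges to $r$ as the intervals shrink with this fixed mass ratio, so for small slices the per-unit payoff change approaches $G(r)-\rho G(s)-(1-\rho)G(s')>0$, again contradicting optimality. Thus at most one type is segregated, and single-dippedness forces the remaining types into a negatively assortative pairing (the structural argument of Proposition~\ref{p:pp}), i.e.\ NAD. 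I expect the main obstacle to lie entirely in this continuum bookkeeping for Part~2: a single type carries zero mass under the density $f$, so one cannot literally ``peel off type~$s$,'' and the perturbation must be built from shrinking intervals together with a continuity argument ensuring the strict inequality survives the approximation of $r$. Should a direct perturbation prove too fragile, the clean fallback---and the sense in which the lemma ``restates'' \citet{KCW}---is to certify optimality through Lemma~\ref{l:dual}: for the segregation candidate, part~2 of that lemma forces the multiplier $\lambda(r)=g(r)/q(0)$ at degenerate districts, and one checks that part~1's obedience inequality $G(s)\geq G(r)+\lambda(r)\big(Q(s-r)-\tfrac12\big)$ is equivalent to the displayed chord condition, with the dual multiplier constructed along the NAD locus for Part~2. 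Verifying this equivalence between the chord form and the multiplier form, rather than the improvement steps themselves, is where the real work sits.
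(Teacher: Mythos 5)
Your translation step---identifying $r^*(\delta_s)=s$ and the weight $\rho=\bigl(Q(s'-r)-\tfrac12\bigr)/\bigl(Q(s'-r)-Q(s-r)\bigr)$, so that the two displayed inequalities say exactly ``segregating a pair beats pairing it'' and ``some pairing beats segregating''---is precisely the paper's proof; but there the paper stops and invokes Theorems 4 and 6 of \citet{KCW}, which supply the hard direction (that winning all these pairwise comparisons implies global, unique optimality). You instead attempt that direction yourself, so your proposal is a genuinely different, self-contained route. For Part 1 it essentially works: by Proposition \ref{p:swingy}(1) and Proposition \ref{p:sdd}, every optimal plan has at most two types per district, and ``de-pairing'' every paired district strictly raises $\int G(r^*(P))\,d\mathcal H(P)$ while preserving $\int P\,d\mathcal H(P)=F$. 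Two remarks, though: (i) your argument is a contradiction among optimal plans, so you still need \emph{existence} of an optimum (weak* compactness of the feasible set plus continuity of $P\mapsto r^*(P)$) before concluding that segregation \emph{is} optimal rather than merely that nothing else can be; (ii) you can avoid both \ref{p:sdd} and the existence issue by decomposing an arbitrary district $P$ with threshold $r$ into a mixture of two-point districts all with threshold $r$ (pair the parts of $P$ below and above $r$), whereupon the hypothesis shows directly that segregation weakly dominates every feasible plan, strictly unless the plan is segregation.

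Part 2 is where the genuine gaps are. First, the perturbation: as you anticipate, one cannot peel a zero-mass type; the repair is to merge small $\mathcal H$-masses of districts lying in weak* neighborhoods of $\delta_s$ and $\delta_{s'}$ in the ratio $\rho:(1-\rho)$ and use weak* continuity of $r^*$---this works, but it is the districts, not type-intervals, that must be manipulated. Second, and more seriously, ``at most one segregated type $+$ strict single-dippedness $\Rightarrow$ NAD'' is not delivered by Proposition \ref{p:pp}: that result gives monotone pairing functions $s_1,s_2$, but by itself it neither rules out mixed plans (a positive $\mathcal H$-mass of paired districts sharing low types in an $F$-null set, as in Y-districting) nor shows the matching is the negative assortative one. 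You need the extra step that, since $F$ is atomless and each paired district with threshold $r$ puts mass $\rho(r)>0$ on its low type, any such concentration would violate $\int P\,d\mathcal H(P)=F$; hence $s_1,s_2$ are strictly monotone $\mathcal H$-a.e., the plan is a pure matching, and the measure-balance conditions then pin it down uniquely. Third, your fallback is logically backwards: Lemma \ref{l:dual} states \emph{necessary} conditions satisfied by every optimal plan, not a verification theorem, so it cannot ``certify'' optimality of a candidate. Its legitimate use here is in contradiction mode---if an optimal plan paired $s<s'$ at threshold $r$, the obedience inequalities for $s$ and $s'$ against the deviation districts $\delta_s$ and $\delta_{s'}$ combine with weights $\rho,1-\rho$ to force $G(r)\geq \rho G(s)+(1-\rho)G(s')$---but even this requires dominance against thresholds of districts \emph{not} in $\supp(\mathcal H)$, i.e., the stronger $\Gamma$-version inside the lemma's proof, not the lemma as stated. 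The sufficiency direction you would need for a true verification argument is exactly what the paper imports by citing \citet{KCW}; absent that, your proposal stands or falls with completing the perturbation and matching-uniqueness arguments above.
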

\begin{proof}
By the definition of $r^*(P)$, we have 
\[r^*(\rho \delta_s+(1-\rho)\delta_{s'})=r\in (s,s') \iff  \rho=\frac{Q(s'-r)-\frac{1}{2}}{Q(s'-r)-Q(s-r)}\in (0,1).\]
Thus, part 1 says that, for any $s<s'$, the designer prefers to separate any district $P=\rho \delta_{s}+(1-\rho) \delta_{s'}$ into districts $\delta_s$ and $\delta_{s'}$, and part 2 says that, for any $s<s'$, the designer prefers to pool districts $\delta_s$ and $\delta_{s'}$ into some district $P=\rho \delta_{s}+(1-\rho) \delta_{s'}$. Consequently, parts 1 and 2 follow from Theorems 4 and 6 in \citet{KCW}.
\end{proof}

\begin{proof}[Proof of Proposition \ref{p:segNAD}]
For part 1, by Lemma \ref{l:snd}, negative assortative districting is uniquely optimal if for all $s<s'$ there exists $r\in (s,s')$ such that
\[
(G(r)-G(s))\left(Q(s'-r)-\tfrac 12\right) > (G(s')-G(r))\left(\tfrac 12- Q(s-r)\right),
\]
and thus, considering $r\uparrow s'$, if for all $s<s'$, we have
\[
(G(s')-G(s))q(0)>g(s')\left (\tfrac 12 - Q(s-s')\right),
\]
which  holds if $G$ is concave, as shown in the main text.

For part 2, it suffices to show that there exists $c>0$ such that, for all $s\neq r$, we have
\[
\frac{G(s)-G(r)}{g(r)}> \frac{Q(s-r)-\frac 1 2}{q(0)}.
\]
Indeed, this inequality implies that for all $s<r<s'$, we have
\begin{equation}\label{e:segregation}
\frac{G(r)-G(s)}{\frac 1 2-Q(s-r)}< \frac{g(r)}{q(0)}< \frac{G(s')-G(r)}{Q(s'-r)-\frac 1 2},	
\end{equation}
and hence segregation is uniquely optimal by Lemma \ref{l:snd}.

Now, since $g'(r)/g(r)\geq c$ for all $r$, Gronwall's inequality gives $g(s)/g(r)\geq e^{c(s-r)}$ for all $s>r$ and $g(s)/g(r)\leq e^{c(s-r)}$ for all $s<r$. Hence, for all $s,r$, we have
\[
\frac{G(s)-G(r)}{g(r)}=\int_r^s \frac{g(x)}{g(r)}d x\geq \int_r^s e^{c(x-r)} d x=\frac {e^{c(s-r)}-1}{c}.
\]
Thus, it suffices to show that there exists $c>0$ such that, for all $s\neq r$, we have
\[
\frac {e^{c(s-r)}-1}{c}> \frac{Q(s-r)-\frac 1 2}{q(0)}.
\]
Note that both sides have the same values and the same derivatives at $s=r$. Moreover, at $s=r$, the second derivative of the left-hand side, $c>0$, is greater than the second derivative of the right-hand side, ${q'(0)}/{q(0)}=0$. Thus, the inequality holds in some neighborhood $s\in (r-\varepsilon,r)$. Setting $c= q(0)/(1/2-Q(-\varepsilon))>0$ guarantees that the inequality holds for all $s\neq r$. Indeed, for $s\leq  r-\varepsilon$, we have
\[
\frac {e^{c(s-r)}-1}{c}> -\frac{1}{c}=\frac{Q(-\varepsilon)-\frac 12}{q(0)}\geq  \frac{Q(s-r)-\frac 1 2}{q(0)},
\]
where the first inequality holds by $e^{c(s-r)}>0$ and the second holds by monotonicity of $Q$. For $s> r$, we have
\[
\frac {e^{c(s-r)}-1}{c}> s-r> \frac{Q(s-r)-\frac 1 2}{q(0)},
\]
where the first inequality holds by strict convexity of $e^{cx}$ in $x$ and the second holds by strict concavity of $Q$ on $[0,+\infty)$.
\end{proof}

\begin{proof}[Proof of Proposition \ref{p:nosegNAD}]
Since density $q$ is symmetric about 0 and density $f$ is strictly positive on $[\ul s,\ol s]$, we have $\ul s <r^*(F)<\ol s$.
Since $G$ is strictly S-shaped with inflection point $r^*(F)$, it follows that $G$ is concave on $[r^*(F),\ol s]$. Thus, by Proposition \ref{p:segNAD}, negative assortative districting is uniquely optimal for types in $[r^*(F),\ol s]$, showing that segregation cannot be optimal.

Suppose for contradiction that negative assortative districting $\mathcal H$ is optimal. By Proposition \ref{p:pp}, for each $P\in \supp(\mathcal H)$ except for $\delta_{r^b}$, we have $s_1(r^*(P))<r^*(P)<s_2(r^*(P))$, where $s_1$ is decreasing and $s_2$ is increasing. Note that $r^b<r^*(F)$, because
\begin{gather*}
\int Q(s-r^*(F)) dF(s)=0=\iint Q(s-r^*(P))dP(s) d\mathcal H(P)\\
<\iint Q(s-r^b)dP(s) d\mathcal H(P)=\int Q(s-r^b) dF(s),
\end{gather*}
where the first two equalities hold by the definition of $r^*(F)$ and $r^*(P)$, the inequality holds by $r^*(P)>r^b$ for all $P\in \supp(\mathcal H)$ except for $\delta_{r^b}$, and the last equality holds by $\int P d\mathcal H(P)=F$. Since density $f$ is strictly positive on $[\ul s,\ol s]$, by the same argument as in the proof of Proposition \ref{p:pp}, we get $\lim_{r\downarrow r^b} s_1(r)=\lim_{r\downarrow r^b} s_2(r)=r^b$. Thus, for any $\varepsilon>0$, there exists $P\in \supp (\mathcal H)$ such that $r^b-\varepsilon<s_1(r^*(P))<s_2(r^*(P))<r^b+\varepsilon$, and all types in $[s_1(r^*(P)),s_2(r^*(P))]$ are matched between themselves in a negatively assortative manner. For small enough $\varepsilon>0$ and all $s<r<s'$ in $ [s_1(r^*(P)),s_2(r^*(P))]$, we have
\[
\frac{G(s)-G(r)}{g(r)} > \frac{Q(s-r)-\frac 12}{q(0)},
\]
where the inequality holds because both sides have the same values and the same derivatives at $s=r$, while the second derivative of the left-hand side, $g'(r)/g(r)>0$ (recall that $r^b$ is less than inflection point $r^*(F)$ of strictly S-shaped $G$), is greater than the second derivative of the right-hand side, $q'(0)/q(0)=0$. As follows from \eqref{e:segregation} in the proof of Proposition \ref{p:segNAD}, segregation is uniquely optimal for types in $[s_1(r^*(P)),s_2(r^*(P))]$, showing that $\mathcal H$ cannot be optimal.
\end{proof}

\begin{proof}[Proof of Proposition \ref{p:pap}] Suppose for contradiction that there exists an optimal non-pack-and-crack plan $\mathcal H$. By Proposition \ref{p:sdd}, $\mathcal H$ is strictly single-dipped. Consequently, since $\mathcal H$ is not pack-and-crack, there exist $s<r<s'\leq s''$ and $P,P'\in \supp (\mathcal H)$ such that $r^*(P)=r,$ $\supp(P)=\{s,s'\}$, and $\supp (P')=\{s''\}$. By Lemma \ref{l:dual}, condition \eqref{e:pap} holds. Intuitively, \eqref{e:pap} says that the designer prefers not to move a few type-$s$ voters from district $P$ to districts $\delta_{s}$ and $\delta_{s''}$.

We have numerically verified that \eqref{e:pap} holds over the specified range of parameters. The code is available on request.
\end{proof}

\begin{proof}[Proof of Proposition \ref{p:gamma}]
By Lemma \ref{l:dual}, $\lambda$ has a derivative $\lambda'(r)$ at each $r\in (r^b,r^b+\varepsilon]$ satisfying
\begin{align*}
g(r)-\lambda(r)q(s_2(r)-r)+\lambda'(r)\left(Q(s_2(r)-r)-\tfrac 12 \right)=0,\\
g(r)-\lambda(r)q(s_1(r)-r)+\lambda'(r)\left(Q(s_1(r)-r)-\tfrac 12 \right)=0.
\end{align*}
Solving for $\lambda(r)$ and $\lambda'(r)$ yields, for all $r\in (r^b,r^b+\varepsilon]$,
\begin{gather*}
\lambda(r) =\frac{g(r)[Q(s_2(r)-r)-Q(s_1(r)-r)]}{\left(Q(s_2(r)-r)-\frac 12\right) q(s_1(r)-r)-\left(Q(s_1(r)-r)-\frac 12\right)q(s_2(r)-r)},\\
\lambda'(r)=\frac{g(r)[q(s_2(r)-r)-q(s_1(r)-r)]}{\left(Q(s_2(r)-r)-\frac 12\right)q(s_1(r)-r)-\left(Q(s_1(r)-r)-\frac 12\right)q(s_2(r)-r)}.
\end{gather*}
Since $\lambda'$ is the derivative of $\lambda$, we have $d\lambda(r)/dr=\lambda'(r)$ for all $r\in (r^b,r^b+\varepsilon]$.
Taking into account that $s_1$ and $s_2$ are twice differentiable and satisfy $\lim_{r\downarrow r^b} s_1(r)=\lim_{r\downarrow r^b}s_2(r)=r^b$, we can apply L'Hopital's rule to evaluate $d\lambda(r)/dr=\lambda'(r)$ in the limit $r\downarrow r^b$ to obtain
\[
\frac{g'(r^b)q(0)}{(q(0))^2}=\frac{g(r^b)q'(0)}{(q(0))^2},
\]
which implies that $r^b=0$, because $G(r)=Q(\gamma r)$ for all $r$ and $q'(r)=0$ iff $r=0$. Denote $\lim_{r\downarrow r^b} s'_1(r)=1-\beta_1$ and $\lim_{r\downarrow r^b} s'_2(r)=1+\beta_2$, where $\beta_1\geq 1$ (because $s_1$ is decreasing) and $\beta_2\geq 0$ (because $s_2(r)>r)$.
Differentiating $d\lambda(r)/dr=\lambda'(r)$ with respect to $r$ and taking the limit $r\downarrow 0$, we get
\[
\frac{\gamma q''(0)(\gamma^2-\beta_2\beta_1)}{q(0)}=\frac{\gamma q''(0)(\beta_2-\beta_1)}{2q(0)},
\]
and hence
\begin{equation}\label{e:Y1}
2\gamma^2=2\beta_2\beta_1+\beta_2-\beta_1.	
\end{equation}
Since, for small enough $r>0$, type $s_1(r)$ is assigned to both district $\delta_{s_1(r)}$ and district $P$ with $r^*(P)=r$ and $\supp(P)=\{s_1(r),s_2(r)\}$, we must have, by Lemma \ref{l:dual},
\begin{equation*}\label{e:ps1r}
	Q(\gamma s_1(r))=Q(\gamma r) +\lambda(r)\left(Q(s_1(r)-r)-\tfrac 12\right).
\end{equation*}
In the limit $r\downarrow 0$, the values and the derivatives up to order 2 of both sides always coincide, while the third derivatives coincide iff
\begin{gather*}
q''(0)\gamma^3(-\beta_1+1)^3=q''(0)\gamma^3-3q''(0)\gamma^3\beta_1+3q''(0)\gamma\beta_2\beta_1^2-q''(0)\gamma\beta_1^3,
\end{gather*}
which simplifies to
\begin{equation}\label{e:Y2}
-\gamma^2\beta_1+3\gamma^2=3\beta_2-\beta_1.	
\end{equation}
Since, for small enough $r>0$, type $s_1(r)$ is assigned to both district $\delta_{s_1(r)}$ and district $P$ with $r^*(P)=r$, while type $s_2(r)$ is assigned only to district $P$, we have
\[
f(s_1(r))s_1'(r)\left(Q(s_1(r)-r)-\tfrac 12\right) \geq f(s_2(r))s_2'(r)\left(Q(s_2(r)-r)-\tfrac 12\right).
\]
In the limit $r\downarrow 0$, both sides are equal, and hence their derivatives must satisfy
\[
-f(0)q(0)\beta_1 (1-\beta_1)\geq f(0)q(0)\beta_2 (\beta_2+1),
\]
which, given that $\beta_1+\beta_2>0$, simplifies to
\begin{equation}\label{e:Y3}
\beta_1\geq \beta_2+1.	
\end{equation}
Equations \eqref{e:Y1} and \eqref{e:Y2} have two solutions $(\beta_1,\beta_2)=\left({3\gamma^2}/{(2(\gamma^2-1))},{\gamma^2}/{2}\right)$ and $(\beta_1,\beta_2)=\left(1,{(2\gamma^2+1)}/{3}\right)$, unless  $\gamma^2=1$, in which case \eqref{e:Y1} and \eqref{e:Y2} have only one solution $(\beta_1,\beta_2)=(1,1)$. The solution $(\beta_1,\beta_2)=\left(1,{(2\gamma^2+1)}/{3}\right)$ never satisfies \eqref{e:Y3} and thus is discarded. Moreover, for the solution $(\beta_1,\beta_2)=\left({3\gamma^2}/{(2(\gamma^2-1))},{\gamma^2}/{2}\right)$, condition $\beta_1\geq 1$ yields $\gamma>1$, and condition \eqref{e:Y3} yields $\gamma\leq \sqrt{1+\sqrt 3}$. Thus, for Y-districting to be optimal, we must have $\gamma\in (1,\sqrt{1+\sqrt 3} ]$. Finally, the statement in Footnote \ref{f:gamma} holds because
\[
\lim_{r\downarrow 0} s_1'(r)=1-\beta_1=-\frac{(\gamma^2+2)}{2(\gamma^2-1)}<0\quad\text{and}\quad \lim_{r\downarrow 0} s_2'(r)=1+\beta_2=1+\frac{\gamma^2}{2}>0
\]
are both strictly increasing in $\gamma$.
\end{proof}

\end{document}